\newtheorem*{remark*}{Remark}
\newtheorem{example}{Example}
\newtheorem{theorem}{Theorem}
\newtheorem{lemma}[theorem]{Lemma}
\newtheorem{proposition}[theorem]{Proposition}
\newtheorem{corollary}[theorem]{Corollary}
\newtheorem{definition}[theorem]{Definition}
\newcommand{\QQ}{\ensuremath{\mathbb{Q}}}
\newcommand{\NN}{\ensuremath{\mathbb{N}}}
\newcommand{\ZZ}{\ensuremath{\mathbb{Z}}}
\newcommand{\NP}{\ensuremath{\mathrm{NP}}\xspace}
\newcommand{\Ck}{\ensuremath{\mathrm{C}^k}\xspace}
\newcommand{\VCSP}{\ensuremath{\mathrm{VCSP}\xspace}}
\newcommand{\MAXCSP}{\ensuremath{\mathrm{MAXCSP}\xspace}}
\newcommand{\CSP}{\ensuremath{\mathrm{CSP}\xspace}}
\newcommand{\SDP}{\ensuremath{\mathrm{SDP}\xspace}}
\newcommand{\BLP}{\ensuremath{\mathrm{BLP}\xspace}}
\newcommand{\LP}{\ensuremath{\mathrm{LP}\xspace}}
\newcommand{\BIT}{\ensuremath{\mathrm{BIT}\xspace}}
\newcommand{\MAXCUT}{\ensuremath{\mathrm{MAXCUT}\xspace}}
\newcommand{\SAT}{\ensuremath{\mathrm{SAT}\xspace}}
\newcommand{\Las}{\ensuremath{\mathrm{Las}\xspace}}
\newcommand{\conv}{\ensuremath{\mathrm{conv}\xspace}}
\newcommand{\XORSAT}{\ensuremath{\mathrm{XORSAT}}\xspace}
\newcommand{\dunion}{\ \dot\cup\ }
\newcommand{\logic}[1]{\text{\upshape #1}\xspace}
\newcommand{\FPC}{\logic{FPC}}
\newcommand{\ar}{\ensuremath{\mathrm{ar}}}
\newcommand{\logicoperator}[1]{\mathbf{#1}}
\newcommand{\ifpop}{\logicoperator{ifp}}
\renewcommand{\vec}[1]{\bar{#1}}
\newcommand{\tup}[1]{\vec{#1}}
\newcommand{\countingTerm}[1]{\#_{#1}}
\newcommand{\struct}[1]{\ensuremath{\mathbf #1}\xspace}
\newcommand{\univ}[1]{\ensuremath{\dom(\struct #1)}} 
\newcommand{\innerprod}[2]{\langle #1, #2\rangle} 
\newcommand{\powerset}{\wp}
\newcommand{\norm}[1]{\| #1 \|}
\newcommand{\dom}{\mathrm{dom}}
\newcommand{\Opt}{\mathrm{Opt}}
\newcommand{\Val}{\mathrm{Val}}
\newcommand{\vocvec}{\tau_{\text{vec}}}
\newcommand{\vocmat}{\tau_{\text{mat}}}
\newcommand{\LIN}{\mathrm{LIN}}
\newcommand{\defeq}{:=}
\newcommand{\argmax}{\mathrm{argmax}}
\newcommand{\tth}{\ensuremath{t^{\scriptsize\mbox{th}}}}
\begin{document}

\title{Lasserre Lower Bounds and Definability of Semidefinite Programming}
\author{Anuj Dawar and Pengming Wang}
\affil{University of Cambridge Computer Laboratory\\
	\texttt{\{anuj.dawar, pengming.wang\}@cl.cam.ac.uk}}

\maketitle

\begin{abstract}
For a large class of optimization problems, namely those that can be expressed
as finite-valued constraint satisfaction problems (VCSPs), 
we establish a dichotomy on the number of levels of the 
Lasserre hierarchy of semi-definite programs (SDPs) that are 
required to solve the problem exactly.  In particular, we show that if a 
finite-valued constraint problem is not solved exactly by its basic linear 
programming relaxation, it is also not solved exactly by any sub-linear 
number of levels of the Lasserre hierarchy.
The lower bounds are established through
logical undefinability results. 
We show that the linear programming relaxation of the problem, 
as well as the SDP corresponding to any fixed level of the Lasserre hierarchy 
is interpretable in a VCSP instance by means of formulas of
fixed-point logic with counting.  
We also show that the solution of an SDP can be expressed in this logic.  
Together, these results give a way of translating lower bounds on the number of
variables required in counting logic to express a 
VCSP into lower bounds on the number of levels 
required in the Lasserre hierarchy to eliminate the integrality gap.

As a special case, we obtain the same dichotomy for the class of $\MAXCSP$ problems, 
generalizing some earlier Lasserre lower bound results from~\cite{schoenebeck08}.

\end{abstract}

\section{Introduction}\label{sec:intro}

Many natural optimization problems can be expressed as 0--1 integer programming 
problems.  Indeed, since the problem of determining an optimal solution to a 0--1
integer programming problem is $\NP$-complete, in principle any problem in $\NP$
can be so expressed in this framework.  Any integer programming problem admits a
linear programming relaxation obtained by dropping the integrality constraints. 
This relaxed linear program can then be solved by standard polynomial-time 
algorithms, however it admits solutions that are not solutions to the original
integer program.  The gap between the optimal solution to the integer program 
and its linear programming relaxation is known as an \emph{integrality gap}.  
There are various ways that the linear programming relaxation may be tightened 
by additional constraints to more closely correspond to the original problem.  
Several systematic ways have been studied in the literature of constructing 
hierarchies of ever tighter linear or semidefinite programs, 
including those of Sherali-Adams~\cite{SA90}, Lovasz-Schrijver~\cite{LS91} 
and Lasserre~\cite{Lasserre:2001kq}.  Of these, the Lasserre hierarchy is the strongest and gives, for each $t$, a semidefinite program of size $n^{O(t)}$ (where $n$ 
is the size of the original integer program) that defines a feasible region 
whose projection on to the original variables includes the solutions of the 
integer program.  When $t = n$, this projection is exactly the convex hull of 
the solutions to the original integer program.  When this can be achieved for smaller values of 
$t$, we get substantially faster algorithms for solving (possibly 
approximately) the original problem.  For many combinatorial optimization 
problems, the Lasserre relaxations provide the best known approximation 
algorithms (see~\cite{Chlamtac2012}).

Our aim in this paper is to establish integrality lower bounds, 
i.e.\ to establish for particular combinatorial optimization problems $P$, 
expressed as an integer programming  problem, a lower bound on the value of $t$ 
such that the $t$-th level of the Lasserre hierarchy yields the convex hull of 
the feasible region of $P$.  Schoenebeck~\cite{schoenebeck08} 
established such lower bounds for the Lasserre hierarchy, showing a linear lower 
bound on $t$ for a variety of Boolean constraint satisfaction problems, 
including $\text{Max-}k\text{-}\XORSAT$.  
We show that those lower bounds are part of a general pattern.  Indeed, we 
demonstrate a dichotomy on the minimum value of $t$ needed in the Lasserre
hierarchy to establish exact solutions to optimization problems in the framework
of \emph{finite-valued constraint satisfaction problems} ($\VCSP$s). 
That is, any such $\VCSP$ is either already solved by simply relaxing the integrality
constraints in its 0--1 linear program formulation (resulting in the 
so-called \emph{basic linear program relaxation} ($\BLP$)),
or requires a linear number of Lasserre relaxation steps to be solved exactly.
As a direct consequence, we obtain the same dichotomy for the class of (weighted) 
$\MAXCSP$ problems.

The study of the complexity of $\VCSP$s has been quite successful
in the recent past, culminating in the dichotomy result by Thapper and \v{Z}ivn\'y~\cite{tz13:stoc}.
There, a complete characterization of the tractable and intractable cases of $\VCSP$s
are shown. Namely, any $\VCSP$ is either solved exactly by its $\BLP$;
or the problem $\MAXCUT$ reduces to it, and it is $\NP$-hard. Our main result
complements this dichotomy by showing a linear lower bound for the levels of Lasserre
relaxations required to exactly solve the hard cases. 
This characterization of $\VCSP$s into those solvable by their $\BLP$,
and those to which $\MAXCUT$ reduces, also applies in the context of logical definability.
It is known from~\cite{DW15} that in the former case, the class is definable in
fixed-point logic with counting (\FPC), while in the latter case there is no constant $k$ 
such that it is definable using only $k$ variables, even in an infinitary 
logic with counting.

In the present paper we establish a more fine-grained view on the above undefinability
result by lifting a previous undefinability result on classical $\CSP$s 
from~\cite{Atserias20091666}, and connect it to the number of levels of 
the Lasserre hierarchy needed to exactly capture the feasible region of a $\VCSP$.  
To be precise, we show that if $k$ variables are required to define the $\VCSP$ in 
logic with counting, then $\Omega(k)$ levels of the Lasserre hierarchy are 
needed to capture the corresponding feasible region.
Our result is established in two significant steps.  On the one hand, for each 
$k$, there is an $\FPC$ interpretation that constructs from an integer program 
the $k$-th Lasserre relaxation of that program by extending methods 
of~\cite{DW15}.  On the other hand, we show, using methods 
of~\cite{adh13:lics,adh15:jacm}, that there is a an $\FPC$ 
interpretation that can, given an explicitly given semidefinite program, 
define its optimal solution, up to a given approximation.
The dichotomy is then completed by showing that, 
for every $\VCSP$ that is not captured by the basic linear program relaxation, 
there is a linear lower bound on the number of variables required to define it.

We begin by introducing the necessary definitions from combinatorial 
optimization, logic and constraint satisfaction problems in 
Section~\ref{sec:background}.  Section~\ref{sec:result} formulates the main 
result and the steps establishing are then given in Sections~\ref{sec:linbound},~\ref{sec:sdp},
and~\ref{sec:lasserre}.

\section{Background}\label{sec:background}

\textbf{Notation.}  
We write $\NN$ for the natural numbers, $\ZZ$ for the integers,
$\QQ$ for the rational numbers, and use a superscript plus to denote the 
non-negative subset, e.g.\ $\ZZ^+=\NN\cup\{0\}$.

Given sets $A$ and $I$,
an $A$-valued \emph{vector} $v$ indexed by $I$ is a function $v:I\rightarrow A$. Often we
simply use the subscript notation, writing $v_i$ for $v(i)$. When there is no explicit index set given, 
vectors are indexed by an initial segment of the natural numbers: $\{1,\dots,d\}$.
A \emph{matrix} $M$ is a vector which is indexed by a product set $I\times J$. We use
$M_{i,j}$ to denote $M(i,j)$. We write $M^T$ to denote the \emph{transposed} matrix of $M$, 
defined as $M^T_{i,j}:=M_{j,i}$. A matrix is called \emph{symmetric} if $M=M^T$.

For a rational valued vector $v\in\QQ^I$ over some index set $I$, its \emph{norm}
$\norm{v}$ is defined as the $L^2$-norm over $\QQ^I$, that is, 
$\norm{v}:=\sqrt{\sum_{i\in I}v_i^2}$. The \emph{inner product} of two
vectors $a,b\in\QQ^I$ is defined as $\innerprod{a}{b}:=\sum_{i\in I}a_ib_i$. We
also occasionally write this as a matrix multiplication of vectors $a^T b$. In
the case of matrices, this definition of $\innerprod{\cdot}{\cdot}$ coincides 
with the trace inner product
of matrices. Note that the norm of a matrix $M$ is defined as the norm of $M$
seen as a vector. For a set $\mathcal{F}\subseteq \QQ^I$ and a vector $v\in\QQ^I$,
we define the \emph{distance} $d(v,\mathcal{F})$ in the usual way, i.e.\ as 
$d(v,\mathcal{F}):=\min_{x\in\mathcal{F}}\norm{x-v}$. The \emph{ball} $\mathcal{B}(v,r)$
around $v$ with radius $r\in\QQ$ is then the set $\mathcal{B}(v,r):=\{x\in\QQ^I\mid \norm{x-v}\leq r\}$.

A rational valued symmetric matrix $M\in\QQ^{I\times I}$ is \emph{positive
semidefinite}, if for any $x\in\QQ^I$, it holds $x^TMx\geq 0$.
We use $M\succeq 0$ to denote that $M$ is positive semidefinite.

We use the notion of \emph{polynomial time} solvable as used in \cite{GLS88}. That is,
for problems where we expect an exact solution, this means that
there exists an algorithm running in time
polynomial to the encoding of the input that returns an exact solution to the
problem. For a problem with a given error parameter $\delta>0$, we say it is 
polynomial time solvable if there is an algorithm running in time polynomial
of the encoding of the input and $\log(1/\delta)$, that solves the problem up to the 
specified error.

\subsection{Constraint Satisfaction Programs}\label{sec:csps}
The class of constraint satisfaction problems (CSPs) provides a framework in which many
common combinatorial problems can be expressed. Examples include $k$-colouring,
$k$-satisfiability, solvability of linear equations over a finite field, 
and many more. Here we consider CSPs that are parameterized by a fixed \emph{domain}
and a \emph{constraint language}, and their optimization variant of so called 
\emph{finite-valued} CSPs. Finite-valued CSPs have been extensively studied in the
recent past and generalize common optimization problems, such as the class of
$\MAXCSP$s. 

\begin{definition}
	A \emph{domain} $D$ is a finite, non-empty set. A
	\emph{constraint language} $\Gamma$ over $D$ is a set of relations over $D$, where
	each $R\in\Gamma$ is a relation of some arity $m=\ar(R)$, and $R\subseteq D^m$.
	
	An instance of the \emph{constraint satisfaction problem} over $(D,\Gamma)$
	is then a pair $I=(V,C)$, where $V$ is a finite set of \emph{variables}, and $C$
	is a finite set of \emph{constraints}. Each constraint $c\in C$ is a
	pair $(s,R)$ associating a relation $R\in\Gamma$ with a \emph{scope} $s\in V^{\ar(R)}$.
	
	We say an assignment $h:V\rightarrow D$ 
	\emph{satisfies} a constraint $c=(s,R)$ if $h(s)\in R$. The goal is to decide
	whether there exists an assignment that satisfies all constraints $c\in C$.
\end{definition}

\begin{example}
	Let $D=\{0,1,2\}$, and $\Gamma=\{\neq\}$, that is, $\Gamma$ contains the 
	inequality relation over $D$. We obtain the
	$3$-colouring problem as $\CSP(D,\Gamma)$: A instance graph $G=(V,E)$ is
	simply interpreted as a $\CSP(D,\Gamma)$-instance where the variables are the 
	vertices $V$, and every edge $(u,v)\in E$ induces the constraint $((u,v),\neq)$.
\end{example} 

Instead of just deciding whether an instance is satisfiable or not, we could be
interested in \emph{how many} constraints could be satisfied at the same time.
This is the optimization problem $\MAXCSP$: We let $\MAXCSP(D,\Gamma)$ be the 
optimization problem of determining the maximal number of constraints that can 
be satisfied in a given instance of $\CSP(D,\Gamma)$.

This can be further generalized. In $\MAXCSP$s, every constraint itself is either
satisfied or not. If we now allow constraints to be satisfied to 
\emph{different degrees}, we obtain the framework of \emph{finite-valued CSPs} 
($\VCSP$s). This is the framework we will be working in.

Here, a constraint language consists of a set of functions, instead
of relations, where a $m$-ary function $f:D^m\rightarrow \ZZ^+$ assigns each 
$m$-tuple of the domain an integer value. A
constraint then associates some tuple of variables with such a function, and we 
are interested in the maximum value that can be achieved by any assignment. This is
formalized in the following definition.

\begin{definition}
	Let $D$ be a domain. A \emph{finite-valued constraint language} $\Gamma$ 
	is a set of functions,
	where each $f\in\Gamma$ has some arity $m=\ar(f)$, and $f:D^m\rightarrow\ZZ^+$.
	
	An instance of the \emph{valued constraint satisfaction problem} over $(D,\Gamma)$
	is a pair $I=(V,C)$, where $V$ is a finite set of \emph{variables}, and $C$
	is a finite set of \emph{constraints}. Each constraint $c\in C$ is a
	triple $(s,f,w)$ associating a relation $f\in\Gamma$ with a \emph{scope} $s\in V^{\ar(f)}$
	and a \emph{weight} $w\in\ZZ^+$.
	
	The \emph{value} of an assignment $h:V\rightarrow D$ for an instance $I$
	is given as $\Val_I(h):=\sum_{(s, f, w)\in C} w\cdot f(h(s))$. The goal is
	to determine the maximum value $\Opt(I)=\max_h \Val_I(h)$.
\end{definition}

\begin{example}
	Let $\MAXCUT$ be the problem of determining the value of the
	maximum cut in a graph $G=(V,E)$ with edge weights
	$w:E\rightarrow \ZZ^+$.
	Furthermore, we fix $D=\{0,1\}$, and $\Gamma=\{f\}$ 
	where $f(x,y)=0$ if $x=y$, and $f(x,y)=1$
	if $x\neq y$.  An instance of $\MAXCUT$ can be interpreted as an 
	instance $I$ of $\VCSP(D,\Gamma)$:
	The variables are the vertices $V$, and we have the constraint $((u,v),f,w(u,v))$
	for every edge $(u,v)\in E$. The value of the maximum cut is exactly $\Opt(I)$.
\end{example} 

\begin{example}
	It is not difficult to see that every $\MAXCSP(D,\Gamma)$ is a finite-valued CSP.
	For any $m$-ary relation $R$, define a function
	$f_R:D^m\rightarrow\{0,1\}$ as $f(t)=1$ if $t\in R$ and $f(t)=0$ if $t\notin R$.
	Let $\Gamma'$ be the finite-valued constraint language that consist of 
	$f_R$ for all $R\in\Gamma$. Then, $\MAXCSP(D,\Gamma)=\VCSP(D,\Gamma')$.
\end{example}

When talking complexity classes and reductions, it is often more 
convenient to also phrase VCSPs as decision 
problems. Abusing notation, we will use $\VCSP(D,\Gamma)$ to also denote the set
of pairs $(I,t)$ such that $I$ is an instance with $\Opt(I)\geq t$. The decision
problem is then, given any pair $(I,t)$, to decide whether $(I,t)\in\VCSP(D,\Gamma)$.

In the study of valued constraint satisfaction problems, linear programming in
particular has proven to be a useful tool. In fact, every instance of $\VCSP(D,\Gamma)$
is equivalent to the following \emph{integer} linear program.

For an instance $I=(V,C)$, the program contains variables $\lambda_{c,x}$
for every $c\in C$ with $c=(s,f,w)$ and $x\in D^{\ar(s)}$,
and $\mu_{v,a}$ for every $v\in V$ and $a\in D$. A solution that sets
a variable $\lambda_{c,x}$ to $1$ then corresponds to an assignment that
assigns the scope of the constraint $c$ to the tuple $x$. In order to maintain 
consistency of the assignment between constraints, the variable $\mu_{v,a}$
encodes whether the variable $v$ is assigned the value $a$. The objective is then
to maximize the value of the assignment. The 0--1
program is then given below.
\begin{align*}
	 \max  \sum_{c\in C}\sum_{x\in D^{\ar(s)}}&\lambda_{c,x}\cdot w\cdot f(x) \quad \text{where } c=(s,f,w) \text{, s.t. }\\
	\sum_{x\in D^{\ar(s)}; x_i=a} &\lambda_{c,x} = \mu_{s_i,a} &&\forall c\in C, a\in D, i\in [\ar(s)] \\
	\sum_{a\in D} &\mu_{v,a} = 1 &&\forall v\in V \\
	& \lambda_{c,x} \in \{0,1\} &&\forall c\in C, x\in D^{\ar(s)}\\
	&\mu_{v,a} \in \{0,1\} &&\forall v\in V, a\in D
\end{align*}

If we relax the integrality constraints of the above LP to 
$0\leq \lambda_{c,x}\leq 1$ and $0\leq \mu_{v,a}\leq 1$, 
we obtain the \emph{basic linear program relaxation} $\BLP(I)$. Since this allows
rational assignments, this LP can be solved exactly in polynomial time. In general
the optimal value of $\BLP(I)$ only gives an overestimate of the optimal value $\Opt(I)$
to the $\VCSP$. However, there are $(D,\Gamma)$ for which any instance $I$ of 
$\VCSP(D,\Gamma)$ is solved by
$\BLP(I)$ exactly, and solving the LP gives an exact algorithm 
for $\VCSP(D,\Gamma)$. Thapper and \v{Z}ivn\'y\cite{tz13:stoc} give a
complete characterization of those cases -- and show that in all other cases
the problem is $\NP$-hard.

\begin{theorem}\label{thm:vcsp}
	For any domain $D$, and any finite-valued constraint language $\Gamma$, 
	either every instance $I$ of $\VCSP(D,\Gamma)$ is solved by $\BLP(I)$; or
	the problem $\MAXCUT$ polynomial-time reduces to $\VCSP(D,\Gamma)$.
\end{theorem}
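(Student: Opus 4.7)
The plan is to prove the dichotomy through the algebraic theory of fractional polymorphisms. A $k$-ary fractional polymorphism of $\Gamma$ is a probability distribution $\omega$ over operations $g\colon D^k\to D$ satisfying, for every $f\in\Gamma$ and every tuples $x^1,\ldots,x^k\in D^{\ar(f)}$, the inequality $\sum_g \omega(g) f(g(x^1,\ldots,x^k)) \leq \frac{1}{k}\sum_i f(x^i)$, where $g$ is applied componentwise. Such an $\omega$ is \emph{symmetric} if $\omega(g)$ depends only on the multiset of outputs of $g$ on the standard basis. The first step is to establish the key equivalence: $\BLP$ solves $\VCSP(D,\Gamma)$ exactly if and only if $\Gamma$ admits symmetric fractional polymorphisms of all sufficiently large arities. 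One direction follows by LP duality applied to the system defining fractional polymorphisms for $\Gamma$; the other is delivered by the rounding argument of the next paragraph.

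For the tractable direction, I would show how to round any optimal $\BLP$ solution losslessly. The marginals $\mu_{v,a}$ and $\lambda_{c,x}$ are rational with some common denominator $k$. Viewing $\mu_{v,a}$ as the probability of assigning $a$ to $v$, I would sample $k$ assignments whose empirical per-variable and per-constraint marginals coincide with $\mu$ and $\lambda$, then aggregate them via an operation drawn from a symmetric $k$-ary fractional polymorphism $\omega$ of $\Gamma$. The defining inequality of $\omega$ bounds the expected value of the aggregated assignment from below by the BLP optimum, and symmetry ensures that the per-constraint contribution is computed using precisely the BLP marginals $\lambda_{c,x}$. A standard derandomization yields a single integral assignment attaining the BLP value.

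For the hardness direction, the plan is to turn the failure of symmetric fractional polymorphisms into an explicit polynomial-time $\MAXCUT$ reduction. Duality between the BLP and the cone of fractional polymorphisms implies that failure provides, for some pp-expressible (via sums, variable identification, and projection) binary cost function $g$, a two-element subset $B\subseteq D$ on which $g$ restricts to a non-trivial function that is strictly improved by swapping its two arguments. By passing to the core of $\Gamma$ one ensures that $B$ is truly accessible through unary endomorphism gadgets, so $g|_B$ can be pp-constructed inside $\VCSP(D,\Gamma)$. Scaling and shifting $g|_B$ then yields the MAXCUT cost function over $B$, giving the required reduction. The principal obstacle is this last step: moving from the abstract failure of a symmetric fractional polymorphism to a concrete combinatorial gadget expressing a cut-like function on a two-element subdomain, which rests on a careful analysis of the algebra of fractional polymorphisms of the core and the exclusion of spurious identifications that would trivialize $g|_B$.
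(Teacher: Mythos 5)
The paper does not prove this theorem: it is quoted directly from Thapper and \v{Z}ivn\'y~\cite{tz13:stoc} and used as a black box. Your proposal is an attempt to reconstruct that proof from scratch, and the broad outline matches theirs: characterize the tractable case via symmetric fractional polymorphisms, prove tractability by rounding the BLP against a suitable fractional polymorphism, and derive hardness by turning the absence of the algebraic condition into an expressible Max-Cut-like binary cost function on a two-element subdomain of the core. So the decomposition is the right one.

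There are, however, some inaccuracies and one substantial gap. Small points first: since the VCSPs in this paper are maximization problems with non-negative values, the improvement inequality in your definition of a fractional polymorphism should be $\geq$, not $\leq$; and your definition of ``symmetric'' is garbled --- the standard notion requires $\omega$ to be supported on symmetric operations $g$ (those invariant under permuting arguments), which is not what ``$\omega(g)$ depends only on the multiset of outputs of $g$ on the standard basis'' says. More seriously, the hardness direction is left as a plan, not a proof. You correctly isolate the crux --- getting from abstract failure of the symmetric fractional polymorphism condition to a concrete binary gadget on a two-element subdomain $B$ --- but that is precisely where almost all of the technical weight of Thapper and \v{Z}ivn\'y's argument sits. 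Their route goes through two non-trivial intermediate results that your sketch omits: first, a collapse showing that a binary symmetric fractional polymorphism of the core already gives symmetric fractional polymorphisms of all arities, which reduces the classification to a single binary test; and second, a combinatorial analysis of a graph on binary idempotent operations of the core, which is what actually locates the subdomain $B$ and excludes the degenerate identifications you flag at the end. Invoking ``LP duality'' and ``careful analysis of the algebra'' does not substitute for these steps. So: right architecture, honest acknowledgment of the obstacle, but the obstacle is in fact the theorem.
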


Our main result expands on this dichotomy result, and shows a linear lower bound of
the required levels of the Lasserre hierarchy for all the cases not solved by
the BLP relaxation.

\subsection{Semidefinite Optimization}
We give a brief overview of the basic notions of semidefinite programs.

In general, semidefinite programming refers to a framework of constrained 
optimization problems where the search space is over the set of 
\emph{positive semidefinite matrices}. More specifically, in a typical 
semidefinite program we are interested in the entries of a 
symmetric matrix $X\in\QQ^{V\times V}$ that maximizes the value of an objective function $\innerprod{C}{X}$,
subject to a set of constraints of the form
$\innerprod{A_i}{X}\leq b_i$ 
with the additional constraint that $X$ is a positive semidefinite matrix. 

\begin{definition}
	Let $V,M$ be sets, and let $V$ be non-empty.
	A \emph{semidefinite program (SDP)} is given by
	an objective matrix $C\in\QQ^{V\times V}$, a $\QQ^{V\times V}$-valued vector  
	$\mathcal{A}\in \QQ^{M\times (V\times V)}$,
	and a vector $b\in\QQ^M$.
	
	We call $\mathcal{F}_{\mathcal{A},b}:=\{X\in\QQ^{V\times V}\mid X\succeq 0, \innerprod{A_i}{X}\leq b_i, A_i=\mathcal{A}(i), i\in M\}$
	the set of \emph{feasible solutions}. 
\end{definition}

We sometimes call sets that can be defined as feasible regions
of an SDP a \emph{positive semidefinite set}. This definition covers SDPs that are
in the so-called \emph{conic standard form}. Sometimes however it is more convenient
to specify SDPs in their \emph{inequality standard form}.
In this form, the SDP is instead given by a matrix
$Z\in\QQ^{M\times M}$, a matrix-valued vector $\mathcal{Y}\in\QQ^{V\times(M\times M)}$,
and an objective vector $c\in\QQ^V$.
The feasible region is then defined as
$\mathcal{F}_{\mathcal{Y},Z}:=\{x\in\QQ^V\mid Z+\sum_{v\in V}x_v \mathcal{Y}_v \succeq 0\}$.
The two standard forms can be converted into each other by adding,
substituting, and rearranging variables.   The number of additional variables needed can be bounded by a linear function in both cases.  Hence, we will use whichever representation is most convenient for any given case.

Note that by the definition of inner product, the objective 
function $\innerprod{C}{X}$ is a linear function over the variables $x_{u,v}, u,v\in V$.
Likewise, constraints of the form $\innerprod{A_i}{X}\leq b_i$ are also linear inequalities over the entries $x_{u,v}$. Hence, we can view
semidefinite programs as a generalization of \emph{linear programs},
with the additional constraint that the solution must define a positive 
semidefinite matrix. In fact, the semidefinite constraint $X\succeq 0$ 
essentially imposes an \emph{infinite} set of additional linear constraints,
namely $a^TXa\geq 0$ for all $a\in\QQ^V$.

The feasible region of a semidefinite program is convex, since 
it can be described as an intersection of (infinitely many) halfspaces.
Classically, in the context of convex optimization, we are interested in the
solutions of the two main problems of \emph{optimization} and \emph{separation}.
As a technical point, the optimal solution to a semidefinite program, or convex
problems in general, is not necessarily rational, so we can only express it
up to a finite precision. This gives rise to the weak formulations of the problems
where we allow an additive error to be specified in the input.

\begin{definition}
	Let $V$ be a non-empty set. Given a vector $c\in\QQ^V$ and a convex set
	$\mathcal{F}\subseteq \QQ^V$, the \emph{strong optimization problem}
	is to either find an element $y=\argmax_{x \in\mathcal{F}}\innerprod{c}{x}$,
	or to determine that $\mathcal{F}$ is empty, or that 
	$\max_{x\in\mathcal{F}}\innerprod{c}{x}$ is unbounded.
	
	In the \emph{weak optimization problem} we are given an additional
	error parameter $\delta>0$, and want to determine an element $y$ that is 
	$\delta$-close to $\mathcal{F}$, i.e.\ $d(y,\mathcal{F})\leq\delta$,
	that is also $\delta$-maximal, i.e.\ 
	$\innerprod{c}{y}+\delta\geq \max_{x\in\mathcal{F}}\innerprod{c}{x}$,
	or, again, to determine that $\max_{x\in\mathcal{F}}\innerprod{c}{x}$ is unbounded.
\end{definition}

\begin{definition}
	Let $V$ be a non-empty set. Given a vector $y\in\QQ^V$ and a convex set
	$\mathcal{F}\subseteq \QQ^V$, the \emph{strong separation problem}
	is the problem of determining either 
	that $y\in\mathcal{F}$, or finding a vector $s\in\QQ^V$ with
	$\innerprod{s}{y}>\max\{\innerprod{s}{x}\mid x\in\mathcal{F}\}$ and 
	$\norm{s}_\infty=1$. 
	
	In the \emph{weak separation problem}, we are given an additional parameter
	$\delta>0$, and are looking to determine that either $y$ is
	$\delta$-close to $\mathcal{F}$, i.e.\ $d(y,\mathcal{F})\leq\delta$,
	or to find a vector $s\in\QQ^V$, such that 
	$\innerprod{s}{y}+\delta>\max\{\innerprod{s}{x}\mid x\in\mathcal{F}\}$ and 
	$\norm{s}_\infty=1$.
\end{definition}

The relationship between the optimization and separation problem of a given convex
set is well-studied and is most prominently expressed by Gr\"otschel, Lov\'asz, and
Shrijver \cite{GLS88} as being polynomial time equivalent. More
precisely, with the additional assumptions that the set $\mathcal{F}$ is
\emph{full-dimensional} ($\mathcal{F}$ has positive volume in $\QQ^V$) and
\emph{bounded} ($\mathcal{F}$ is contained within a ball of finite radius),
the weak optimization problem for $\mathcal{F}$ is solvable in polynomial time
if, and only if, the corresponding weak separation problem is solvable in polynomial
time. The following is essentially the statement of Theorem 4.2.7 in~\cite{GLS88}.

\begin{theorem}\label{thm:GLSopt}
	Let $V$ be a non-empty set, and $\mathcal{F}\subseteq \QQ^V$ a 
	full-dimensional convex set that is located inside the ball $\mathcal{B}(0,R)$
	for some known value $R$. The weak optimization problem on $\mathcal{F}$
	is solvable in polynomial time if its weak separation problem is solvable in
	polynomial time. 
\end{theorem}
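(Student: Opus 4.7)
The plan is to adapt the ellipsoid method to turn a weak separation oracle into a weak optimization algorithm, following the classical strategy of Gr\"otschel, Lov\'asz, and Schrijver. The high-level approach proceeds in two layers: (i) use the separation oracle and the ellipsoid method to solve a weak feasibility problem on $\mathcal{F}$, and (ii) reduce weak optimization to a logarithmic number of weak feasibility queries via binary search on the objective value.

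For the feasibility layer I would begin with the ellipsoid $E_0 := \mathcal{B}(0, R)$, which by assumption contains $\mathcal{F}$. At iteration $k$, let $y_k$ be the center of the current ellipsoid $E_k$, and query the weak separation oracle on $y_k$ with a small internal tolerance. If the oracle certifies that $y_k$ is close to $\mathcal{F}$, return it. Otherwise it produces a vector $s$ with $\innerprod{s}{y_k}$ essentially exceeding the supremum of $\innerprod{s}{\cdot}$ on $\mathcal{F}$, and I would set $E_{k+1}$ to be the minimum-volume ellipsoid containing the half-ellipsoid $E_k \cap \{x \mid \innerprod{s}{x} \leq \innerprod{s}{y_k}\}$. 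The standard volume-shrinkage estimate gives $\mathrm{vol}(E_{k+1}) \leq e^{-1/(2|V|+1)}\mathrm{vol}(E_k)$, so after $O(|V|^2 \log(R/r))$ iterations the volume drops below that of any ball of radius $r$ contained in $\mathcal{F}$, at which point the current center is guaranteed to be $\delta$-close to $\mathcal{F}$.

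To obtain weak optimization, I would perform binary search over target values $t$ in the interval $[-R\norm{c}, R\norm{c}]$: for each $t$, run the feasibility procedure on the convex set $\mathcal{F} \cap \{x \mid \innerprod{c}{x} \geq t\}$, whose weak separation oracle is easily assembled from the oracle for $\mathcal{F}$ together with the single halfspace $\innerprod{c}{x} \geq t$. After $O(\log(R\norm{c}/\delta))$ bisection steps, the returned feasible point is $\delta$-maximal, yielding the desired weak optimum. Unboundedness of $\innerprod{c}{\cdot}$ over $\mathcal{F}$ would be detected when even $t$ at the upper end of the search range remains feasible.

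The main obstacle, and where care is required, is the error analysis. Each call to the weak separation oracle only guarantees an \emph{approximate} separating hyperplane, so the shrinkage argument must be carried out with ellipsoids that are slightly enlarged to absorb the $\delta$-slack, and the center $y_k$ must be rounded to a rational vector of controlled bit-length before being passed to the oracle. Here full-dimensionality of $\mathcal{F}$ is essential: it ensures $\mathcal{F}$ contains some ball of radius $r > 0$, giving the termination bound $O(|V|^2 \log(R/r))$ and preventing the algorithm from looping indefinitely in a degenerate slice. Boundedness supplies the initial $R$. With these ingredients in place, the overall running time is polynomial in $|V|$, $\log(R/r)$, $\log\norm{c}$, and $\log(1/\delta)$, as required.
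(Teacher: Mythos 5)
The paper does not prove this statement; it cites it as (essentially) Theorem~4.2.7 of Gr\"otschel, Lov\'asz and Schrijver~\cite{GLS88} and treats it as a black box. So there is no paper proof to compare against, and the question is whether your sketch is sound on its own terms.

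Your overall plan (ellipsoid method driven by the weak separation oracle, then reduce optimization to feasibility) is the right family of ideas, but there are two genuine gaps. First, the termination claim is misstated: when the ellipsoid volume drops below that of a ball of radius $r$, the conclusion is \emph{not} that ``the current center is guaranteed to be $\delta$-close to $\mathcal{F}$.'' Rather, the argument is that if every queried center had been rejected, all the resulting ellipsoids would still contain an $r$-ball inside $\mathcal{F}$, which contradicts the volume bound; so the oracle must have already accepted some earlier center, and that is the point you return. Second, and more structurally, the binary-search reduction optimizes over the slices $\mathcal{F}_t := \mathcal{F}\cap\{\langle c,x\rangle\ge t\}$, and these are \emph{not} full-dimensional when $t$ approaches the optimum (near-optimal slices are thin slivers with no inscribed ball of fixed radius $r$). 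Your termination argument relies precisely on full-dimensionality, and it does not transfer to $\mathcal{F}_t$. You either need the weak-feasibility subroutine to have a two-sided output (``find a near-feasible point'' or ``certify the set contains no ball of radius $\varepsilon$,'' with the latter used to drive the bisection down), or follow GLS more closely and use the sliding-objective / central-cut variant, which keeps cutting the \emph{full-dimensional} set $\mathcal{F}$ and uses the objective hyperplane through feasible centers as an extra cut, tracking the best feasible center found. Relatedly, ``easily assembled'' for the separation oracle of the intersection glosses over the fact that a point $y$ which is $\delta$-close to $\mathcal{F}$ and satisfies $\langle c,y\rangle\ge t$ need not be $\delta$-close to $\mathcal{F}_t$; that conversion requires its own quantitative argument (again using an inner radius, which is exactly what the slices lack).
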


Note that in the special case where $\mathcal{F}$ is a 
rational polyhedron (for instance the feasible region of a linear program),
even the strong versions of the problems can be solved
in polynomial time, and the additional assumptions just introduced can be avoided.

The main tool in the reduction from optimization to separation is the so-called
\emph{ellipsoid method}(see~\cite{GLS88}), which is an algorithm that repeatedly calls
a blackbox solver for the separation problem, called a \emph{separation oracle},
in order to locate a feasible point. Finding an algorithm for the
optimization problem then reduces to finding a suitable separation 
oracle for its feasible region. In the case of semidefinite
programs, a simple (strong) separation oracle is given by Algorithm \ref{alg:sdpsep}.
Note that this algorithm assumes that we can compute the eigenvalues and eigenvectors
with infinite precision. Realistically we will have to work with finite
precision approximations that can be obtained in polynomial time. For our purposes,
in Section~\ref{sec:sep}, we will formulate a slightly modified algorithm that serves as
a weak separation oracle that can be expressed in fixed-point logic with counting (\FPC).

Anderson et al.~\cite{adh13:lics} have shown that the
ellipsoid method can be suitably expressed in fixed-point logic with counting, 
at least in the case of polyhedra.  In the present paper we extend their construction
 to semidefinite sets.  Together with the \FPC-definable
weak separation oracle, this yields our definability result for semidefinite 
programs. 

\begin{algorithm}
	\caption{Separation oracle for semidefinite programs}\label{alg:sdpsep}
	\begin{algorithmic}[1]
		\Require $\mathcal{A}=\{A_1,\ldots,A_m\in \QQ^{V\times V}\}$, $b\in\QQ^m$, $Y\in\QQ^{V\times V}$.
		\Ensure Solves separation problem on $\mathcal{F}_{\mathcal{A},b}$ and $Y$.
		\Function{Separation}{$\mathcal{A}$,$b$,$Y$}:
			\If{there is $A_i\in \mathcal{A}$ such that $\innerprod{A_i}{Y}>b_i$} \label{line:select1}
				\State \Return $\frac{1}{\norm{A_i}}A_i$
			\EndIf
			\State Compute Eigenvalues $\{\lambda_1,\ldots,\lambda_{|V|}\}$ of $Y$
			\If{there is $\lambda_i<0$}
				\State $v\gets $ Eigenvector corresponding to $\lambda_i$
				\State \Return $(-1)/\norm{vv^T}\cdot vv^T$
			\EndIf
			\State \Return \textsc{Accept}
		\EndFunction
	\end{algorithmic}
\end{algorithm}

\subsection{Lasserre Hierarchy}
One of the most common applications of semidefinite programming is to give 
approximation algorithms for hard combinatorial problems. For a large class of
problems, namely those that can be expressed as integer programs, a 
generic way to find approximations is to drop the integrality condition so that
a rational solution can be efficiently computed. The value
of the optimal rational solution serves as an upper bound to the optimal value
of the integer problem and can be used as an approximation. 
The concept of relaxation hierarchies extends this idea further. Instead of 
solving the basic relaxation, these hierarchies define a sequence of linear or 
semidefinite programs that provide increasingly finer approximations to the 
original integer solution. This is achieved by adding, at each level of the 
hierarchy, additional constraints to the basic relaxation that are preserved 
under the original integer 
program, but may cut away rational solutions not in the convex hull of integer 
solutions.

A prominent example of such a relaxation hierarchy is the 
\emph{Lasserre hierarchy} which for a given 0--1 linear program
defines a sequence of semidefinite programs.

\begin{definition}
	Let $V$, $M$ be sets and  $\mathcal{K}:=\{x\in\QQ^V\mid Ax\geq b\}$ a polytope  given by $A\in\QQ^{U\times V}, b\in\QQ^U$.
	
	For a vector $y\in\QQ^{\powerset(V)}$, and an integer $t$ with $1 \leq t \leq |V|$,
	we define the \emph{$t$-th moment matrix} of $y$, $M_t(y)$ as the 
	$\powerset_t(V)\times \powerset_t(V)$-matrix with entries
	\[M_t(y)_{I,J}:= y_{I\cup J}, \text{ for } |I|,|J|\leq t.\]
	
	Similarly, the \emph{$t$-th moment matrix of slacks} of $y, A, b$, and some
	$u\in U$ is given by
	\[S_t^u(y)_{I,J}:=\sum_{v\in V}A_{u,v}y_{I\cup J\cup \{v\}}-b_u y_{I\cup J},
	\text{ for } |I|,|J|\leq t.\]
	
	Finally, the \emph{$t$-th level of the Lasserre hierarchy} of $\mathcal{K}$,
	$\Las_t(\mathcal{K})$ is the positive semidefinite set defined by
	\[\Las_t(\mathcal{K}):=\{y\in\QQ^{\powerset_{2t+1}(V)}\mid y_{\emptyset}=1, M_t(y)\succeq 0, S_t^u(y)\succeq 0 \text{ for all } u\in U\}.\]
	We write $\Las_t^\pi(\mathcal{K}):=\{y_{\{v\}}, v\in V\mid y\in\Las_t(\mathcal{K})\}$
	for the projection of $\Las_t(\mathcal{K})$ onto the original variables.
\end{definition}

The general usage of the Lasserre hierarchy is as follows.
Assume we have a 0--1 program where the feasible region is defined as
$\mathcal{K}\cap \{0,1\}^{V}$. Now, instead of optimizing over the integer region,
we can define $\Las_t(\mathcal{K})$ for some level $t$, and solve the corresponding SDP. 
For a fixed constant $t$, this SDP has a polynomial number of new variables, and the
optimum can be obtained in polynomial time.  This optimum, when projected down onto
the original variables, serves as an approximation to the optimum in 
$\mathcal{K}\cap \{0,1\}^{V}$. 

The following basic properties of the Lasserre hierarchy establish that $\Las_t(\mathcal{K})$
is indeed a relaxation of $\mathcal{K}\cap \{0,1\}^{V}$. We write $\mathcal{K}^*$
for the polytope that is defined by the convex hull of the integer points in $\mathcal{K}$,
i.e.\ $\mathcal{K}^*:=\conv(\mathcal{K}\cap \{0,1\}^{V})$.

\begin{lemma}
	Let $\mathcal{K}=\{x\in\QQ^V\mid Ax\geq b\}$, and $y\in\Las_t(\mathcal{K})$
	for $t\in \{1,\ldots,|V|\}$. Then,
	\begin{enumerate}
	  \item $\mathcal{K}^*\subseteq \Las_t^\pi(\mathcal{K})$.
	  \item $\Las_0(\mathcal{K})\supseteq\Las_1(\mathcal{K})\supseteq \ldots\supseteq\Las_{|V|}(\mathcal{K})$.
	  \item $\Las_0^\pi(\mathcal{K})\subseteq \mathcal{K}$, and $\mathcal{K}^* = \Las_{|V|}^\pi(\mathcal{K})$.
	\end{enumerate}
\end{lemma}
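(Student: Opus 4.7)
The plan is to treat the three claims separately, since each isolates a different feature of the construction.

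For part (1), I would start by showing that each $x \in \mathcal{K} \cap \{0,1\}^V$ lifts to a Lasserre vector $y^x \in \Las_t(\mathcal{K})$ whose projection is $x$. The natural lift is $y^x_I := \prod_{v \in I} x_v$ for $|I| \leq 2t+1$. Because $x_v \in \{0,1\}$, one has $y^x_{I \cup J} = y^x_I \cdot y^x_J$, so the moment matrix factors as $M_t(y^x) = z z^T$ with $z_I = y^x_I$, and in particular is PSD. For the slack matrix, the same factorization gives $S_t^u(y^x)_{I,J} = y^x_{I \cup J} \cdot ((Ax)_u - b_u)$, which is a non-negative scalar multiple of $zz^T$ and hence PSD. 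Taking convex combinations then proves $\mathcal{K}^* \subseteq \Las_t^\pi(\mathcal{K})$ because convex combinations of PSD matrices are PSD and the constraints are linear in $y$.

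Part (2) is a routine restriction argument: if $y \in \Las_{t+1}(\mathcal{K})$, then discarding the entries $y_I$ with $|I| > 2t+1$ yields $y' \in \QQ^{\powerset_{2t+1}(V)}$, and $M_t(y')$ and $S_t^u(y')$ are principal submatrices of $M_{t+1}(y)$ and $S_{t+1}^u(y)$ respectively; principal submatrices of PSD matrices are PSD. This gives the nesting $\Las_0 \supseteq \Las_1 \supseteq \cdots \supseteq \Las_{|V|}$. Similarly, for the first half of part (3), the level-$0$ constraint reads $S_0^u(y)_{\emptyset,\emptyset} = \sum_v A_{u,v} y_{\{v\}} - b_u \geq 0$, which is exactly $Ax' \geq b$ for the projection $x'_v = y_{\{v\}}$, establishing $\Las_0^\pi(\mathcal{K}) \subseteq \mathcal{K}$.

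The main work is the identity $\mathcal{K}^* = \Las_{|V|}^\pi(\mathcal{K})$; the $\supseteq$ direction follows from part (1), so the task is $\subseteq$. Given $y \in \Las_{|V|}(\mathcal{K})$, I would extract a probability distribution on $\{0,1\}^V$ by Möbius inversion:
\[
\mu(T) := \sum_{I \supseteq T} (-1)^{|I \setminus T|} y_I, \qquad T \subseteq V.
\]
A direct computation (swapping sums) using $y_\emptyset = 1$ yields $\sum_T \mu(T) = 1$, and $y_{\{v\}} = \sum_{T \ni v} \mu(T)$, so the projection is the $\mu$-expectation of $\mathbb{1}_T$. The crux is proving $\mu(T) \geq 0$ and that $\mu$ is supported on $\mathcal{K} \cap \{0,1\}^V$. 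For each $T$, define the test vector $v_T \in \QQ^{\powerset(V)}$ by $(v_T)_S := (-1)^{|S \setminus T|}$ if $S \supseteq T$ and $0$ otherwise. Expanding $v_T^T M_{|V|}(y) v_T$, using $y_{S \cup S'} = y_{S \cup S'}$ exactly once, and collapsing the resulting inclusion-exclusion sum identifies it with $\mu(T)$; PSD of $M_{|V|}(y)$ then gives $\mu(T) \geq 0$. The analogous computation for $v_T^T S_{|V|}^u(y) v_T$ produces $((A\mathbb{1}_T)_u - b_u)\,\mu(T) \geq 0$, so whenever $\mu(T) > 0$ the point $\mathbb{1}_T$ lies in $\mathcal{K}$. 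Thus $(y_{\{v\}})_v$ is a convex combination of integer points in $\mathcal{K}$, i.e.\ lies in $\mathcal{K}^*$.

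The hard step is clearly this last one: recognising that the moment matrix $M_{|V|}(y)$ implicitly encodes the $\{0,1\}$-idempotence $x_v^2 = x_v$ (via the union in $y_{I \cup J}$), and finding the right test vectors $v_T$ so that the PSD conditions on $M_{|V|}(y)$ and $S_{|V|}^u(y)$ translate exactly into the non-negativity of $\mu(T)$ and its support condition. Everything else is bookkeeping.
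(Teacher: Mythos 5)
Your proof is correct, and it is essentially the standard argument one finds in surveys on the Lasserre hierarchy. The paper itself gives no proof for this lemma—it simply cites a reference (\cite{R13})—so there is no detailed argument in the paper to compare against; your write-up is a correct self-contained reconstruction of what that reference would contain. For parts (1)--(3a), the lift $y^x_I = \prod_{v\in I}x_v$, the rank-one factorization $M_t(y^x)=zz^T$, the principal-submatrix observation, and the level-$0$ reading of $S_0^u$ are all exactly right. For part (3b), your M\"obius-inversion definition $\mu(T)=\sum_{I\supseteq T}(-1)^{|I\setminus T|}y_I$ together with the test vector $v_T$ indeed gives $v_T^T M_{|V|}(y)v_T=\mu(T)$ (the double sum over $S,S'\supseteq T$ with $S\cup S'=I$ collapses to $(-1)^{|I\setminus T|}$ since each element of $I\setminus T$ contributes $-1-1+1=-1$), and the analogous computation on $S^u_{|V|}(y)$ gives $((A\mathbb{1}_T)_u-b_u)\mu(T)\geq 0$, putting the support of $\mu$ inside $\mathcal{K}\cap\{0,1\}^V$. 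One small wording issue: the phrase ``using $y_{S\cup S'}=y_{S\cup S'}$ exactly once'' is garbled and should be replaced by an explicit statement that terms are grouped by $I=S\cup S'$; and in part (2) you should state explicitly that the containment $\Las_{t+1}\subseteq\Las_t$ is meant after projecting onto the coordinates $\powerset_{2t+1}(V)$, since the two sets formally live in different ambient spaces.
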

\begin{proof}
	See for instance in \cite{R13}
\end{proof}

\begin{definition}
	Let $I=(c,\mathcal{K})$ be a 0--1 linear program with a objective vector $c\in\QQ^V$
	optimizing over a feasible region $\mathcal{K}\cap \{0,1\}^{V}$. 
	We say that $I$ is \emph{captured} at the $t^{\scriptsize\mbox{th}}$ level
	of the Lasserre hierarchy if $\Las_t^\pi(\mathcal{K})=\mathcal{K}^*$.
	We write $l(I)$ for the minimum $t$, such that $I$ is captured at 
	the $\tth$ level.
	
	For a class of 0--1 linear programs $C$, we say that $C$ is
	captured at the $\tth$ level, if every program in $C$ is captured at the 
	$\tth$ level of the Lasserre hierarchy.
	
	We denote by $L_C(n)$ the function that maps an integer $n$ to the 
	lowest level $t$ at which any 0-1 program in $C$ with size $n$ is captured.
	That is, $L_C(n):=\max_{I\in C;|V|\leq n}l(I)$. 
\end{definition}

We see that at a sufficiently high level, namely at most at level $t=|V|$, any 0--1
program is captured by the $\tth$ level Lasserre relaxation.
In those cases, the optimum of the Lasserre set yields not only an 
approximate optimum, but is the exact optimal value of the original 0--1 problem. 

Note that it could in principle occur that some instance $I=(c,\mathcal{K})$ 
is solved exactly by a Lasserre relaxation of level $t$ strictly less than $l(I)$. 
That is, $\mathcal{K}^*$ is strictly contained in $\Las_t^\pi(\mathcal{K})$, 
but they coincide in direction of the objective vector $c$. We argue that for
classes of 0--1 programs for which the objective $c$ can be chosen arbitrarily,
$L_C(n)$ still is a right notion for the number of Lasserre levels required to
solve instances of $C$ exactly. This is formalized in the lemma below.

\begin{lemma}\label{lem:las_capture}
	For any 0--1 linear program $I=(c,\mathcal{K})$, let $f(I)$ denote the 
	minimum $t$, such that 
	$\max_{x\in \mathcal{K}^*}\innerprod{c}{x}=\max_{x\in\Las_t^\pi(\mathcal{K})}\innerprod{c}{x}$.
	
	Furthermore, let $C$ be a class of 0--1 linear programs, such that, if 
	$I=(c,\mathcal{K})$ is an instance of $C$, then $J=(c',\mathcal{K})$
	is also an instance of $C$ for any choice of $c'\in\QQ^V$.
	
	Then, $L_C(n) = \max_{I\in C;|V|\leq n}f(I)$.
\end{lemma}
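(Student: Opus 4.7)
The plan is to establish the two inequalities separately. For the easy direction $\max_{I\in C;|V|\leq n}f(I)\leq L_C(n)$, I simply note that $f(I)\leq l(I)$ for every $I$, because once $\Las_t^\pi(\mathcal{K})=\mathcal{K}^*$ the two optima $\max_{x\in\Las_t^\pi(\mathcal{K})}\innerprod{c}{x}$ and $\max_{x\in\mathcal{K}^*}\innerprod{c}{x}$ are trivially equal. Taking the maximum over $I\in C$ with $|V|\leq n$ gives the inequality.

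The nontrivial direction $L_C(n)\leq\max_{I\in C;|V|\leq n}f(I)$ requires, for every $I=(c,\mathcal{K})\in C$ with $|V|\leq n$, producing a companion instance $J\in C$ over the same variable set with $f(J)\geq l(I)$. Setting $t^{*}:=l(I)$, I handle $t^{*}=0$ by $J:=I$; otherwise minimality of $t^{*}$ guarantees a witness $x\in\Las_{t^{*}-1}^\pi(\mathcal{K})\setminus\mathcal{K}^{*}$. Since $\mathcal{K}^{*}$ is the convex hull of finitely many $\{0,1\}$-valued points and $x\in\QQ^V$ lies outside this rational polytope, the hyperplane separation theorem yields a \emph{rational} vector $c'\in\QQ^V$ with $\innerprod{c'}{x}>\max_{y\in\mathcal{K}^{*}}\innerprod{c'}{y}$. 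I then set $J:=(c',\mathcal{K})$, which belongs to $C$ by the closure hypothesis on objective vectors.

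To close the argument, I exploit the monotonicity of the Lasserre hierarchy from the preceding lemma: for $t<t^{*}$, any moment vector $y\in\Las_{t^{*}-1}(\mathcal{K})$ witnessing $x\in\Las_{t^{*}-1}^\pi(\mathcal{K})$ restricts to $\powerset_{2t+1}(V)$ to give an element of $\Las_t(\mathcal{K})$, and since every singleton $\{v\}$ lies in $\powerset_{2t+1}(V)$ the restriction still witnesses $x\in\Las_t^\pi(\mathcal{K})$. Therefore
\[
\max_{y\in\Las_t^\pi(\mathcal{K})}\innerprod{c'}{y}\;\geq\;\innerprod{c'}{x}\;>\;\max_{y\in\mathcal{K}^{*}}\innerprod{c'}{y}
\]
for every $t<t^{*}$, which forces $f(J)\geq t^{*}=l(I)$ and completes the proof. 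The only point I expect to require care is the application of the separation theorem at the rational level; this is unproblematic because both $x$ and the vertices of $\mathcal{K}^{*}$ are rational, guaranteeing a rational separating $c'$ and hence that $J$ is a legitimate instance of $C$.
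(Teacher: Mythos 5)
Your proof is correct and follows the same core idea as the paper's: both exploit the closure of $C$ under changing objective vectors together with a separating hyperplane for the witness $x\in\Las_t^\pi(\mathcal{K})\setminus\mathcal{K}^*$ to build a companion instance $J=(c',\mathcal{K})$ with large $f(J)$. The paper frames this as a contradiction while you give a direct construction, and you are somewhat more careful with the monotonicity step (which in the paper is implicit in the nesting $\Las_0\supseteq\cdots\supseteq\Las_{|V|}$), but these are presentational rather than substantive differences.
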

\begin{proof}
	Assume the claim is false. Then we can pick an $n$, such that 
	$L_C(n) > \max_{I\in C;|V|\leq n}f(I)$, and define $k:=\max_{I\in C;|V|\leq n}f(I)$.
	Now pick an instance $I=(c,\mathcal{K})$ that maximizes the right hand 
	side, i.e.\ $f(I)=k$. In particular, $l(I)>k$, and there is some direction $c'$ 
	where the maxima of the $k$th Lasserre relaxation and the convex hull of
	integer solutions do not match, i.e.\ 
	$\max_{x\in \mathcal{K}^*}\innerprod{c'}{x}<\max_{x\in\Las_k^\pi(\mathcal{K})}\innerprod{c'}{x}$.
	However by assumption the instance $J=(c',\mathcal{K})$ is also in $C$, 
	and $f(J)$ must be larger than $k$. This is a contradiction to the definition
	of $k$.
\end{proof}

Note that the class of $\VCSP$s satisfies the condition of the above lemma: If
an instance of $\VCSP(D,\Gamma)$ is represented by a 0--1 linear program 
$(c,\mathcal{K})$, then any other cost vector $c'$ can be achieved by changing
the weights on the constraints.

Our main result establishes a dichotomy for $\VCSP$s with respect to $L_C(n)$: 
For every $(D,\Gamma)$, $C=\VCSP(D,\Gamma)$, either $L_C(n)=0$ ($\VCSP(D,\Gamma$
is solved by the basic linear program relaxation); or $L_C(n)\in\Omega(n)$.

\subsection{Logic}\label{sec:logic}

We define the logical notions used throughout the paper. 

A relational \emph{vocabulary} $\tau$ is a finite sequence of relation and constant symbols 
$(R_1, \dots, R_k, c_1, \dots, c_l)$, where every relation symbol $R_i$ has a
fixed arity $a_i \in \NN$. A structure
$\struct A = (\univ A, R_1^{\struct A}, \dots, R_k^{\struct A}, c_1^{\struct A}, \dots, c_l^{\struct A})$
over the signature $\tau$ (or a \emph{$\tau$-structure}) consists of a non-empty
set $\univ A$, called the \emph{universe} of $\struct A$, together with
relations $R_i^{\struct A} \subseteq \univ A^{a_i}$ and constants
$c_j^{\struct A} \in \univ A$ for each $1 \leq i \leq k$ and $1 \leq j \leq l$.
Members of the set $\univ A$ are called the \emph{elements} of $\struct A$ and 
we define the \emph{size} of $\struct A$ to be the cardinality of its universe,
often written as $|\struct A|$.
	
\subsubsection{Fixed-point Logic with Counting}

Fixed-point logic with counting (\FPC) is an extension of inflationary 
fixed-point logic with the ability to express the cardinality of definable sets.   Here we give a bare-bones definition of the logic.  For more details, we refer the reader
to~\cite{EF99, Lib04}. 
The logic has two sorts of first-order variables: \emph{element variables}, 
which range over elements of the structure on which a formula is interpreted in 
the usual way, and \emph{number variables}, which range over some initial 
segment of the natural numbers. We usually write element variables with lower-case 
Latin letters $x, y, \dots$ and use lower-case Greek letters $\mu, \eta, \dots$ 
to denote number variables.

The atomic formulas of $\FPC[\tau]$ are all formulas of the form $\mu
= \eta$ or $\mu \le \eta$, where $\mu, \eta$ are number variables; $s
= t$ where $s,t$ are element variables or constant symbols from
$\tau$; and $R(t_1, \dots, t_m)$, where each $t_i$ is either an
element variable or a constant symbol and $R$ is a relation symbol
(i.e.\ either a symbol from $\tau$ or a relational variable) of arity
$m$.  Each relational variable of arity $m$ has an associated type
from $\{\mathrm{elem},\mathrm{num}\}^m$.  The set $\FPC[\tau]$ of 
\emph{$\FPC$ formulas} over $\tau$ is built up from the atomic formulas by 
applying an inflationary fixed-point operator $[\ifpop_{R,\tup x}\phi](\tup t) $;
forming \emph{counting terms} $\countingTerm{x} \phi$, where $\phi$ is a formula
and $x$ an element variable; forming formulas of the kind $s = t$ and $s \le t$ 
where $s,t$ are number variables or counting terms; as well as the standard 
first-order operations of negation, conjunction, disjunction, universal and 
existential quantification. Collectively, we refer to element variables and 
constant symbols as \emph{element terms}, and to number variables and counting 
terms as \emph{number terms}.

For the semantics, number terms take  values in $\{0,\ldots,n\}$,
where $n= \univ{A}$ and
element terms take values in $\dom(\struct A)$. The semantics of atomic formulas,
fixed-points and first-order operations are defined as usual (c.f.,
e.g., \cite{EF99} for details), with comparison of number terms
$\mu \le \eta$ interpreted by comparing the corresponding integers in
$\{0,\ldots,n\}$. Finally, consider a counting term of the form
$\countingTerm{x}\phi$, where $\phi$ is a formula and $x$ an element
variable. Here the intended semantics is that $\countingTerm{x}\phi$
denotes the number (i.e.\ the element of $\{0,\ldots,n\}$) of elements that
satisfy the formula $\phi$.

Throughout the paper, we make frequent use of the Immerman-Vardi theorem~\cite{EF99},
which establishes that fixed-point logic can
express all polynomial-time properties of finite ordered structures.  
It follows that in $\FPC$ we can express all polynomial-time relations on the 
number domain. 

We write $\Ck$ for the fragment of first-order logic with counting quantifiers 
consisting of those formulas that can be written using at most $k$ distinct 
variables.  It is easy to see  that any structure with $n$ elements can be 
described up to isomorphism by a formula using no more than $n$ variables. 
It follows that any collection of structures, each of which has no more than $n$ 
elements,  can also be characterized up to isomorphism by a formula with no more 
than $n$ variables. 

The minimum number of variables needed to define a class of structures in $\Ck$
turns out to be a useful measure of complexity. This motivates
the definition of the \emph{counting width} of a class.

\begin{definition}\label{def:counting-width}
For any class of structures $\mathcal{C}$, the \emph{counting width} of 
$\mathcal{C}$ is the function $\nu_\mathcal{C}:\NN\rightarrow\NN$ where  $\nu_\mathcal{C}(n)$ is the minimum value $k$ such that
	there is a formula $\phi$ in $\Ck$, for which any structure $\struct A$ with 
	$|\dom(\struct{A})|\leq n$, it holds $\struct{A}\models \phi\Leftrightarrow \struct{A}\in\mathcal{C}$.
\end{definition}

It is clear that $\nu_{\mathcal{C}} = \Omega(n)$ for any class $\mathcal{C}$.
It is known that if $\mathcal{C}$ is definable in \FPC, then $\nu_\mathcal{C}$ 
is bounded by a constant (see~\cite{Ott97}). 
The converse is not true in general as there are even undecidable classes $\mathcal{C}$ 
for which  $\nu_\mathcal{C}$ is bounded by a constant. 
However, the converse holds in special cases, such as for constraint satisfaction
problems. Here we have a dichotomy: every $\mathcal{C}=\CSP(D,\Gamma)$ is either definable in \FPC or 
has unbounded $\nu_\mathcal{C}$.  For an explanation see~\cite{DW15} where this 
result is extended to finite valued CSPs.  

In Section~\ref{sec:linbound}, we will show that the counting width of finite-valued
$\CSP$s is either bounded by a constant, or is $\Omega(n)$.  We use this for 
our main result to establish a similar dichotomy on  the number of levels 
of the Lasserre hierarchy needed to capture the 0--1 linear programs 
coding instances of $\VCSP$s.

\subsubsection{Interpretations}
	
We frequently consider ways of defining one structure within another in some 
logic $\logic L$, such as first-order logic or $\FPC$. 
Consider two signatures $\sigma$ and $\tau$ and a logic $\logic L$. 
An \emph{$m$-ary $\logic L$-interpretation of $\tau$ in $\sigma$} 
is a sequence of formulae of $\logic L$ in vocabulary $\sigma$ consisting of:
(i) a formula $\delta(x)$;
(ii) a formula $\varepsilon(x, y)$;
(iii) for each relation symbol $R \in \tau$ of arity $k$, a formula $\phi_R(x_1, \dots, x_k)$; and
(iv) for each constant symbol $c \in \tau$, a formula $\gamma_c(x)$,
where each $x$, $y$ or $x_i$ is an $m$-tuple of free variables. 
We call $m$ the \emph{width} of the interpretation. 
We say that an interpretation $\Theta$ associates a $\tau$-structure 
$\struct B$ to a $\sigma$-structure $\struct A$ if there is a 
surjective map $h$ from the $m$-tuples 
$\{ a \in \univ{A}^m \mid \struct A \models \delta[a] \}$ 
to $\struct B$ such that:

\begin{itemize}
	\item $h(a_1) = h(a_2)$ if, and only if, $\struct A \models \varepsilon[a_1, a_2]$;
	
	\item $R^\struct{B}(h(a_1), \dots, h(a_k))$ if, and only if, $\struct A \models \phi_R[a_1, \dots, a_k]$;
	
	\item $h(a) = c^\struct{B}$ if, and only if, $\struct A \models \gamma_c[a]$.
\end{itemize}

\noindent 
Note that an interpretation $\Theta$ associates a $\tau$-structure with 
$\struct A$ only if $\varepsilon$ defines an equivalence relation 
on $\univ{A}^m$ that is a congruence with respect to the relations defined by 
the formulae $\phi_R$ and $\gamma_c$. In such cases, however, $\struct B$ is 
uniquely defined up to isomorphism and we write $\Theta(\struct A) \defeq \struct B$. 
Throughout this paper, we will often use interpretations where $\varepsilon$ is simply defined
as the usual equality on $a_1$ and $a_2$. In these instances, we omit the explicit
definition of $\varepsilon$.

The notion of interpretations is used to define logical reductions. Let $C_1$ and $C_2$
be two classes of $\sigma$- and $\tau$-structures respectively. We say that $C_1$ \emph{$\logic L$-reduces}
to $C_2$ if there is an $\logic L$-interpretation $\Theta$ of $\tau$ in $\sigma$, such that
$\Theta(\struct{A})\in C_2$ if and only if $\struct{A}\in C_1$, and we write $C_1\leq_{\logic L}C_2$.

It is not difficult to show that formulas of \FPC compose with \FPC-reductions
in the sense that, given an interpretation $\Theta$ of $\tau$ in $\sigma$ and 
a $\tau$-formula $\phi$, we can define a
$\sigma$-formula $\phi'$ such that $\struct A \models \phi'$ if, and
only if, $\Theta(\struct A) \models \phi$. Note that if $\phi$ uses $k$ variables,
the composition $\phi'$ may contain up to $m\cdot k$ many variables, where $m$ is
the width of $\Theta$.
Likewise, interpretations themselves
compose. That is, given interpretations $\Theta$ of $\tau$ in $\sigma$, and
$\Sigma$ of $\sigma$ in $\rho$, we can obtain an interpretation $\Theta'$ of
$\tau$ in $\rho$ by composition: $\Theta'$ consists of the functions of $\Theta$
where the relation symbols of $\sigma$ are instead replaced by the corresponding
$\rho$-formulas in $\Sigma$. 

Finally, dealing with \FPC-reductions allows us to track counting width in the following way.

\begin{proposition}\label{prop:fpc_nu}
	Let $C_1$ and $C_2$ be two classes of structures, such that $C_1\leq_{\FPC}C_2$
	by some \FPC-reduction $\Theta$. Furthermore, let $\theta:\NN\rightarrow\NN$
	be defined as $\theta(n)=\max_{\struct{A}\in C_1; |\struct{A}|\leq n}|\Theta(\struct{A})|$.
	Then $\nu_{C_1}(n)\in O(\nu_{C_2}(\theta(n)))$.
\end{proposition}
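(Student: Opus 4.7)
The plan is to argue via invariance under counting-logic equivalence rather than by direct syntactic composition. Set $k_2 := \nu_{C_2}(\theta(n))$ and fix a $\mathrm{C}^{k_2}$ sentence $\psi$ defining $C_2$ on $\tau$-structures of size at most $\theta(n)$. I will exhibit a constant $k' = O(k_2)$ such that $C_1$-membership on $\sigma$-structures of size at most $n$ is invariant under $\equiv^{\mathrm{C}^{k'}}$. Once that is established, since only finitely many $\mathrm{C}^{k'}$-types occur at size at most $n$ and each such type is isolated on this range by a single $\mathrm{C}^{k'}$ sentence, taking a finite disjunction over the types contained in $C_1$ defines $C_1 \cap \{\struct{A} : |\struct{A}| \leq n\}$ in $\mathrm{C}^{k'}$, yielding $\nu_{C_1}(n) \leq k' = O(\nu_{C_2}(\theta(n)))$.

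The technical core is the following composition property for FPC-interpretations against counting equivalence: there exist constants $m$ (the width of $\Theta$) and $q$ (depending only on the finitely many FPC-formulas constituting $\Theta$) such that, for every $k$,
\[
\struct{A}_1 \equiv^{\mathrm{C}^{mk+q}} \struct{A}_2 \quad\Longrightarrow\quad \Theta(\struct{A}_1) \equiv^{\mathrm{C}^{k}} \Theta(\struct{A}_2).
\]
I would prove this by substituting the defining FPC-formulas of $\Theta$ in place of each atomic subformula of an arbitrary $\mathrm{C}^k$ sentence over the target vocabulary, while expanding each of its $k$ variables into a block of $m$ fresh variables over the source. Because FPC embeds into the bounded-variable infinitary counting logic $\mathrm{C}^\omega_{\infty\omega}$, each defining formula can be taken in $\mathrm{C}^{q}_{\infty\omega}$ for some constant $q$, contributing only the additive $q$. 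On finite structures $\mathrm{C}^k_{\infty\omega}$-equivalence coincides with $\mathrm{C}^k$-equivalence (the number of $\mathrm{C}^k$-types being finite), so a standard pebble-game transfer yields the stated implication.

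With this in hand the proposition follows quickly. For any $\struct{A}$ with $|\struct{A}| \leq n$ we have $|\Theta(\struct{A})| \leq \theta(n)$ by definition of $\theta$, hence $\Theta(\struct{A}) \models \psi$ iff $\Theta(\struct{A}) \in C_2$ iff $\struct{A} \in C_1$, the last equivalence being the defining property of the reduction $\Theta$. If $\struct{A}_1 \equiv^{\mathrm{C}^{mk_2+q}} \struct{A}_2$ with both of size at most $n$, the composition property gives $\Theta(\struct{A}_1) \equiv^{\mathrm{C}^{k_2}} \Theta(\struct{A}_2)$, so they agree on $\psi$ and therefore on membership in $C_1$. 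Setting $k' := mk_2 + q$ discharges the invariance requirement and finishes the proof.

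The main obstacle is the composition property with linear blow-up $mk+q$. For first-order interpretations this is a routine pebble-game argument; the FPC case additionally depends on the embedding of FPC into $\mathrm{C}^\omega_{\infty\omega}$, which is standard but must be applied uniformly across the finitely many defining formulas of $\Theta$ in order to obtain a single additive constant $q$ rather than a bound that grows with $k_2$.
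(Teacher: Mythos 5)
Your proof is correct, but it takes a more indirect route than the paper's. The paper argues syntactically: take the $\Ck$-formula $\phi$ defining $C_2$ on size-$\leq\theta(n)$ structures, substitute the defining formulas of $\Theta$ into $\phi$, expand each target variable into an $m$-tuple of source variables, and declare the result to be a $\mathrm{C}^{mk}$ formula defining $C_1$ on size-$\leq n$ structures. You instead argue semantically: you prove a transfer property at the level of counting equivalence ($\struct{A}_1 \equiv^{\mathrm{C}^{mk+q}} \struct{A}_2$ implies $\Theta(\struct{A}_1) \equiv^{\Ck} \Theta(\struct{A}_2)$), deduce that $C_1$-membership on size-$\leq n$ structures is $\equiv^{\mathrm{C}^{k'}}$-invariant, and then recover a $\mathrm{C}^{k'}$ sentence via type-isolation and a finite disjunction. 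Both approaches hinge on the same underlying facts, but yours is more explicit about a subtlety that the paper's proof glosses over: since $\Theta$ consists of \FPC{} formulas, direct substitution into a $\Ck$ sentence yields, after translating \FPC{} into bounded-variable infinitary counting logic, a $\mathrm{C}^{mk+q}_{\infty\omega}$ sentence rather than a finite $\mathrm{C}^{mk}$ sentence; converting this to a finite counting-logic formula correct on bounded-size structures is exactly what your type argument supplies, while the paper leaves it implicit. Your additive constant $q$, accounting for the variables consumed by the defining formulas of $\Theta$ and by the handling of $\varepsilon$-equivalence classes and counting quantifiers under the quotient, is also a refinement over the paper's bare $mk$, though it disappears into the $O$-notation either way. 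One small point worth flagging in a write-up: when you translate a counting quantifier over the target universe to the source, you are counting $\varepsilon$-equivalence classes of $m$-tuples, which costs a few extra variables; this is routine but is where the constant $q$ genuinely earns its keep.
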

\begin{proof}
	Given any structure $\struct A$ (in the vocabulary of $C_1$) of size $n$, the corresponding
	structure $\Theta(\struct A)$ has size at most $\theta(n)$. Let $k:=\nu_{C_2}(\theta(n))$,
	then there is a formula $\phi$ in $C^k$ for which it holds 
	$\Theta(\struct A)\models \phi\Leftrightarrow \Theta(\struct A)\in C_2$. By
	composing $\phi$ with $\Theta$, we obtain a formula $\phi'$ in $C^{mk}$
	that satisfies $\struct A\models \phi' \Leftrightarrow \struct A\in C_1$, where
	$m$ is the width of $\Theta$. This constant factor is accounted 
	for in the $O$-notation.
\end{proof}

\subsubsection{Representation}
In order to discuss definability of constraint satisfaction and optimization
problems, we need to fix a representation of instances of these problems as 
relational structures. Here, we describe the representation we use, adapted from~\cite{adh15:jacm}.

\textbf{Numbers and Vectors.} We represent an integer $z$ as a 
relational structure in the following way. Let $z=s\cdot x$, with $s\in\{-1,1\}$
being the sign of $z$, and $x\in\mathbb{N}$, and let $b\geq \lceil\log_2(x)\rceil$.
We represent $z$ as the structure $\struct{z}$ with universe $\{1,\ldots,b\}$
over the vocabulary $\tau_{\mathbb{Z}}=\{X,S,<\}$, where $<$ is interpreted 
the usual linear order on $\{1,\ldots,b\}$; $S^{\struct{z}}$ is a unary relation
where $S^{\struct{z}}=\emptyset$ indicates that $s=1$, and $s=-1$ otherwise;
and $X^{\struct{z}}$ is a unary relation that encodes the bit representation
of $x$, i.e.\ $X^{\struct{z}}=\{k\in\{1,\ldots,b\} \mid \BIT(x,k)=1\}$. In a similar
vein, we represent a rational number $q=s\cdot \frac{x}{d}$ by a structure
$\struct{q}$ over the domain $\tau_{\mathbb{Q}}=\{X,D,S,<\}$, where the
additional relation $D^{\struct{q}}$ encodes the binary representation of the
denominator $d$ in the same way as before.

In order to represent vectors and matrices over integers or rationals, we
have multi-sorted universes.
Let $T$ be a non-empty set, and let $v$ be a 
vector of integers indexed by $T$.  We represent $v$ as a structure $\struct{v}$
with a two-sorted universe with an index sort  $T$, and bit 
sort $\{1,\ldots,b\}$, where $b\geq\lceil\log_2(|m|)\rceil$,
$m=\max_{t\in T}v_t$, over the vocabulary $(X,D,S,<)$. Now, the
relation $S$ is of arity $2$, and $S^{\struct{v}}(t,\cdot)$ encodes the sign
of the integer $v_t$ for $t\in T$.  Similarly, $X$ is a binary relation interpreted
as $X^{\struct{v}}=\{(t,k)\in T\times \{1,\ldots,b\}\mid \BIT(v_t,k)=1\}$.
In order to represent 
matrices $M\in \mathbb{Z}^{T_1\times T_2}$, indexed by two sets $T_1, T_2$,
we have three-sorted universes with
two sorts of index sets, or simply a single index set that consists of pairs.
The generalization to rationals carries over from 
the numbers case. 
We write $\vocvec$ to denote the vocabulary for vectors over $\QQ$ and $\vocmat$ for the vocabulary for matrices over $\QQ$.

\textbf{Linear and Semidefinite Programs.} We represent linear or semidefinite
programs in their respective standard forms in the following way.
An instance of a linear program in standard form is 
given by a constraint matrix $A\in \mathbb{Q}^{M\times V}$,
and vectors $b\in\mathbb{Q}^M, c\in\mathbb{Q}^V$. 
Hence, we represent it as a structure over
the vocabulary $\tau_{\LP}=\tau_{vec}\dunion\tau_{mat}$. 

Likewise, a semidefinite program in conic standard form is specified by a
matrix-valued vector $\mathcal{A}\in\QQ^{M\times(V\times V)}$, an objective matrix
$C\in\QQ^{V\times V}$, and a vector $b\in\QQ^M$. This is again represented as
a structure over $\tau_{\SDP}=\tau_{vec}\dunion\tau_{mat}$. Sometimes it is more 
convenient to consider an \SDP\ in inequality standard form, which is specified
by a matrix $Z\in\QQ^{M\times M}$, a matrix-valued vector $\mathcal{Y}\in\QQ^{V\times(M\times M)}$ and an
objective vector $c\in\QQ^V$. Note that the vocabulary for
both representations are the same, and that the conversion between
the two standard forms can be expressed as an \FPC\ interpretation, as it only involves simple
substitution and rearranging of variables.

We can now state the definability result from~\cite{adh13:lics}, to the 
effect that there is an $\FPC$ interpretation that can define solutions to 
linear programs. We will show a generalization of the result to semidefinite 
programs in Theorem~\ref{thm:sdpdefinable}.
\begin{theorem}[Theorem 11, \cite{adh13:lics}]\label{thm:lpdefinable}
	There is an $\FPC$-interpretation $\Phi$ of $\tau_\QQ \dunion \vocvec$ in $\tau_{LP}$ that
	does the following:
	
	Let instances of a linear program be given by $(A,b,c)$ with
	$A\in \mathbb{Q}^{M\times V}$, $b\in\mathbb{Q}^M$, and $c\in\mathbb{Q}^V$.
	Its feasible region is denoted by $\mathcal{F}_{A,b}$.
	Let $\struct I$ be the relational representation of this LP.
	
	Then, $\Phi(\struct I)$ defines a relational representation of 
	$(f,v)$, with $f\in\QQ$, $v\in\QQ^V$, such that 
	\begin{itemize}
	  \item $f=1$ if, and only if, $\max_{x\in \mathcal{F}_{A,b}} c^Tx$ is unbounded;
	  \item $v\notin \mathcal{F}_{A,b}$ if, and only if, there is no feasible solution;
	  \item and $f=0, v=\mathrm{argmax}_{x\in \mathcal{F}_{A,b}} c^Tx$ otherwise.
	\end{itemize}
\end{theorem}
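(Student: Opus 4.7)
The plan is to implement the ellipsoid method of Grötschel--Lovász--Schrijver in \FPC, exploiting Theorem~\ref{thm:GLSopt} that reduces optimization to separation, together with the fact that for a rational polytope the weak and strong versions of both problems coincide in polynomial time. The overall flow is: first construct an \FPC-definable weak separation oracle for the feasible polyhedron $\mathcal{F}_{A,b}$; then implement the ellipsoid loop as an inflationary fixed point that maintains an ellipsoid $(c_t, P_t)$ shrinking around $\mathcal{F}_{A,b}$; after polynomially many iterations extract an approximate optimum and round it to the exact rational vertex that achieves the maximum (or else certify infeasibility or unboundedness).

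The numerical ingredients reside almost entirely on the number sort, which is linearly ordered, so by Immerman--Vardi any polynomial-time relation on bit representations of rationals, vectors, and matrices is \FPC-definable; in particular, rational arithmetic, matrix--vector multiplication, the ellipsoid update formulas (including the rank-one shape-matrix update involving a square root, approximated by Newton iteration to any polynomially small precision), and the comparisons used to test feasibility can all be expressed by $\FPC$-formulas acting on the $\tau_\QQ$-encoded components of the instance. The separation oracle for an \LP\ is itself immediate: for a candidate point $c_t$, compute $b_i - \innerprod{A_i}{c_t}$ for every $i\in M$ and test if any is positive. The subtlety is that the classical algorithm then picks \emph{one} violated constraint, a choice that \FPC\ cannot make in a symmetry-breaking way. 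I would resolve this the same way \cite{adh13:lics} does for polyhedra: instead of choosing, aggregate by summing the normals $A_i$ of the violated constraints (optionally weighted by the violation amount $b_i - \innerprod{A_i}{c_t}$). Since summation over an element sort is naturally expressible using counting terms, the aggregate is \FPC-definable, automorphism-invariant, and still yields a valid separating hyperplane for $c_t$ against $\mathcal{F}_{A,b}$, at the cost only of constants in the convergence analysis.

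With the oracle in hand, the ellipsoid loop is packaged as a single inflationary fixed-point over a relation that encodes the bit representations of $(c_t, P_t)$ at step $t$; the number of iterations needed for the ellipsoid volume to drop below the Khachiyan threshold is polynomial in the bit length of $(A,b)$, so the induction terminates within the number sort. When the loop ends we have either a point $c_*$ inside (a slight enlargement of) $\mathcal{F}_{A,b}$ that is approximately optimal against $c$, or a certificate of emptiness via a final non-separating aggregate, or an unboundedness certificate obtained by applying the same procedure to the recession direction. From an approximately optimal point the exact rational maximizer and the flag $f\in\{0,1\}$ for unboundedness are recovered by the standard rounding via simultaneous Diophantine approximation, which is a polynomial-time procedure on bit-encoded rationals and hence \FPC-definable on the number sort. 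All three outputs of the theorem (the flag, the vector, and the guarantee that the vector is outside $\mathcal{F}_{A,b}$ exactly when the instance is infeasible) are then packaged into an interpretation of $\tau_\QQ \dunion \vocvec$ in $\tau_{\LP}$.

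The main obstacle is the symmetry management around the separation oracle and the rounding step: the ellipsoid method is usually described with arbitrary tie-breaking, and one must verify that replacing every such choice by a symmetric aggregate (sums of violated normals, sums of tied rounding candidates, canonical choice of an optimal basis via aggregation over all optimal bases) still yields convergence in polynomially many steps and still produces a well-defined vertex of $\mathcal{F}_{A,b}$. The convergence analysis is the same as in~\cite{GLS88} up to constant factors absorbed by the aggregation; the correctness of the symmetric rounding is what requires the most care, and it is the technical heart of the interpretation.
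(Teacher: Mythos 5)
The statement you're proving is Theorem~11 of~\cite{adh13:lics}, which this paper cites but does not re-prove; however the paper sketches the method in Section~\ref{sec:opt} when it adapts the argument to SDPs, and your proposal diverges from it at the central structural point. You correctly spot that FPC cannot pick one violated constraint, and you correctly adopt the fix of summing the violated normals. But you then claim to maintain the ellipsoid state $(c_t,P_t)$ directly as an inflationary fixed point over the unordered index set $V$, appealing to Immerman--Vardi because ``the numerical ingredients reside almost entirely on the number sort.'' That is the gap: the number sort is ordered, but $c_t\in\QQ^V$ and $P_t\in\QQ^{V\times V}$ are indexed by the element sort, which carries no order, and Immerman--Vardi says nothing about computations over unordered indices. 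The actual device in~\cite{adh13:lics}, and what Section~\ref{sec:opt} of this paper reuses, is the notion of a \emph{folding}: one iteratively refines an index map $\sigma:V\to[k]$ (a partition of $V$), using the separation oracle's output only to decide whether to split classes; once the partition is stable under the oracle, one folds the polytope into the fully ordered space $\QQ^k$, and \emph{there} the entire ellipsoid method --- including the precision management and the Diophantine rounding to an exact vertex --- is a single polynomial-time procedure on an ordered structure and is FPC-definable wholesale by Immerman--Vardi. The ellipsoid method never runs over the unordered $V$.

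This matters most at the step you yourself flag as ``the technical heart'': recovering an exact rational optimizer. Simultaneous Diophantine approximation and optimal-basis selection are not canonical; they involve tie-breaking. In the unordered setting your proposed fix of ``aggregation over all optimal bases'' is not developed and is not clearly automorphism-invariant (the average of vertices of the optimal face is a feasible point and is invariant, but it need not be a vertex, need not be what the Diophantine rounding produces, and extracting the optimal face in the first place is exactly the kind of discrete computation that wants an order). The folding approach dissolves this difficulty: after folding, the problem is over $[k]$, ties can be broken lexicographically, and the unfolded answer $[\hat x]^{-\sigma}$ is automatically fixed by every automorphism of the instance because it respects $\sigma$. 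So your proof strategy is not the one the paper relies on, and as written the direct-IFP-plus-rounding route has a hole precisely where you anticipated one would be hardest.
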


\textbf{CSPs.} 
For a fixed domain $D$, and a constraint language $\Gamma$, we can represent
an instance of $\CSP(D,\Gamma)$ in a natural way. Namely, the vocabulary $\tau_{\CSP(\Gamma)}$
consists of all relations in $\Gamma$. An instance $I=(V,C)$ is then represented
as the $\tau_\Gamma$-structure $\struct I=(V, (R^{\struct I})_{R\in\Gamma})$, 
where the universe is set to the set of variables $V$, and 
$s\in R^{\struct I}$ if there is a constraint $c=(s,R)$ in the constraint 
set $C$.

For the finite-valued variant, we define the vocabulary $\tau_{\VCSP(\Gamma)}$
as $\tau_{\VCSP(\Gamma)}=\{(R_f)_{f\in\Gamma}, W, <\}$. An instance $I=(V,C)$
is then represented as a structure $\struct I$ with a three-sorted universe:
A sort for variables $V$; a sort of constraints $C$; and a bit sort $\{1,\ldots,b\}$
for some sufficiently large $b$. The relation $R^{\struct I}_f\subseteq V^{\ar(f)}\times C$ then contains 
a tuple $(s,c)$ if $C$ contains a constraint of the form $(s,f,w)$. Similarly,
the relation $W^\struct I\subseteq C\times {1,\ldots,b}$ encodes the weight of 
each constraint $c=(s,f,w)$ in the relational representation of integers, i.e.\ 
$W^I(c,\cdot)=\{k\in\{1,\ldots,b\}\mid \BIT(w,k)=1\}$. Finally, $<$ is again just
interpreted as the usual natural order on $\{1,\ldots,b\}$.

\section{Main result}\label{sec:result}

Our main result establishes a dichotomy for $\VCSP$ problems: Either 
$\VCSP(D,\Gamma)$ is
tractable, and every instance is captured by its basic linear programming relaxation;
or there are instances that are only captured after $\Omega(n)$ 
levels of the Lasserre hierarchy, where $n$ is the size of the instance. 
As a special case, 
we obtain the same dichotomy for the class of $\MAXCSP$ problems.

In the following, recall that we write $L_C(n)$ to denote the minimum number $t$,
such that the Lasserre relaxation at level $t$ suffices to capture all instances
of $C$ of size at most $n$, and we use $\nu_C$ to denote the counting width of a
class $C$. For the sake of legibility, we use
$L_\Gamma$ as a shorthand for $L_{\VCSP(D,\Gamma)}$ and $\nu_{\Gamma}$ as 
shorthand for $\nu_{\VCSP(D,\Gamma)}$.

\begin{theorem}\label{thm:main}
	For any $\VCSP(D,\Gamma)$ either 
	every instance $I$ is solved by $\BLP(I)$; or 
	$L_\Gamma(n) \in \Omega(n)$.
\end{theorem}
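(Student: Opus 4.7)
The plan is to combine the Thapper--\v{Z}ivn\'y dichotomy (Theorem~\ref{thm:vcsp}) with the three ingredients developed in Sections~\ref{sec:linbound}, \ref{sec:sdp}, and~\ref{sec:lasserre}. By Theorem~\ref{thm:vcsp}, either every instance $I$ is solved by $\BLP(I)$ (which is the first alternative of the claim), or $\MAXCUT$ reduces in polynomial time to $\VCSP(D,\Gamma)$. I focus on the hard case and prove the inequality $\nu_\Gamma(n) \in O(L_\Gamma(n))$; combined with the linear lower bound $\nu_\Gamma(n) \in \Omega(n)$ that Section~\ref{sec:linbound} establishes for the hard case, this yields $L_\Gamma(n) \in \Omega(n)$.

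The bound $\nu_\Gamma(n) \in O(L_\Gamma(n))$ is obtained by chaining two $\FPC$-definability results. First, Section~\ref{sec:lasserre} provides, for each level $t$, an $\FPC$-interpretation of width $O(t)$ that constructs the semidefinite program $\Las_t(\mathcal{K})$ from a relational encoding of the 0--1 integer program associated with a $\VCSP$ instance. The new Lasserre variables are indexed by subsets of size at most $2t+1$, represented by tuples of length $O(t)$, which is what forces the width to be linear in $t$. Second, Theorem~\ref{thm:sdpdefinable} of Section~\ref{sec:sdp} supplies an $\FPC$-interpretation that maps the relational representation of an SDP to the (approximate) optimum of its objective.

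Composing these with $t \defeq L_\Gamma(n)$ yields an $\FPC$-interpretation of width $O(L_\Gamma(n))$ that maps a $\VCSP$ instance $I$ of size at most $n$ to the value $\max\{\innerprod{c}{x} : x \in \Las_t^\pi(\mathcal{K})\}$. By Lemma~\ref{lem:las_capture}, which applies because varying the constraint weights in a $\VCSP$ instance realizes any desired nonnegative cost vector in its 0--1 encoding, this optimum equals $\Opt(I)$; and because $\Opt(I)$ is an integer, the SDP error parameter can be set below $1/2$ so that $\FPC$ (using the Immerman--Vardi theorem on the number sort) recovers and thresholds it exactly. The resulting formula decides the decision version of $\VCSP(D,\Gamma)$ on instances of size at most $n$, and by Proposition~\ref{prop:fpc_nu} it lies in $\Ck$ for some $k \in O(L_\Gamma(n))$, establishing $\nu_\Gamma(n) \in O(L_\Gamma(n))$ as desired.

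The main obstacle is the second ingredient above, the $\FPC$-definability of semidefinite optimization developed in Section~\ref{sec:sdp}. The polyhedral case from~\cite{adh13:lics} is driven by an exact polynomial-time separation oracle, but the SDP analogue sketched in Algorithm~\ref{alg:sdpsep} relies on an exact eigendecomposition of a rational matrix, which is not $\FPC$-expressible. The core technical work is to replace this with an $\FPC$-definable \emph{weak} separation oracle and to propagate the resulting $\delta$-errors carefully through an $\FPC$-implementation of the ellipsoid method, so that the primal-feasibility and objective bounds delivered at the end are tight enough to recover integer-valued optima. A secondary bookkeeping point, implicit in the paragraph above, is that the linear dependence of the Lasserre interpretation's width on $t$ is exactly what is needed, via Proposition~\ref{prop:fpc_nu}, to translate the $\Omega(n)$ counting-width lower bound into an $\Omega(n)$ Lasserre-level lower bound without losing any factor.
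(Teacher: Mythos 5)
Your proposal is correct and follows essentially the same decomposition as the paper: it splits the argument into the two key lemmas (the counting-width-to-Lasserre-level bound $L_\Gamma \in \Omega(\nu_\Gamma)$, proved by composing the width-$O(t)$ Lasserre interpretation with the $\FPC$-definability of SDP optimization and rounding; and the linear counting-width lower bound for the hard cases via the Thapper--\v{Z}ivn\'y dichotomy and the chain of reductions through $3\LIN$), which is exactly how the paper proceeds. The only cosmetic difference is that you state the first lemma directly as $\nu_\Gamma \in O(L_\Gamma)$ whereas the paper phrases its proof by contradiction, and your invocation of Lemma~\ref{lem:las_capture} is not strictly needed in that direction since the equality of the level-$L_\Gamma(n)$ Lasserre optimum with $\Opt(I)$ already follows from the definition of capture.
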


\begin{corollary}
	For any $\MAXCSP(D,\Gamma)$ either 
	every instance $I$ is solved by $\BLP(I)$; or 
	$L_{\MAXCSP(D,\Gamma)}(n) \in \Omega(n)$.
\end{corollary}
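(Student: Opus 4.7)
The plan is to derive the corollary directly from Theorem~\ref{thm:main} by way of the embedding of $\MAXCSP$ into $\VCSP$ described in the example immediately after the definition of finite-valued constraint languages. Given a constraint language $\Gamma$ over $D$, set $\Gamma' := \{f_R : R \in \Gamma\}$ where $f_R \colon D^{\ar(R)} \to \{0,1\}$ is the indicator of $R$. The obvious bijection that sends a (weighted) $\MAXCSP(D,\Gamma)$ instance $I = (V,C)$ with constraints $(s,R,w)$ to the $\VCSP(D,\Gamma')$ instance $I' = (V,C')$ with constraints $(s,f_R,w)$ preserves the variable set, the value of every assignment (since $w \cdot [h(s) \in R] = w \cdot f_R(h(s))$), and hence also $\Opt$.

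The next step is to check that this bijection also preserves the 0--1 linear programming formulation described in Section~\ref{sec:csps}. This is immediate from the definition: the variables $\lambda_{c,x}, \mu_{v,a}$, the consistency constraints, the integrality constraints, and the objective $\sum_c \sum_x \lambda_{c,x}\cdot w \cdot f(x)$ are literally identical for $I$ and $I'$. Therefore the feasible polytope $\mathcal{K}$ coincides, and so does $\mathcal{K} \cap \{0,1\}^V$ and $\mathcal{K}^*$. In particular $\BLP(I) = \BLP(I')$ and, for every level $t$, $\Las_t^\pi(\mathcal{K})$ is the same object regardless of whether we view the instance as a $\MAXCSP$ or as a $\VCSP$. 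The cost-vector hypothesis of Lemma~\ref{lem:las_capture} is also preserved, since reweighting the constraints of a (weighted) $\MAXCSP$ realises every admissible objective.

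Combining these observations gives $L_{\MAXCSP(D,\Gamma)}(n) = L_{\VCSP(D,\Gamma')}(n)$ and, moreover, every instance of $\MAXCSP(D,\Gamma)$ is solved by its $\BLP$ if, and only if, every instance of $\VCSP(D,\Gamma')$ is. Applying Theorem~\ref{thm:main} to $(D,\Gamma')$ therefore yields the claimed dichotomy: either every instance of $\MAXCSP(D,\Gamma)$ is solved by its basic linear programming relaxation, or $L_{\MAXCSP(D,\Gamma)}(n) \in \Omega(n)$. There is no real obstacle here beyond the bookkeeping needed to verify that the translation $\Gamma \mapsto \Gamma'$ is compatible with each of the three notions involved (objective value, $\BLP$, and Lasserre projection); all the mathematical content sits in Theorem~\ref{thm:main}.
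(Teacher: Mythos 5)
Your proof is correct and matches the paper's (implicit) derivation exactly: the paper's own Example already records that $\MAXCSP(D,\Gamma) = \VCSP(D,\Gamma')$ via the indicator functions $f_R$, and the corollary is then just a direct specialization of Theorem~\ref{thm:main}. You add the useful sanity check that the identification preserves the 0--1 LP (and hence the polytope $\mathcal{K}$, the $\BLP$, and every $\Las_t^\pi(\mathcal{K})$), and that reweighting keeps the hypothesis of Lemma~\ref{lem:las_capture} satisfied, which is exactly what is needed for the transfer to be more than notational.
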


The key technical lemma here is a bound that relates the level of the 
Lasserre hierarchy required to capture all instances
of a $\VCSP$ to the counting width of its class of decision problems.

\begin{lemma}\label{lem:lasserre_lb}
	For any	$\VCSP(D,\Gamma)$, it holds $L_\Gamma\in \Omega(\nu_\Gamma)$.
\end{lemma}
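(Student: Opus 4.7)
The plan is to argue contrapositively: if $L_\Gamma(n) \leq t$, then the decision problem for $\VCSP(D,\Gamma)$ restricted to instances of size at most $n$ is definable using $O(t)$ variables in counting logic, whence $\nu_\Gamma(n) = O(t)$. Since this yields $L_\Gamma(n) \in \Omega(\nu_\Gamma(n))$, the argument reduces to composing two \FPC-definability results that are developed in Sections~\ref{sec:sdp} and~\ref{sec:lasserre}.

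Because the cost vector of a $\VCSP$ instance may be varied freely through the constraint weights, Lemma~\ref{lem:las_capture} guarantees that whenever $L_\Gamma(n) \leq t$ the $t$-th Lasserre relaxation returns $\Opt(I)$ exactly in the cost direction for every instance $I$ of size at most $n$. The first ingredient, to be supplied in Section~\ref{sec:lasserre}, is an \FPC-interpretation $\Theta_t$ that reads the relational encoding of an instance $I$ and produces the relational encoding of the SDP defining $\Las_t(\mathcal{K}_I)$ together with the corresponding objective. Crucially, because the variables of this SDP are indexed by subsets of $V$ of size at most $2t+1$, those indices can be encoded by tuples of $O(t)$ element variables, so $\Theta_t$ has width $O(t)$.

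The second ingredient is Theorem~\ref{thm:sdpdefinable}, the semidefinite analogue of Theorem~\ref{thm:lpdefinable}, providing an \FPC-formula $\phi_{\SDP}$ with a constant number $k$ of variables that defines a solution of an arbitrary explicitly given SDP to any specified additive precision $\delta$. Since the value of a $\VCSP$ instance is always an integer, taking $\delta < 1/2$ (a constant independent of the input) suffices to recover the exact optimum by rounding. Composing $\phi_{\SDP}$ through $\Theta_t$, and following this with a comparison of the computed optimum against the threshold from the decision version of the $\VCSP$, yields, as in the proof of Proposition~\ref{prop:fpc_nu}, a single formula using at most $O(t)\cdot k = O(t)$ variables that decides $\VCSP(D,\Gamma)$ on all instances of size at most $n$.

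The principal obstacle lies in the construction of $\Theta_t$: one must adapt the \FPC-interpretation for the $\BLP$ from~\cite{DW15} so that, although the size of the produced SDP grows as $n^{O(t)}$, the interpretation itself has width only $O(t)$. I expect to achieve this by representing the row and column indices of the moment matrix directly as tuples of at most $2t+1$ elements of $V$, rather than expanding each moment or slack constraint combinatorially, and by packaging the positive-semidefinite condition into the conic form of the SDP rather than as an explicit family of linear inequalities. Once $\Theta_t$ is in place with the correct width bound, the composition above gives the lemma directly.
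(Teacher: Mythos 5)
Your overall architecture matches the paper's: compose an $O(t)$-width \FPC-interpretation $\Theta_t$ that builds the $t$-th Lasserre SDP from a $\VCSP$ instance with the constant-width \FPC solver for SDPs from Theorem~\ref{thm:sdpdefinable}, derive a formula of width $O(t)$ deciding the $\VCSP$ threshold, and conclude $\nu_\Gamma(n)\in O(L_\Gamma(n))$. Your identification of the width obstacle in $\Theta_t$ and the idea of indexing by tuples of length $O(t)$ is exactly what the paper does in Lemma~\ref{lem:lp_to_sdp}. The appeal to Lemma~\ref{lem:las_capture} is harmless but unnecessary: once $L_\Gamma(n)\leq t$ means $\Las_t^\pi(\mathcal{K})=\mathcal{K}^*$, the optima coincide in every direction.

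There is, however, a genuine gap in the rounding step. You assert that taking $\delta<1/2$, a constant independent of the input, suffices to recover the exact integer optimum. This is false under the paper's definition of the weak optimization problem. The solver returns a point $y$ that is $\delta$-maximal \emph{and} $\delta$-close to $\mathcal{F}$. Being $\delta$-close means $y$ may lie outside the feasible region, and when you evaluate the objective, the overshoot is controlled only by $\norm{c}\cdot\delta$ (decompose $y=x+e$ with $x\in\mathcal{F}$, $\norm{e}\leq\delta$, and apply Cauchy--Schwarz), not by $\delta$ alone. Since $\norm{c}$ for a $\VCSP$ instance grows with the number of constraints and their weights, a constant $\delta$ leaves an objective error that can exceed $1/2$, so rounding can fail. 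This is precisely the issue addressed by Proposition~\ref{prop:rounding}: you need $\delta\leq 1/(4\max\{1,\norm{c}\})$, an input-dependent quantity. The argument survives with this correction because that $\delta$ is still $\FPC$-definable from the instance, but as written your choice of $\delta$ is too coarse and would break the claimed exact recovery.
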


In addition, we prove a counting width dichotomy for $\VCSP(D,\Gamma)$.
This is achieved by connecting the results of ~\cite{DW15} and ~\cite{Atserias20091666}
to show a linear lower bound of $\nu_\Gamma$ for the hard cases of $\VCSP(D,\Gamma)$.

\begin{lemma}\label{lem:linear_lb}
	If there are instances $I$ of $\VCSP(D,\Gamma)$ that are not solved by 
	$\BLP(I)$, then $\nu_\Gamma(n)\in \Omega(n)$.
\end{lemma}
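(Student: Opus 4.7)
The proof plan is to reduce to a known linear counting-width lower bound for classical $\CSP$s via a chain of $\FPC$-interpretations with linear size blow-up, and then apply Proposition~\ref{prop:fpc_nu}. The starting point is Theorem~\ref{thm:vcsp}: under the hypothesis of the lemma, $\MAXCUT$ polynomial-time reduces to $\VCSP(D,\Gamma)$. The $\Omega(n)$ lower bound itself will come from the undefinability results for classical $\CSP$s in~\cite{Atserias20091666}, concretely for $3$-$\XORSAT$ over $\mathbb{Z}_2$ (or a similar system-of-equations $\CSP$), which is shown there to have counting width $\Omega(n)$.

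Concretely, I would proceed in three steps. First, fix a class $\mathcal{C}$ of classical $\CSP$ instances with $\nu_\mathcal{C}(n) \in \Omega(n)$, as supplied by~\cite{Atserias20091666}. Second, exhibit an $\FPC$-interpretation of linear size blow-up from $\mathcal{C}$ into (the $\VCSP$ encoding of) $\MAXCUT$; this can be carried out via the classical local gadget that replaces each constant-arity equation by a constant-size weighted graph encoding its truth-table as a cut, a construction that is uniform and hence directly expressible in $\FPC$ using the ordered bit-sort to index gadget copies. Third, verify that the Thapper--\v{Z}ivn\'y reduction from $\MAXCUT$ into $\VCSP(D,\Gamma)$ guaranteed by Theorem~\ref{thm:vcsp} can itself be rendered as an $\FPC$-interpretation of linear size, following the logical analysis of~\cite{DW15}. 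Composing the two $\FPC$-interpretations yields an $\FPC$-reduction $\mathcal{C} \leq_{\FPC} \VCSP(D,\Gamma)$ of linear size blow-up $\theta(n) \in O(n)$. Proposition~\ref{prop:fpc_nu} then gives $\nu_\mathcal{C}(n) \in O(\nu_\Gamma(\theta(n)))$, and combining this with $\nu_\mathcal{C}(n) \in \Omega(n)$ and $\theta(n) \in O(n)$ yields $\nu_\Gamma(n) \in \Omega(n)$.

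The main obstacle is the third step: certifying that the Thapper--\v{Z}ivn\'y reduction can be cast as an $\FPC$-interpretation with linear blow-up. Their original construction is algebraic, exploiting the absence of certain binary symmetric fractional polymorphisms in a hard $\Gamma$ together with a case analysis on the structure of $\Gamma$. Turning this into a uniform, $\FPC$-definable gadget replacement of linear size requires following the strategy of~\cite{DW15}, where the dichotomy is already recast in a logically uniform fashion; once a uniform, bounded-size gadget is available, its iteration over the instance is straightforward in $\FPC$ using the $\ifpop$ operator and the number sort to index gadget copies.
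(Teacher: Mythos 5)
Your proposal follows essentially the same approach as the paper: both derive the $\Omega(n)$ counting-width lower bound from the systems-of-linear-equations result in~\cite{Atserias20091666}, chain linear-size $\FPC$-reductions to $\MAXCUT$ and then (via Theorem~\ref{thm:vcsp} and~\cite{DW15}) to $\VCSP(D,\Gamma)$, and conclude with Proposition~\ref{prop:fpc_nu}. The only cosmetic difference is that the paper routes through $3\SAT$ as an explicit intermediary (because the gadget in~\cite{DW15} goes from $3\SAT$ to $\MAXCUT$), whereas you propose a direct $3$-$\XORSAT$-to-$\MAXCUT$ gadget; you also correctly identify the linear size bound on the~\cite{DW15} reduction as the step requiring verification.
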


Given the above two lemmas, we obtain as a direct consequence Theorem~\ref{thm:main}.
Hence we devote the remaining sections to proving Lemmas~\ref{lem:lasserre_lb}
and \ref{lem:linear_lb}.

In Section~\ref{sec:linbound} we provide a proof of Lemma~\ref{lem:linear_lb}. 
The main observation is that the relevant reductions described in~\cite{DW15} 
are essentially linear in size, and that we can reduce solving linear systems of
equations over the two-element field to the hard cases of $\VCSP(D,\Gamma)$. In
turn,~\cite{Atserias20091666} shows that these systems of equations have linear
counting width.

The general idea to prove Lemma~\ref{lem:lasserre_lb} follows two main steps. 
In the first, we establish using Theorem~\ref{thm:sdpdefinable} that the optimum
value of a Lasserre SDP can be defined within \FPC, given its explicit vector-matrix
representation. This part will be proved in Section~\ref{sec:sdp}.

In Section~\ref{sec:lasserre}, we then show that from any explicitly given 0--1 LP,
we can define its explicit $t$-th level Lasserre relaxation by an \FPC-interpretation using
only $O(t)$ many variables. With the result of Section~\ref{sec:sdp} this means
that there is a \FPC-formula in $O(t)$ variables that defines
the solution to the given 0--1 program. On the other hand, if we know that the
solution to some 0--1 program can not be defined using fewer than $\nu_{\Gamma}(n)$
many variables, then this implies a lower bound for the value of $t$ of also $\Omega(\nu_{\Gamma}(n))$.
This then concludes the proof of Lemma~\ref{lem:lasserre_lb}.

\section{Counting width of finite-valued \CSP s}\label{sec:linbound}
In this section we aim to provide a proof for Lemma~\ref{lem:linear_lb}. 
The main pieces of the argument are known results from the literature, and we
simply lay out how they together imply the claim.

We aim to show a linear lower bound for the counting width of those $\VCSP$s
that are not solved by the BLP relaxation. This aligns
with the dichotomy result of Thapper and \v{Z}ivn\'y (Theorem~\ref{thm:vcsp}). 
That is, if $\VCSP(D,\Gamma)$
is not solved by the BLP relaxation, we know that $\MAXCUT$ reduces to it. Our 
strategy is to show that (i) $\MAXCUT$ has linear counting width; and (ii) there
is a linear size $\FPC$-reduction from $\MAXCUT$ to $\VCSP(D,\Gamma)$, if it is not
solved by its BLP relaxation. By Proposition~\ref{prop:fpc_nu} this suffices to 
prove our claim.

For (i), we consider the problem $3\LIN$:
An instance of $3\LIN$ consists of a set of variables
$V$, and two sets of equations, $E_0$ and $E_1$. Each equation in $E_0$ has the
form $a\oplus b\oplus c=0$, where $\oplus$ denotes addition modulo 2, and 
$a,b,c\in V$. Similarly, each equation in $E_1$ has the form $a\oplus b\oplus c=1$.
The problem is then to determine whether there is an assignment $h:V\rightarrow\{0,1\}$
such that all equations are satisfied.

\begin{lemma}
	$\nu_{3\LIN}(n)\in\Omega(n)$.
\end{lemma}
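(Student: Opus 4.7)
The plan is to derive the bound directly from the $\Omega(n)$ lower bound on $C^k$-definability of systems of linear equations over $\mathbb{F}_2$ established by Atserias, Bulatov and Dawar~\cite{Atserias20091666}. Their construction produces, for each $n$, a pair of structures $\struct{A}_n,\struct{B}_n$ of size $O(n)$ encoding $\mathbb{F}_2$-linear systems where one is satisfiable and the other is not, yet Duplicator wins the $k$-pebble bijective counting game on $(\struct{A}_n,\struct{B}_n)$ for every $k\in o(n)$. Consequently no $C^k$-formula with such $k$ separates satisfiable from unsatisfiable linear systems of that size, so the class of satisfiable systems has counting width $\Omega(n)$.

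First I would observe that the canonical hard instances arising from~\cite{Atserias20091666}---Tseitin contradictions built on families of cubic expander graphs---already consist only of equations with at most three variables, so they can be re-encoded as instances of $3\LIN$ directly: equations with right-hand side $0$ populate the relation $E_0$, those with right-hand side $1$ populate $E_1$, and the variable universe is carried over unchanged. This re-encoding is plainly first-order definable in the relational representation of CSPs fixed in Section~\ref{sec:background}, and it preserves satisfiability. If one preferred to begin from the unbounded-width form of the result, the standard chaining substitution replacing $x_1\oplus\cdots\oplus x_w = b$ with the conjunction $x_1\oplus x_2\oplus y_1=0,\; y_i\oplus x_{i+2}\oplus y_{i+1}=0,\;\dots,\; y_{w-3}\oplus x_{w-1}\oplus x_w=b$ using fresh auxiliaries $y_1,\dots,y_{w-3}$ is $\FPC$-definable, of linear size, and equisatisfiable.

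Second I would invoke Proposition~\ref{prop:fpc_nu}. The resulting map from the source class of hard linear systems to $3\LIN$ is an $\FPC$-interpretation of constant width with linear-size blowup, so an $\Omega(n)$ lower bound on the counting width of the source class transfers to the same lower bound for $\nu_{3\LIN}$, giving $\nu_{3\LIN}(n)\in\Omega(n)$. The main obstacle here is little more than bookkeeping: one must check that the specific relational signature used for $3\LIN$ is matched correctly by the interpretation and that its width is constant, so that Proposition~\ref{prop:fpc_nu} yields the desired linear bound. All the mathematical content sits in the cited result of Atserias--Bulatov--Dawar; the present lemma is the step that imports it into our framework.
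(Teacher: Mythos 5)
Your proof is correct and takes essentially the same approach as the paper: both rest on the Atserias--Bulatov--Dawar tree-width--parameterized lower bound for linear systems over $\mathbb{F}_2$, instantiated on cubic expander graphs (whose linear tree-width yields the $\Omega(n)$ bound), and both note that on cubic graphs the resulting equations are already width-3, so they live directly in $3\LIN$. The only cosmetic difference is that you route the final step explicitly through Proposition~\ref{prop:fpc_nu} and offer an alternative chaining substitution, whereas the paper simply treats the Atserias et al.\ construction as already producing $3\LIN$ instances and cites the existence of linear tree-width graph families; neither extra step changes the substance of the argument.
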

\begin{proof}
	In~\cite{Atserias20091666} Atserias et al.\ show a lower bound of for the 
	counting width of the problem $3\LIN$ that is proportional to the tree-width
	of the instance. More precisely, they show a construction that transforms
	any given graph $G=(V,E)$ with tree-width $t$ into a pair of $3\LIN$ instances
	$(I,I')$, each having $O(|V|)$ variables, such that $I$ is satisfiable,
	but $I'$ is not, and no $\Ck$ formula of at most $t$ variables distinguishes
	between them.	
	
	The claim then follows by picking a class of graphs that have linear tree-width.
	Such graphs exist, for instance in the class of 3-regular expander 
	graphs~\cite{Ajtai1994}. (A similar argument of picking linear tree-width 
	graphs was already present in~\cite{CFI}).
\end{proof}

As a direct consequence, we obtain that $3\SAT$ also has linear counting width.

\begin{lemma}
	$\nu_{3\SAT}(n)\in\Omega(n)$.
\end{lemma}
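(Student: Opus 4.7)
The plan is to exhibit a linear-size \FPC-reduction from $3\LIN$ to $3\SAT$ and then appeal to Proposition~\ref{prop:fpc_nu} together with the preceding lemma.

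First I would specify the reduction on instances. Given a $3\LIN$ instance $(V, E_0, E_1)$, build a $3\SAT$ instance on the same variable set $V$ by replacing each equation with four clauses that precisely forbid its non-satisfying assignments. For an equation $a\oplus b\oplus c = 0$ in $E_0$, the forbidden assignments are $(1,0,0), (0,1,0), (0,0,1), (1,1,1)$, ruled out respectively by
\[(\neg a\vee b\vee c),\ (a\vee\neg b\vee c),\ (a\vee b\vee\neg c),\ (\neg a\vee\neg b\vee\neg c).\]
For an equation $a\oplus b\oplus c = 1$ in $E_1$, the analogous four clauses are
\[(a\vee b\vee c),\ (\neg a\vee\neg b\vee c),\ (\neg a\vee b\vee\neg c),\ (a\vee\neg b\vee\neg c).\]
An assignment $h:V\to\{0,1\}$ satisfies every equation of the $3\LIN$ instance if and only if it satisfies every clause of the resulting $3\SAT$ instance.

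Next I would verify that this transformation is expressible as a constant-width \FPC-interpretation. The universe of the target structure consists of $V$ together with four copies of the equation sort (one per clause), which is a standard product construction. Each clause's literals are determined locally from the variables of the originating equation and from the fixed index in $\{1,2,3,4\}$ of the clause within the block, together with whether the equation lies in $E_0$ or $E_1$. All of this is first-order definable, so certainly \FPC-definable, and the size of the target structure is at most a constant factor times the size of the source: $\theta(n)\in O(n)$.

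Finally I would combine Proposition~\ref{prop:fpc_nu} with the previous lemma. The reduction yields $\nu_{3\LIN}(n)\in O(\nu_{3\SAT}(\theta(n)))=O(\nu_{3\SAT}(cn))$ for a constant $c$. Since $\nu_{3\LIN}(n)\in\Omega(n)$, we obtain $\nu_{3\SAT}(cn)\in\Omega(n)$, and hence $\nu_{3\SAT}(n)\in\Omega(n)$. The proof is almost entirely bookkeeping; the only point to keep in mind is the standard one that multiplying the universe size by a constant merely multiplies the variable count by a constant inside the \FPC-composition, which is exactly what Proposition~\ref{prop:fpc_nu} absorbs in the $O$-notation, so there is no real obstacle.
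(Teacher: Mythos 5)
Your proof is correct and takes essentially the same approach as the paper: a linear-size $\FPC$-reduction from $3\LIN$ to $3\SAT$ via the standard XOR-to-CNF encoding (four clauses per equation), followed by Proposition~\ref{prop:fpc_nu} and the preceding lemma on $3\LIN$. You are in fact more careful than the paper's terse version, which assigns to an $E_0$-equation the four clauses with an \emph{even} number of negations even though those clauses actually encode $a\oplus b\oplus c=1$; your parity-to-clause correspondence is the correct one.
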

\begin{proof}
	Given an instance $(V,E_0,E_1)$ of $3\LIN$, we replace each equation 
	$a\oplus b\oplus c=0$ by the four clauses containing $a,b,c$ that have an 
	even number of negated literals, i.e.\ 
	 $(a\vee b\vee c)$, $(\neg a\vee \neg b\vee c)$,
	$(a\vee\neg b\vee \neg c)$, and $(\neg a\vee b\vee\neg c)$. Similarly,
	each equation $a\oplus b\oplus c=1$ is replaced by the four clauses of $a,b,c$
	that have an odd number of negated literals. This results in a $3\SAT$ instance
	that is satisfiable if and only if the original $3\LIN$ instance was satisfiable.
	Clearly, this is a linear size reduction that can be implemented in \FPC.
\end{proof}

We continue the reduction to $\MAXCUT$.

\begin{lemma}
	$\nu_{\MAXCUT}(n)\in\Omega(n)$.
\end{lemma}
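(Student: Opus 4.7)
My plan is to obtain the result by exhibiting a linear-size \FPC-reduction from $3\SAT$ to the decision version of $\MAXCUT$ and then invoking Proposition~\ref{prop:fpc_nu} together with the preceding lemma $\nu_{3\SAT}(n)\in\Omega(n)$.

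Given a $3\SAT$ instance $\phi=(V,C)$ with $n$ variables and $m$ clauses, I would construct a weighted graph $G_\phi=(U,E,w)$ of size $|U|\in O(n+m)$ together with an integer threshold $k$ of the form $k = \alpha n + \beta m$ such that $\phi$ is satisfiable if and only if $G_\phi$ has a cut of weight at least $k$. Concretely, I would first translate $\phi$ into an equivalent NAE-$3\SAT$ instance of linear size via the standard gadget (one auxiliary variable per clause plus a single global ``false'' variable), and then apply the classical triangle construction: the vertex set of $G_\phi$ contains two vertices $x,\bar x$ for each variable, joined by a heavy edge whose weight strictly exceeds the total contribution of all clause gadgets (so that any near-optimal cut separates $x$ from $\bar x$), and for every NAE-clause $(\ell_1,\ell_2,\ell_3)$ a unit-weight triangle on the three corresponding literal-vertices. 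A triangle contributes exactly $2$ to a cut precisely when its NAE-clause is satisfied, so cuts of weight at least $k$ correspond bijectively to satisfying assignments of $\phi$.

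The reduction is expressible as an \FPC-interpretation of bounded width because the construction is entirely local and uniform: vertices of $G_\phi$ are indexed by bounded tuples consisting of a variable or clause of $\phi$ together with a fixed constant-size label, edges and their weights are defined by simple local rules applied uniformly to every variable- and clause-gadget, and the threshold $k$ depends only on the cardinalities of $V$ and $C$, which are expressible via counting terms. Consequently, the size blow-up $\theta(n)$ from Proposition~\ref{prop:fpc_nu} is $O(n)$, so combining the proposition with the preceding lemma gives $\nu_{\MAXCUT}(n)\in\Omega(n)$.

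The point that needs a bit of care is calibrating the weight on the consistency edges between each $x$ and $\bar x$: it must strictly dominate the total contribution of the clause gadgets so that every near-optimal cut is forced to respect the intended encoding of literals, yet the resulting threshold $k$ must remain a simple \FPC-definable expression in $n$ and $m$. Once this balance is fixed (say, with consistency weight $2m+1$ and threshold $k=(2m+1)n + 2m$), both correctness of the reduction and its \FPC-definability follow routinely.
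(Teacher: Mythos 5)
Your proposal is correct and takes essentially the same approach as the paper: a linear-size $\FPC$-reduction from $3\SAT$ to the decision version of $\MAXCUT$, combined with Proposition~\ref{prop:fpc_nu} and the lemma $\nu_{3\SAT}(n)\in\Omega(n)$. The paper simply cites~\cite{DW15} for an explicit $\FPC$-reduction and observes it is linear in size, whereas you spell out the standard $3\SAT\to\NAESAT\to\MAXCUT$ chain, but the argument is the same.
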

\begin{proof}
	In~\cite{DW15}, we find an explicit construction of a \FPC-reduction from 
	$3\SAT$ to $\MAXCUT$. This reduction is also linear size.
\end{proof}

Finally, the reduction for (ii) has already been explicitly constructed 
in~\cite{DW15}. It is not difficult to confirm that these reductions are
in fact linear in size. This chain of reductions then concludes the proof of 
Lemma~\ref{lem:linear_lb}.

\section{Expressing semidefinite programs}\label{sec:sdp}
We now turn to our definability result for semidefinite programs which 
states that the weak optimization problem for explicitly given SDPs is
expressible in \FPC. Our result relies heavily on previous work by 
Anderson et al.~\cite{adh13:lics,adh15:jacm} for the case of linear programming. In fact, 
their proof method allows a simple 
adaptation:	The central piece there is a formulation of the ellipsoid method for
polyhedra in \FPC. That is, they show that the
reduction from the optimization problem to the separation problem for polyhedra 
can be accomplished in \FPC. They then show that the separation problem for
explicit LPs is also definable in \FPC. The same approach can be taken now for
the case of SDPs, where we aim to solve the weak formulations of the optimization and
separation problems. Formally, we prove the following.

\begin{theorem}\label{thm:sdpdefinable}
	There is an $\FPC$-interpretation $\Phi$ of $\vocmat$ in $\tau_{\SDP}\dunion \tau_\QQ$ that
	does the following:
	
	Let instances of a SDP be given by $(A,b,C)$ and an error parameter $\delta$,
	with
	$A\in \mathbb{Q}^{M\times (V\times V)}$, $b\in\mathbb{Q}^M$, 
	$C\in\mathbb{Q}^{V\times V}$, and $\delta>0$.
	Its feasible region is denoted by $\mathcal{F}_{A,b}$.
	Let $\struct I$ be the relational representation of this SDP.
	
	Then, $\Phi(\struct I)$ defines a relational representation of 
	$X\in\QQ^{V\times V}$, such that 
	\begin{itemize}
	  \item if $\mathcal{F}_{A,b}$ is empty or unbounded, there is no specification on $X$;
	  \item otherwise $X$ is a $\delta$-close and $\delta$-maximal solution.
	\end{itemize}
\end{theorem}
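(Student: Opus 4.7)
My plan is to follow the strategy outlined by the authors, extending the work of Anderson et al.~\cite{adh13:lics} from polyhedra to semidefinite feasible regions. The backbone is Theorem~\ref{thm:GLSopt}: weak optimization reduces (in polynomial time) to weak separation via the ellipsoid method. Since Anderson et al.\ have already shown that this reduction can be captured in \FPC\ when the separation oracle is, the task splits cleanly into two independent pieces. First, I would argue that their \FPC-formulation of the ellipsoid method carries over essentially verbatim to the SDP setting, because the ellipsoid updates are oblivious to the particular shape of the convex set — they only need an oracle returning either ``accept'' or a separating halfspace. The interpretation is reindexed to work on the ambient space $\QQ^{V\times V}$ (using the matrix-sort from $\vocmat$) rather than $\QQ^V$, but the underlying rational arithmetic, ellipsoid-center updates, and shape-matrix updates remain polynomial-time operations on the number sort, hence \FPC-definable by Immerman--Vardi. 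The full-dimensionality and boundedness hypotheses of Theorem~\ref{thm:GLSopt} can be enforced by a standard preprocessing: intersect $\mathcal F_{A,b}$ with a ball of radius $R$ polynomial in the encoding size and enlarge it by a tiny ball around a candidate interior point whose radius depends on $\delta$; both operations are expressible in \FPC\ because they involve only arithmetic on the bit-level representation of the input.

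The second and more substantial piece is exhibiting a weak separation oracle for $\mathcal F_{A,b}$ that is itself an \FPC-interpretation of $\vocvec$ in $\tau_{\SDP}\dunion\tau_\QQ$. Algorithm~\ref{alg:sdpsep} gives the blueprint: given $Y$, first check each linear inequality $\innerprod{A_i}{Y}\le b_i$; if some is violated by more than $\delta$, return $A_i/\norm{A_i}$ as a separating direction. This step is immediate in \FPC\ because inner products and norms of explicitly represented rational matrices are polynomial-time computable, and iterating through the constraint index $i\in M$ is straightforward using the element sort. If all linear constraints hold approximately, the oracle must test whether $Y\succeq 0$ up to the tolerance $\delta$; if it finds a unit vector $v$ with $v^{T}Yv<-\delta$, it returns $-vv^{T}/\norm{vv^{T}}$, and otherwise accepts.

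The key technical obstacle, and the step I expect to require most care, is defining in \FPC\ an approximate smallest-eigenvector computation certifying near-positive-semidefiniteness. Classical polynomial-time methods (power iteration on a shifted matrix $tI-Y$ with suitable deflation, or a Lanczos-style recurrence truncated at sufficient precision) require only rational arithmetic of bit-length polynomial in the input size and $\log(1/\delta)$, together with a bounded number of iterations; each such iteration is Immerman--Vardi-definable in \FPC\ over the number sort, and the iteration count itself can be implemented by the inflationary fixed-point operator. Some attention is needed to ensure the error analysis yields the exact guarantee demanded by the weak separation problem: one must pick the number of iterations and the arithmetic precision so that either a genuine near-negative Rayleigh quotient is found (yielding a legitimate separating hyperplane after normalisation) or it can be concluded that the smallest eigenvalue of $Y$ is at least $-\delta$, certifying that $Y$ is $\delta$-close to the PSD cone. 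Once this oracle is in hand, composing it with the \FPC-interpretation of the ellipsoid method produces the desired $\Phi$, and standard post-processing of the ellipsoid output (projection to the ambient matrix space, rounding to a rational representative) finishes the construction.
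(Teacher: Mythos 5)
Your high-level decomposition matches the paper's exactly: reduce weak optimization to weak separation via an \FPC-definable ellipsoid method, and then give an \FPC-definable weak separation oracle for the SDP feasible region. But several of your proposed implementations of the individual steps run into a symmetry obstacle that the paper spends real effort to circumvent, and your sketch does not acknowledge or resolve it.

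The central issue is that \FPC cannot make arbitrary choices from an unordered set. Your separation oracle proposes, when some linear constraint is violated, to ``return $A_i/\norm{A_i}$''; but if several constraints indexed by the unordered sort $M$ are violated simultaneously, there is no \FPC-definable way to pick one of them. The paper avoids this by \emph{summing} all violated $A_i$ and normalising the sum, which by convexity is still a separating direction. The same obstacle recurs, more seriously, in your eigenvector step. Power iteration, Lanczos, or deflation all require choosing an initial vector (and later, choosing a representative of a multidimensional eigenspace), and these choices cannot be made canonically in \FPC. The paper sidesteps this in two ways: it uses Holm's \FPC-definability of the characteristic polynomial to get ordered eigenvalue data (so Immerman--Vardi applies on the ordered number sort), and it replaces the eigenvector computation by a linear programming problem (find $v$ with $\norm{(Y-\tilde\lambda I)v}<\delta/2$, $\norm{v}=1$) so that the symmetry-breaking is delegated to the already-established \FPC-definability of LP optimization (Theorem~\ref{thm:lpdefinable}). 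Your sketch would need to be reworked to use one of these choice-free routes.

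You also claim the \FPC ellipsoid method ``carries over essentially verbatim'' because the updates are oblivious to the shape of the set. This is too optimistic: the Anderson et al.\ construction does not literally run the ellipsoid method over the unordered index set $V$; it uses a \emph{folding} mechanism to iteratively refine a partition of $V$ into an ordered quotient space, running the ellipsoid method there. Extending this to SDPs requires new facts, which the paper supplies: folding preserves positive semidefiniteness (Proposition~\ref{prop:fold-sdp}), and $\delta$-closeness and $\delta$-separation are preserved under folding (Propositions~\ref{prop:d-close} and~\ref{prop:d-sep}). These are not difficult, but they are not ``verbatim'' either, and your sketch omits them entirely. Finally, your preprocessing step mentions enlarging the set ``around a candidate interior point,'' which again tacitly assumes a choice of point; the paper's enlargement (perturbing each $b_i$ and shifting $X$ by a multiple of the identity) is canonical and choice-free. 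In short, the architecture of your proof is right, but nearly every concrete step you propose needs to be replaced by a symmetry-respecting version before it becomes an actual \FPC interpretation.
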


\subsection{Separation Oracle}\label{sec:sep}
Similar to the work in \cite{adh13:lics}, our proof strategy is to use of 
an \FPC-formulation of
the ellipsoid method to reduce the optimization problem to the separation problem.
Therefore in order to prove Theorem \ref{thm:sdpdefinable} we show first
that we can express a separation oracle for SDPs in \FPC. 

\begin{lemma}\label{lem:sepdefinable}
	There is an $\FPC$-interpretation $\Phi$ of $\vocmat$ in 
	$\tau_{\SDP}\dunion \tau_\QQ$ that
	does the following:
	
	An instance of the separation problem is given by $(A,b,Y)$, 
	and an error parameter $\delta$,
	with
	$A\in \mathbb{Q}^{M\times (V\times V)}$, $b\in\mathbb{Q}^M$, 
	$Y\in\mathbb{Q}^{V\times V}$, and $\delta>0$.
	Let $\struct I$ be the relational representation of this instance.
	
	Then, $\Phi(\struct I)$ defines a relational representation of 
	$S\in\QQ^{V\times V}$, such that 
	\begin{itemize}
	  \item if $\mathcal{F}_{A,b}$ is empty or bounded, then there is no specification on $S$;
	  \item otherwise if $S=0$, then $Y$ is $\delta$-close to $\mathcal{F}_{A,b}$;
	  \item otherwise $\innerprod{S}{Y}+\delta>\max\{\innerprod{S}{X}\mid X\in \mathcal{F}_{\mathcal{A},b}\}$.
	\end{itemize}
\end{lemma}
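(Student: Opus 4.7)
I will follow the two-branch structure of Algorithm~\ref{alg:sdpsep}, expressing each branch by formulas of $\FPC$ that manipulate the relational representations of rational matrices and vectors directly; the input error $\delta$ is available on the $\tau_\QQ$-part of the input. The linear-inequality branch is essentially the LP-separation problem, which is already $\FPC$-expressible by the work of Anderson et al.~\cite{adh13:lics}. More concretely, $\FPC$ can evaluate every inner product $\innerprod{A_i}{Y}$ (arithmetic on rational representations is a $\PT$-operation on the bit sort, and hence Immerman-Vardi definable) and compare it against $b_i+\delta/2$. If the set $M^{\mathrm{bad}}:=\{i\mid\innerprod{A_i}{Y}>b_i+\delta/2\}$ is non-empty, I output $S:=\sum_{i\in M^{\mathrm{bad}}}A_i$; each summand separates $Y$ from $\mathcal{F}_{A,b}$ with margin at least $\delta/2$, so the sum does too, and summation over a definable subset of the constraint sort is standard in $\FPC$.

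\textbf{The PSD branch: an invariant witness.} The real difficulty is the second branch, since eigenvectors of a degenerate eigenspace cannot be picked canonically, and $\FPC$ cannot break such symmetries. The idea is to replace ``pick an eigenvector of a negative eigenvalue'' by an invariant \emph{negative-part} construction. If $Y=Q\Lambda Q^T$, set $Y_{-}:=\sum_{\lambda_i<0}(-\lambda_i)v_iv_i^T\succeq 0$. Then $\innerprod{-Y_{-}}{X}=-\mathrm{tr}(Y_{-}X)\le 0$ for every $X\succeq 0$, while $\innerprod{-Y_{-}}{Y}=\sum_{\lambda_i<0}\lambda_i^2>0$ whenever $Y$ has a strictly negative eigenvalue. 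Thus $S=-Y_{-}$, suitably rescaled, is a basis-free separating direction depending only on $Y$ as a whole.

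\textbf{Definability of $Y_{-}$ via polynomial approximation.} To define $Y_{-}$ (up to error) in $\FPC$, I pick a polynomial $p$ of degree $d=\mathrm{poly}(R/\delta)$ (where $R\ge\norm{Y}$ is a bound computed from the input) that uniformly approximates $x\mapsto\max(-x,0)$ within $\delta/4$ on $[-R,R]$; a truncated Chebyshev series with rational coefficients suffices, and its coefficients are $\PT$-computable in the input, hence Immerman-Vardi definable on the bit sort. The map $Y\mapsto p(Y)=\sum_{j\le d}p_j Y^j$ is then $\FPC$-definable by a fixed-point that iterates matrix multiplication and accumulates a rational linear combination. The interpretation finally thresholds on $\norm{p(Y)}$: if it falls below a chosen threshold the output is $S=0$ (certifying that every negative eigenvalue of $Y$ has magnitude below $\delta$, so $Y$ is $\delta$-close to the PSD cone and, combined with the first branch, to $\mathcal{F}_{A,b}$); otherwise the output is $-p(Y)$, $L^\infty$-normalized so that $\norm{S}_\infty\le 1$.

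\textbf{Main obstacle.} The principal technical hurdle is a careful $\delta$-calibration: one must verify that (i) when $\norm{p(Y)}$ is below the threshold, the residual negative spectral mass of $Y$, together with the slack from the first branch, still implies $\delta$-closeness to $\mathcal{F}_{A,b}$; and (ii) when $\norm{p(Y)}$ exceeds the threshold, the approximation error is dominated by the true negative spectral mass, so that $\innerprod{-p(Y)}{Y}+\delta>\max_{X\in\mathcal{F}_{A,b}}\innerprod{-p(Y)}{X}$. The definability side is routine once the polynomial-approximation trick is in place; essentially all of the work is in the numerical bookkeeping of the approximation parameters against $\delta$ and $R$.
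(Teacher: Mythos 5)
Your PSD branch takes a genuinely different route from the paper. The paper (Algorithm~\ref{alg:sdpweaksep}) estimates the eigenvalues of $Y$ from its \FPC-definable characteristic polynomial, and when some estimate $\tilde\lambda$ is small it solves an auxiliary LP (via Theorem~\ref{thm:lpdefinable}) to extract a unit vector $v$ with $\norm{(Y-\tilde\lambda I)v}$ small; the returned direction $-vv^T$ is then \emph{exactly} negative semidefinite, so $\innerprod{-vv^T}{X}\le 0$ for every PSD $X$ with no dependence on $\norm{X}$. You instead compute the spectral negative part $Y_-$ invariantly by applying a low-degree matrix polynomial $p$ approximating $\max(-x,0)$. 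That is an appealing way to avoid the eigenvector-choice symmetry problem, and the rejection direction $-p(Y)$ can be made to work \emph{provided} you enforce $p\ge 0$ on $[-R,R]$ (e.g.\ by shifting the Chebyshev truncation up by its uniform error); otherwise $p(Y)\not\succeq 0$ and the quantity $\innerprod{-p(Y)}{X}$ can be positive, with the separation margin then depending on a bound on $\norm{X}$ over $\mathcal{F}_{A,b}$, which the separation oracle is not supposed to need.

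The substantive gap, though, is in your acceptance criterion, and it is not "numerical bookkeeping." When $\norm{p(Y)}$ is below your threshold you conclude that every negative eigenvalue of $Y$ has small magnitude, i.e.\ $Y$ is $\delta$-close to the PSD cone, and then assert that "combined with the first branch" $Y$ is $\delta$-close to $\mathcal{F}_{A,b}$. That last implication does not hold: closeness to each constraint set individually does not bound the distance to their intersection. (In a thin feasible region --- two nearly-opposed halfspaces, or a PSD slab intersected with a sharply-angled affine slab --- a point can be close to each constraint set and yet far from $\mathcal{F}_{A,b}$; the nearest PSD matrix to $Y$ need not satisfy the linear constraints, nor vice versa.) The paper avoids this entirely by making the acceptance certificate \emph{strict}: it accepts only when the linear constraints hold exactly and the approximate eigenvalue computation certifies that every actual eigenvalue is at least $\delta/4>0$, so $Y\in\mathcal{F}_{A,b}$ outright and the distance is $0$, not merely $\le\delta$. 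The crucial technical point is that the characteristic-polynomial eigenvalue estimate provides a \emph{lower bound} on the spectrum, whereas the smallness of $\norm{p(Y)}$ only bounds the negative eigenvalues above by $O(\delta)$ and cannot certify $Y\succeq 0$. The clean fix is to keep your polynomial construction for the separating direction but take the acceptance test from the characteristic-polynomial estimate exactly as the paper does; then the accept branch certifies $Y\in\mathcal{F}_{A,b}$ and the gap closes.

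A smaller point: the $\delta/2$ slack you introduce in the linear-inequality branch is unnecessary (the paper checks $\innerprod{A_i}{Y}>b_i$ exactly, which is already \FPC-definable), and removing it simplifies the accept-side analysis because it restores exact satisfaction of the affine constraints at acceptance.
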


Algorithm \ref{alg:sdpsep} describes a simple algorithm for
the separation problem for the feasible region of SDPs. Its correctness follows 
from the fact that an infeasible point $Y$ has to violate some inequality
$\innerprod{A_i}{Y}\leq b_i$ or the constraint $Y\succeq 0$. In the former case
we can simply chose the separation normal as $A_i$, while in the latter case
we can choose the normal to be $(-1)vv^T$ where $v$ is an eigenvector corresponding
to a negative eigenvalue $\lambda$ of $Y$, since 
$\innerprod{vv^T}{Y}=\lambda\cdot \norm{v}^2$. In order to implement this
algorithm in $\FPC$ however, we have to make two key modifications: (1) As
we want our output to be a rational vector, we have to work with a finite precision
in the calculations, and (2) all the steps must be definable in $\FPC$. Together
this leads to Algorithm \ref{alg:sdpweaksep} that solves the weak separation problem
for SDPs, and is possible to translate to $\FPC$ using known techniques. 

\begin{algorithm}
	\caption{Weak separation oracle for semidefinite programs}\label{alg:sdpweaksep}
	\begin{algorithmic}[1]
		\Require $\mathcal{A}=\{A_1,\ldots,A_m\in \QQ^{V\times V}\}$, $b\in\QQ^m$, $Y\in\QQ^{V\times V}$, $\delta\in\QQ$ such that $\delta > 0$.
		\Ensure Solves weak separation problem on $\mathcal{F}_{\mathcal{A},b}$, $Y$, and $\delta$.
		\Function{Separation}{$\mathcal{A}$,$b$,$Y$,$\delta$}:
			\State $\mathcal{V} \gets \{A_i\in\mathcal{A} \mid \innerprod{A_i}{Y}>b_i\}$
			\If{$\mathcal{V}$ is non-empty}
				\State $v\gets \sum_{A_i\in\mathcal{V}} A_i$ \label{line:select}
				\State \Return $\frac{v}{\norm{v}}$
			\EndIf
			\State Approximate eigenvalues $\{\tilde\lambda_1,\ldots,\tilde\lambda_{|V|}\}$ of $Y$ up to precision $\frac{\delta}{4}$ \label{line:eigenvalues}
			\If{there is $\tilde{\lambda}$ with $\tilde\lambda_i<\frac{\delta}{2}$}
				\State $v\gets $ Vector satisfying $\norm{(Y-\tilde{\lambda_i}I)v}<\frac{\delta}{2}$ and $\norm{v}=1$ \label{line:eigenvector}
				\State \Return $(-1)/\norm{vv^T} \cdot vv^T$
			\EndIf
			\State \Return \textsc{Accept}
		\EndFunction
	\end{algorithmic}
\end{algorithm}

The translation of Algorithm \ref{alg:sdpweaksep} into an \FPC-interpretation 
uses some known tools from descriptive complexity. First, we note that the 
basic vector and matrix operations, such as addition, multiplication, norm and 
even computing the characteristic polynomial can all be defined in \FPC 
\cite{HolmPhD}. A key modification is in Line \ref{line:select}:
In the original algorithm, we had to choose a violated
constraint from an unordered set of constraints, which is in general not possible
to express in \FPC. However, we can employ the same technique as in \cite{adh13:lics}:
the explicit choice of a constraint can be avoided by summing all violated constraints,
since by linearity, the sum of violated constraints is again a violated constraint,
which in turn is expressible in \FPC.

In Line \ref{line:eigenvalues}, we compute the eigenvalues of the input
matrix $Y$ up to a given precision $\delta/4$. This is possible in \FPC since
it is powerful enough to define the coefficients of the characteristic polynomial
of definable matrices (see~\cite{HolmPhD}).

\begin{proposition}
	There is an $\FPC$ interpretation of $\tau_{\QQ}$ in $\vocmat\dunion\tau_{\QQ}$
	that for a given a matrix $A\in\QQ^{V\times V}$ and a value $\delta\in\QQ$
	(in their relational representation) defines the value of the smallest 
	eigenvalue of $A$ up to a precision of $\delta$.
\end{proposition}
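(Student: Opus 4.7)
The plan is to reduce approximating the smallest eigenvalue of $A$ to an approximate real root-finding problem on the characteristic polynomial $p_A(x) = \det(xI - A)$, and then to express that root-finding procedure in $\FPC$ using Sturm sequences and binary search.

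I would begin by invoking the result cited as~\cite{HolmPhD}, that the coefficients of the characteristic polynomial of an $\FPC$-definable matrix are themselves $\FPC$-definable; for instance via the Samuelson--Berkowitz algorithm, whose steps reduce to $\FPC$-expressible matrix arithmetic and produce rational coefficients of bit-size polynomial in the input. In parallel I would define an a priori bound $R$ on the absolute value of all eigenvalues (taking $R := \norm{A}_F$ suffices and is trivially arithmetic), so that the smallest eigenvalue $\lambda_{\min}$ lies in the interval $[-R, R]$.

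Given the polynomial and the interval, I would implement the Sturm sequence $p_0 = p_A$, $p_1 = p_A'$, $p_{i+1} = -\mathrm{rem}(p_{i-1}, p_i)$ by iterated polynomial division in $\QQ$; then the number of distinct real roots of $p_A$ in an interval $(a, b]$ equals $V(a) - V(b)$, where $V(x)$ counts sign alternations in the Sturm sequence evaluated at $x$. All of this is polynomial-time arithmetic on the rational coefficients of $p_A$, hence expressible in $\FPC$ via the Immerman--Vardi theorem on the number sort. A binary search then localizes $\lambda_{\min}$: starting from $[-R, R]$, at each step halve the current interval and descend into the leftmost half that still contains a root according to the Sturm count. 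After $O(\log(R/\delta))$ iterations --- a count polynomial in the input size since $\delta$ is part of the input encoded in binary --- the interval has width at most $\delta$, and its midpoint is returned. This loop is captured by an inflationary fixed-point in $\FPC$.

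The main obstacle is to ensure that every intermediate quantity (Sturm-sequence coefficients, interval endpoints, sign evaluations) remains polynomially bounded in bit-size so that the whole process fits within $\FPC$ operating on the number sort. Each binary-search endpoint is a dyadic rational of bounded denominator, and standard bounds on Sturm sequences of rational polynomials keep every coefficient polynomially sized; combined with $\FPC$'s ability to capture polynomial-time computation on the ordered number domain, this yields the desired interpretation.
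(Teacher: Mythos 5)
Your proof is correct and follows essentially the same route as the paper: invoke Holm's result to define the characteristic polynomial coefficients in $\FPC$, then appeal to the Immerman--Vardi theorem on the ordered number sort to carry out polynomial-time root approximation. The paper's proof leaves that second step as a black box, whereas you usefully spell out one concrete polynomial-time procedure (Sturm sequences plus binary search), but the structure of the argument is the same.
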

\begin{proof}
	Holm~\cite{HolmPhD} establishes that there is an interpretation in $\FPC$
	by which we can obtain from $A$ the coefficients $\alpha_1,\ldots,\alpha_n$
	of the characteristic polynomial 
	$p(x)=\mathrm{det}(xI-A)=x^n-\alpha_1x^{n-1}+\ldots+(-1)^n\alpha_n$. Since
	the coefficients have a linear order, by the Immerman-Vardi theorem, any
	polynomial time computable property can be defined in $\FPC$, such as
	computing the smallest eigenvalue up to a precision $\delta$.
\end{proof}


Furthermore, the exact calculation of the eigenvector corresponding to a negative eigenvalue
has been replaced by a linear optimization step in Line \ref{line:eigenvector}.
In general, the eigenvectors corresponding to some eigenvalue $\lambda$ are not
uniquely defined. Not only can we scale eigenvectors by an arbitrary amount, 
in the case of an eigenvalue of higher multiplicity we have to choose a 
representative from a whole multidimensional eigenspace. To avoid this choice,
we reformulate the problem as a linear program and rely on Theorem \ref{thm:lpdefinable}
to express this step in \FPC. (While this LP-step is used as a blackbox here,
in Section \ref{sec:opt} we give some exposition on the techniques in
\cite{adh13:lics} that are used to break the symmetry between choices.)

The correctness of the algorithm follows from some basic calculations. Assume
the algorithm accepts an input $(\mathcal{A},b,Y,\delta)$. Then $Y$ violated
none of the inequalities $\innerprod{A_i}{Y}\leq b_i$, and all eigenvalues
of $Y$ are non-negative, and we can conclude that $Y\in\mathcal{F}_{\mathcal{A},b}$.
Otherwise either some inequality $\innerprod{A_i}{Y}\leq b_i$ is violated,
in which case the algorithm produces a correct separating hyperplane, or some
approximated eigenvalue $\tilde\lambda$ is smaller than $\delta/2$. 
In the latter case, the linear optimization step looks
for a vector $v$ such that $\norm{v}=1$ and $\norm{(Y-\tilde\lambda)v}<\delta/2$.
Note that such a vector always exists. Let $\lambda$ be the actual eigenvalue, with
$\tilde{\lambda}=\lambda+\epsilon$ for some error $\epsilon$ with  $|\epsilon|\leq \delta/4$.
We have
$(Y-\tilde\lambda I)v = (Y-\lambda I-\epsilon I)v$, and by setting
$v$ to some eigenvector corresponding to $\lambda$ with $\norm{v}=1$,
we get $\norm{(Y-\tilde\lambda)v} = \norm{(Y-\lambda I - \epsilon I)v} = |\epsilon| < \delta/2$. 
Finally, given such a vector $v$, the normal of a weakly separating plane is given by
$S:=-1/\norm{vv^T}\cdot vv^T$: Let $\epsilon := Yv - \tilde\lambda v$.
Then $\innerprod{S}{Y}=-1/\norm{vv^T}\cdot(v^T Y v)=-1/\norm{vv^T}\cdot(v^T(\tilde\lambda v +\epsilon))=-1/\norm{vv^T}(\tilde\lambda+v^T\epsilon)<\delta$.

This shows that we can define a weak separation oracle for SDPs in \FPC. For the 
next step, we show that the reduction from weak optimization to separation,
i.e.\ the ellipsoid method, can be defined in \FPC as well.

\subsection{Reducing Optimization to Separation}\label{sec:opt}

In this section we construct a \FPC-reduction from the weak
optimization problem to the weak separation problem for SDPs.

\begin{lemma}\label{lem:optdefinable}
	If there is a \FPC-interpretation expressing the weak separation problem for
	the feasible region of a given SDP, then there is a
	\FPC-interpretation which expresses the
	weak semidefinite optimization problem.
\end{lemma}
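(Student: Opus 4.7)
The plan is to adapt the FPC-implementation of the ellipsoid method given by Anderson et al.\ \cite{adh13:lics} for polyhedra to the SDP setting. Their construction relies on two ingredients: (i) a uniform, FPC-definable separation oracle, and (ii) the observation that the ellipsoid loop is a polynomial-time iterative procedure whose state (a center vector and a positive-definite shape matrix) lives on the numeric sort of the structure, so by Immerman--Vardi it can be driven by an \FPC{} fixed point once each individual update is \FPC-definable. Lemma~\ref{lem:sepdefinable} provides (i) in our setting, and essentially the same plumbing as in the LP case then provides (ii).

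Concretely, I would first reduce weak optimization to weak feasibility by the standard sliding-objective trick: for a target value $T$, augment the input SDP with the extra linear constraint $\innerprod{C}{X} \ge T$. The feasible region of the augmented SDP has a separation oracle obtainable from Lemma~\ref{lem:sepdefinable} by appending one more matrix inequality, and this augmentation is itself \FPC-definable. A binary search over $T$, ranging within polynomially many bits around a bound determined from the encoding size of the input, then reduces weak optimization to a polynomial number of feasibility queries; since both the binary-search range and the bit-precision are polynomial, the whole search is an iteration over the ordered number sort and hence expressible in \FPC{}.

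For the feasibility loop itself, I would start from an enclosing ball $\mathcal{B}(0,R)$ whose radius $R$ is polynomially bounded in the input encoding (a standard bound for rational SDPs), and maintain an ellipsoid given by a center $c$ and a shape matrix $P\succ 0$. At each step the oracle of Lemma~\ref{lem:sepdefinable} is queried at $c$: if it accepts, $c$ is recorded as a $\delta$-feasible and $\delta$-optimal candidate; otherwise the returned direction $S$ is used in the standard ellipsoid update to produce $(c',P')$. All operations involved --- matrix–vector products, trace inner products, norms, scalar updates --- are polynomial-time rational operations on data of polynomial bit-length, hence \FPC-definable, following Holm~\cite{HolmPhD}. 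The square roots appearing in the exact update formula are handled exactly as in \cite{adh13:lics}, by replacing them with rational approximations of sufficient precision; this merely multiplies the iteration count by a polynomial factor and preserves correctness of the weak optimization guarantee.

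The main obstacle, and the one for which the LP case already gives the template, is symmetry-breaking: \FPC{} cannot make arbitrary choices, so each successive ellipsoid must be uniquely determined by the previous one together with the input. This is why Lemma~\ref{lem:sepdefinable} was engineered to return a canonical separating direction (by summing all violated matrix inequalities, and by selecting an eigenvector through a canonically specified linear program), rather than an arbitrary one. With that determinism in hand the fixed-point computation faithfully simulates the ellipsoid iteration. The remaining pathologies --- emptiness, unboundedness, or failure of full-dimensionality --- are permitted to leave $X$ unspecified by the statement of Theorem~\ref{thm:sdpdefinable}; they are handled in the weak setting exactly as in \cite{GLS88}, by a mild perturbation of the constraints and by detecting non-convergence of the loop within the allotted polynomial number of steps.
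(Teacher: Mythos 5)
Your high-level plan (run the ellipsoid method, relying on Immerman--Vardi for the loop and on a canonical separation oracle for determinism) is the right starting point, but it misses the central technical device of both this paper and \cite{adh13:lics}, and the step where you justify the loop contains a concrete error. You write that the ellipsoid state ``lives on the numeric sort of the structure, so by Immerman--Vardi it can be driven by an \FPC{} fixed point.'' That is not so: the center vector is an element of $\QQ^V$ and the shape matrix of $\QQ^{V\times V}$, so the state is a relation indexed by the \emph{element} sort $V$, which is unordered. Immerman--Vardi only gives capture of polynomial time on \emph{ordered} structures, so it does not license running the ellipsoid iteration directly over $V$. A canonical oracle is necessary but not sufficient; you still need to manufacture an order before invoking Immerman--Vardi.

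What the paper actually does, following \cite{adh13:lics}, is \emph{foldings}: it uses the separation oracle not only to cut ellipsoids but to iteratively refine a partition of $V$, quotienting the problem onto $\QQ^k$ with $k$ an ordered index set $[k]$, and only then runs the ellipsoid algorithm via Immerman--Vardi on the folded (ordered) space. The genuinely new ingredients needed for the SDP case, which are absent from your proposal, are that folding preserves the positive semidefinite cone (Proposition~\ref{prop:fold-sdp}) and that $\delta$-closeness and weak separation are preserved under folding (Propositions~\ref{prop:d-close} and~\ref{prop:d-sep}). These are exactly what lets the weak separation oracle for the unfolded SDP be reused as a weak separation oracle for every fold. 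Without this machinery your proof does not get off the ground, because there is no order on $V$ with which to simulate the loop. As a secondary point, your sliding-objective reduction from optimization to feasibility via binary search is a different decomposition than the paper's (the paper follows the GLS-style reduction from Theorem~\ref{thm:GLSopt} directly), but it neither helps nor hurts with the ordering problem, so it does not substitute for folding.
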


A version of this lemma was already proved in \cite{adh13:lics} but was stated
in terms of the optimization and separation problems for polytopes. 
Their algorithm however generalizes nicely to our setting. Here, we give a 
brief overview of the main proof ideas again, and point out the changes we made
to accommodate the case of SDPs. For a detailed description of their algorithm, 
we refer to \cite{adh13:lics,adh15:jacm}.

The main idea behind the construction is to repeatedly apply
the separation oracle to define a linear order on the set of variables, and once a
sufficient order is obtained, to apply the Immerman-Vardi theorem to define the
ellipsoid method. This is achieved by defining a series of increasingly fine 
equivalence relations on the variable set $V$, specified by so-called 
\emph{foldings} that we formalize below. Intuitively, these partitions are
obtained as follows.
In the beginning, every element of $V$ resides in the same equivalence class. 
However, there may be some inputs on which the separation oracle returns a 
vector $d$ with different values $d_u$ and $d_v$ for $u,v\in V$, which
distinguishes the two elements $u$ and $v$. In subsequent iterations,
$u$ and $v$ are put in different equivalence classes, and this process is
repeated until we obtain a sufficiently refined partition of $V$.

There are a couple of key modifications
to be made to the algorithm from \cite{adh13:lics}. Namely, we show
(1) that the folding operation preserves
the positive semidefiniteness of sets, and (2) how to cope with the additional
parameters introduced by the weak versions of the separation and optimization
problems.

We start by defining the notion of \emph{folding}.
\begin{definition}
	Let $V$ be a non-empty set. For $k\leq |V|$, we call a surjective mapping
	$\sigma:V\rightarrow [k]$ an \emph{index map}. Furthermore, for each $i\in [k]$
	we define $V_i:=\{v\in V\mid \sigma(v)=i\}$.
	
	For a vector $x\in\QQ^V$, the \emph{almost-folded} vector $[x]^{\tilde{\sigma}}\in\QQ^k$
	is given by
	\[([x]^{\tilde{\sigma}})_i:=\sum_{v\in V_i}x_v, \text{ for } i\in [k].\]
	
	Its \emph{folded} vector $[x]^\sigma\in\QQ^k$ is given by
	\[([x]^\sigma)_i := [x]^{\tilde\sigma}_i/|V_i|, \text{ for } i\in [k].\]
	
	For a vector $\hat x\in\QQ^k$, its \emph{unfolded} vector $[\hat{x}]^{-\sigma}\in\QQ^V$
	is given by
	\[([\hat{x}]^{-\sigma})_v := \hat{x}_i, \text{ with } v\in V_i, \text{ for all } v\in V.\]
\end{definition}

For a given index map $\sigma$ and a vector $x\in\QQ^V$, we say $x$ \emph{agrees}
with $\sigma$ when for all $u,v\in V$ $\sigma(u)=\sigma(v)$ implies $x_u=x_v$. 
The notion also extends in a natural way to sets $S\subseteq\QQ^V$, simply by
defining the folded set $[S]^\sigma:=\{[s]^\sigma \mid s\in S\}$. This can be
seen as a projection of $S$ into the (ordered) $k$-dimensional space $\QQ^k$.
When talking about matrices, that is, when the variable set $V$ consists of pairs
from some product set $V'\times V'$, we implicitly also require that $\sigma$ is 
\emph{consistent}. Namely we require that
an index map $\sigma:V'\times V'\rightarrow [k]\times [k]$ is defined by an 
underlying index map $\tau:V'\rightarrow [k]$, with $\sigma(u,v)=(\tau(u),\tau(v))$.

There are some useful properties of the folding operation that allow us to infer
some information about the geometry of a folded set from its original.

\begin{proposition}\label{prop:innerprod}
	Let $\sigma:V\rightarrow [k]$ be an index map, $x, c$ vectors in $\QQ^V$,
	where $c$ agrees with $\sigma$. Then,
	\[\innerprod{c}{[[x]^\sigma]^{-\sigma}}=\innerprod{c}{x}=\innerprod{[c]^{\tilde\sigma}}{[x]^\sigma}.\]
\end{proposition}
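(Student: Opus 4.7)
The plan is to verify both equalities by direct expansion of the definitions, exploiting the assumption that $c$ agrees with $\sigma$ to collapse sums over each equivalence class $V_i$. Since $c$ agrees with $\sigma$, for every $i \in [k]$ all entries $c_v$ with $v \in V_i$ share a common value, which I will denote $c^{(i)}$.

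First I would handle the left-hand equality. Unfolding the folded vector gives $([[x]^\sigma]^{-\sigma})_v = [x]^\sigma_i = \frac{1}{|V_i|}\sum_{u \in V_i} x_u$ whenever $v \in V_i$, so
\[
\innerprod{c}{[[x]^\sigma]^{-\sigma}} = \sum_{i=1}^k \sum_{v \in V_i} c_v \cdot \frac{1}{|V_i|}\sum_{u \in V_i} x_u
= \sum_{i=1}^k c^{(i)} \sum_{u \in V_i} x_u,
\]
where the factor $|V_i|$ from summing $c_v$ over $v \in V_i$ cancels the $1/|V_i|$. Since $c^{(i)} = c_u$ for every $u \in V_i$, this rearranges to $\sum_{i} \sum_{u \in V_i} c_u x_u = \innerprod{c}{x}$.

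For the right-hand equality I would expand similarly: $\innerprod{[c]^{\tilde\sigma}}{[x]^\sigma} = \sum_i \bigl(\sum_{v \in V_i} c_v\bigr)\cdot \frac{1}{|V_i|}\sum_{u \in V_i} x_u$. Using agreement of $c$, the almost-folded entry $[c]^{\tilde\sigma}_i = |V_i|\, c^{(i)}$, so the $|V_i|$ factors cancel and we again obtain $\sum_i c^{(i)} \sum_{u \in V_i} x_u = \innerprod{c}{x}$.

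There is no real obstacle here; the proposition is essentially a bookkeeping identity. The only subtlety worth flagging is that agreement of $c$ with $\sigma$ is used in both directions: in the first equality it lets us pull $c_v$ outside the inner average, and in the second equality it ensures that the almost-folded vector $[c]^{\tilde\sigma}$ carries a factor of $|V_i|$ that exactly matches the averaging in $[x]^\sigma$. Without the agreement hypothesis both equalities would fail in general.
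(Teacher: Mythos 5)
Your computation is correct, and since the paper's ``proof'' is just a pointer to Proposition~12 of~\cite{adh13:lics}, a direct expansion like yours is exactly what is expected. One small observation worth recording: the outer two expressions $\innerprod{c}{[[x]^\sigma]^{-\sigma}}$ and $\innerprod{[c]^{\tilde\sigma}}{[x]^\sigma}$ are equal for \emph{any} $c$, by the adjointness $\innerprod{c}{[\hat x]^{-\sigma}}=\innerprod{[c]^{\tilde\sigma}}{\hat x}$, without invoking agreement; agreement is what ties either of them to $\innerprod{c}{x}$. Your closing remark is still correct, since both equalities in the chain do depend on agreement, but isolating the adjointness step slightly shortens the argument.
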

\begin{proof}
	See Proposition 12 from \cite{adh13:lics}.
\end{proof}

\begin{proposition}\label{prop:fold}
	Let $\mathcal{P}\subseteq\QQ^V$ be a polytope in $\QQ^V$ and let 
	$\sigma:V\rightarrow [k]$ be an index map. Then the folded set $[\mathcal{P}]^\sigma$
	is a polytope in $\QQ^k$.
\end{proposition}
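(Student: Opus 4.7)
The plan is to observe that folding is a linear map and then invoke the standard fact from polyhedral geometry that the linear image of a polytope is again a polytope.

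First I would unpack the definition: the map $[\cdot]^\sigma : \QQ^V \to \QQ^k$ sends $x$ to the vector whose $i$-th coordinate is $\frac{1}{|V_i|}\sum_{v \in V_i} x_v$. This is manifestly a linear map, since each coordinate is a fixed rational linear combination of the entries of $x$. Let $F \in \QQ^{k \times V}$ denote its matrix, so $[x]^\sigma = Fx$ and $[\mathcal{P}]^\sigma = F\mathcal{P}$.

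Next I would invoke the Minkowski--Weyl theorem: every polytope admits both an H-representation (as the intersection of finitely many half-spaces) and a V-representation (as the convex hull of finitely many points together with a finitely generated cone, or just finitely many points in the bounded case). Working on the V-side is easiest: if $\mathcal{P} = \mathrm{conv}(v_1,\dots,v_m) + \mathrm{cone}(r_1,\dots,r_\ell)$, then by linearity $F\mathcal{P} = \mathrm{conv}(Fv_1,\dots,Fv_m) + \mathrm{cone}(Fr_1,\dots,Fr_\ell)$, which is again finitely generated, hence a polytope (polyhedron) in $\QQ^k$. Equivalently one can argue on the H-side using Fourier--Motzkin elimination to project out the coordinates not needed after averaging inside each block $V_i$; this directly produces a finite system of linear inequalities describing $[\mathcal{P}]^\sigma$.

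There is really no obstacle here beyond correctly citing the standard fact that linear images of (finitely generated) polyhedra are polyhedra; the whole content of the proposition is that folding is a linear map, which is immediate from the definition. If one wishes to avoid appealing to Minkowski--Weyl, a self-contained alternative is to note that $[\mathcal{P}]^\sigma$ is the image of $\mathcal{P}$ under the linear surjection $F$ and then use Fourier--Motzkin to eliminate the $|V|-k$ redundant directions from the H-representation of $\mathcal{P}$, producing the desired H-representation of $[\mathcal{P}]^\sigma$.
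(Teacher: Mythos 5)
Your proof is correct: folding is a linear map (each coordinate of $[x]^\sigma$ is a fixed rational linear combination of the $x_v$), and the image of a polytope under a linear map is again a polytope, by Minkowski--Weyl or Fourier--Motzkin. The paper does not spell out a proof but simply cites Proposition~13 of~\cite{adh13:lics}, which establishes the same fact by essentially this standard argument, so your approach matches.
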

\begin{proof}
	See Proposition 13 from \cite{adh13:lics}.
\end{proof}

\begin{proposition}\label{prop:fold-sdp}
	Let $X\in\QQ^{V\times V}$ be a positive semidefinite matrix, and let 
	$\sigma:V\times V\rightarrow [k]\times [k]$. Then $[X]^\sigma$
	is also a positive semidefinite matrix in $\QQ^{k\times k}$.
\end{proposition}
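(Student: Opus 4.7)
The plan is to establish positive semidefiniteness of $[X]^\sigma$ by directly testing it against an arbitrary vector $y \in \QQ^k$ and exhibiting an unfolded "witness" vector $z \in \QQ^V$ that realizes the quadratic form $y^T [X]^\sigma y$ as $z^T X z$. Since $X \succeq 0$, the latter is non-negative, which gives what we want.

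More concretely, given $y \in \QQ^k$, I would define $z \in \QQ^V$ by setting
\[ z_v \defeq \frac{y_{\tau(v)}}{|V_{\tau(v)}|}, \]
where $\tau : V \to [k]$ is the underlying index map such that $\sigma(u,v) = (\tau(u), \tau(v))$ (this is exactly where the consistency assumption on $\sigma$ enters). Using the explicit form of folding on the index set $V \times V$, the $(i,j)$ entry of $[X]^\sigma$ is the average
\[ ([X]^\sigma)_{i,j} \;=\; \frac{1}{|V_i|\,|V_j|}\sum_{u \in V_i,\, v \in V_j} X_{u,v}. \]
Then a direct expansion gives
\[ y^T [X]^\sigma y \;=\; \sum_{i,j \in [k]} \frac{y_i y_j}{|V_i|\,|V_j|} \sum_{u\in V_i,\, v\in V_j} X_{u,v} \;=\; \sum_{u,v \in V} z_u z_v X_{u,v} \;=\; z^T X z, \]
where the middle equality is obtained by reindexing the double sum over $V \times V$ by the level sets of $\tau$. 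Since $X \succeq 0$, we conclude $y^T [X]^\sigma y \geq 0$ for every $y \in \QQ^k$, so $[X]^\sigma \succeq 0$.

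Finally, one should also note that $[X]^\sigma$ inherits symmetry from $X$ (again using the consistency of $\sigma$, which makes $([X]^\sigma)_{i,j} = ([X]^\sigma)_{j,i}$ a straightforward consequence of $X_{u,v} = X_{v,u}$), so it is a symmetric PSD matrix in $\QQ^{k \times k}$. There is no real obstacle here: the only thing one must be careful about is the correct normalization in the definition of $z$, which exactly matches the $1/(|V_i|\,|V_j|)$ factor appearing in the definition of the folded matrix, and the fact that the proof requires $\sigma$ to come from a single underlying index map $\tau$ rather than being an arbitrary surjection on $V \times V$.
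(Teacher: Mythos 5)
Your proof is correct, and it takes a genuinely different route from the paper. The paper argues via the Gram representation: it writes $X_{u,v}=\innerprod{g_u}{g_v}$ for some vectors $g_v$, defines the averaged vectors $g^\tau_i = \frac{1}{|V_i|}\sum_{v\in V_i} g_v$, and checks that these form a Gram representation of $[X]^\sigma$. You instead work directly with the defining quadratic form: given an arbitrary $y\in\QQ^k$, you construct an unfolded witness $z\in\QQ^V$ (namely $z_v = y_{\tau(v)}/|V_{\tau(v)}|$) and verify by reindexing the double sum that $y^T[X]^\sigma y = z^T X z \geq 0$. The algebra checks out: the $1/(|V_i||V_j|)$ normalization in the folded entry, split as $1/|V_{\tau(u)}|\cdot 1/|V_{\tau(v)}|$, absorbs exactly into the two copies of $z$. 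Your approach is arguably cleaner in this setting, since the paper's definition of positive semidefiniteness is $x^TMx\geq 0$ for rational $x$, and your proof produces a rational witness $z$ for every rational test vector $y$ without ever invoking a Gram decomposition (whose existence \emph{over the rationals}, as the paper implicitly asserts with $g_v\in\QQ^l$, is a nontrivial side fact). The paper's route is more structural and makes the averaging operation geometrically transparent; yours is more elementary and self-contained. Both hinge on the consistency of $\sigma$ (i.e.\ the existence of the underlying $\tau$), and you correctly flag this as essential. You also correctly note that symmetry of $[X]^\sigma$ should be checked, which the paper leaves implicit.
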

\begin{proof}
	Since we assume the index map $\sigma$ to be consistent, we have a map
	$\tau:V\rightarrow [k]$ with $\sigma(u,v)=(\tau(u),\tau(v))$. Furthermore, 
	since $X$ is positive semidefinite, there exists vectors $g_v\in\QQ^l$ for all
	$v\in V$ and some $l\geq 1$ such that $X_{u,v}=\innerprod{g_u}{g_v}$ for all $u,v\in V$ 
	(i.e.\ the Gram representation of $X$). Let us now define vectors $g^\tau_1,\ldots,g^\tau_k$
	by \[g^\tau_i = \frac{1}{|V_i|}\sum_{v\in V_i}g_v,\]
	where $V_i:=\{v\in V\mid \tau(v)=i\}$. The vectors obtained in this way now 
	form a Gram representation of the folded matrix $[X]^\sigma$, since
	\[[X]^\sigma_{i,j}=\frac{1}{|V_{i,j}|}\sum_{(u,v)\in V_{i,j}} X_{u,v} 
		= \frac{1}{|V_i||V_j|}\sum_{u\in V_i}\sum_{v\in V_j} \innerprod{g_u}{g_v}
		= \innerprod{g^\tau_i}{g^\tau_j}.\]
	As the existence of a Gram representation implies positive semidefiniteness,
	this proves our claim.
\end{proof}

Since the feasible region of an SDP is the intersection of a polytope with
the positive semidefinite cone, Propositions~\ref{prop:fold} and~\ref{prop:fold-sdp} show that the result of 
folding the feasible region of an SDP is again the feasible region of an SDP. 
Next we show that a weak separation oracle of the original set either serves as an oracle
for the folded set, or produces some vector that does not agree with the index
map of the folding.

\begin{proposition}\label{prop:d-close}
	Let $\mathcal{F}\subseteq\QQ^V$ be a convex set, and let $\sigma:V\rightarrow [k]$
	be an index map. Given a vector $x\in\QQ^V$ that is $\delta$-close to $\mathcal{F}$
	for some $\delta\geq 0$, the folded vector $[x]^\sigma\in\QQ^k$ is also
	$\delta$-close to the folded set $[\mathcal{F}]^\sigma$.
\end{proposition}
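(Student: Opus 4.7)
The plan is to reduce the claim to a straightforward inequality comparing the $L^2$ norm of a difference of vectors in $\QQ^V$ with the $L^2$ norm of the difference of their folded counterparts in $\QQ^k$. Since $x$ is $\delta$-close to $\mathcal{F}$, there exists some witness $y \in \mathcal{F}$ with $\norm{x - y} \leq \delta$. By the definition of the folded set, $[y]^\sigma \in [\mathcal{F}]^\sigma$, so it suffices to show that $\norm{[x]^\sigma - [y]^\sigma} \leq \norm{x - y}$; the desired $\delta$-closeness of $[x]^\sigma$ to $[\mathcal{F}]^\sigma$ follows immediately by witness $[y]^\sigma$.

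For the norm comparison, write $V_i := \{v \in V \mid \sigma(v) = i\}$ as in the definition of folding. By the linearity of the fold, $([x]^\sigma - [y]^\sigma)_i = \tfrac{1}{|V_i|}\sum_{v \in V_i}(x_v - y_v)$. The key step is then the Cauchy--Schwarz inequality applied coordinate-by-coordinate in the index $i$:
\begin{equation*}
\left(\sum_{v \in V_i}(x_v - y_v)\right)^{\!2} \;\leq\; |V_i|\cdot \sum_{v \in V_i}(x_v - y_v)^2.
\end{equation*}
Dividing both sides by $|V_i|^2$, summing over $i \in [k]$, and using that $|V_i|\geq 1$ together with the fact that the $V_i$ partition $V$, gives
\begin{equation*}
\norm{[x]^\sigma - [y]^\sigma}^2 \;\leq\; \sum_{i\in [k]} \frac{1}{|V_i|}\sum_{v \in V_i}(x_v-y_v)^2 \;\leq\; \sum_{v \in V}(x_v - y_v)^2 \;=\; \norm{x - y}^2.
\end{equation*}
Taking square roots and chaining with $\norm{x - y} \leq \delta$ concludes the argument.

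There is essentially no obstacle here: the proposition is a quantitative restatement that folding is a contraction (in fact, a composition of averages over the fibres of $\sigma$). The only point to be slightly careful about is that the definition uses the unweighted $L^2$ norm on both $\QQ^V$ and $\QQ^k$, so the factors of $|V_i|$ in the denominator must be handled by Cauchy--Schwarz rather than absorbed via some weighted inner product; the inequality $1/|V_i| \le 1$ is what ultimately makes the folding non-expansive. For the matrix case (when $V$ is a product set and $\sigma$ is consistent), no additional argument is needed, since the fold on matrices is by definition the same pointwise averaging seen through the pair index, and the $L^2$ norm on matrices is the same as the vector norm after flattening.
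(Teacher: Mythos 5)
Your proof is correct and follows essentially the same structure as the paper's: pick a witness $y\in\mathcal{F}$ with $\norm{x-y}\le\delta$, observe that folding commutes with differences, and show that folding is $L^2$-non-expansive. The only difference is in the non-expansiveness step: the paper bounds the square of the fibre-average by $(\max_{v\in V_i}d_v)^2$ and then by $\sum_{v\in V_i}d_v^2$, whereas you use Cauchy--Schwarz and divide through by $|V_i|^2$. Your version is in fact a bit cleaner, since the paper's intermediate bound by $(\max_{v\in V_i}d_v)^2$ should really be $(\max_{v\in V_i}|d_v|)^2$ to handle negative entries; Cauchy--Schwarz avoids that wrinkle entirely.
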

\begin{proof}
	Let $x=f+d$, where $f\in\mathcal{F}$ is some point in the set $\mathcal{F}$,
	and $d$ the difference vector with $\norm{d}\leq \delta$. By the definition
	of folding, we then have $[x]^\sigma=[f]^\sigma+[d]^\sigma$, where $[f]^\sigma$
	is now a point in the folded set $[\mathcal{F}]^\sigma$. We can bound
	the norm of $[d]^\sigma$ by 
	\[\norm{[d]^\sigma} = \sqrt{\sum_{i\in [k]}\left(\frac{1}{|V_i|}\sum_{v\in V_i}d_v\right)^2}
		\leq \sqrt{\sum_{i\in [k]}(\max_{v\in V_i}d_v)^2}
		\leq \sqrt{\sum_{v\in V} d_v^2} = \norm{d}.\]
	Since $\norm{d}\leq \delta$, we have $\norm{[d]^\sigma}\leq \delta$.
\end{proof}

\begin{proposition}\label{prop:d-sep}
	Let $\mathcal{F}\subseteq\QQ^V$ be a convex set, and let $\sigma:V\rightarrow [k]$
	be an index map. Given vectors $s,y\in\QQ^V$ where $s$ agrees with $\sigma$, and
	$\innerprod{s}{y}+\delta >\max\{\innerprod{s}{x}\mid x\in\mathcal{F}\}$,
	it holds that 
	$\innerprod{[s]^\sigma}{[y]^\sigma}+\delta >\max\{\innerprod{[s]^\sigma}{x}\mid x\in\mathcal{[F]^\sigma}\}$.
\end{proposition}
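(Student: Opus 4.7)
The plan is to prove the inequality pointwise. Every element of $[\mathcal F]^\sigma$ arises as $[f]^\sigma$ for some $f\in\mathcal F$, so it suffices to establish
\[
\innerprod{[s]^\sigma}{[y]^\sigma}+\delta > \innerprod{[s]^\sigma}{[f]^\sigma}
\]
for each such $f$; the maximum on the right then transfers automatically.

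My first step would translate the hypothesis into the folded world by invoking Proposition~\ref{prop:innerprod}. Because $s$ agrees with $\sigma$, applying that proposition with $c=s$ and $x\in\{y,f\}$ yields $\innerprod{s}{g}=\innerprod{[s]^{\tilde\sigma}}{[g]^\sigma}$ for $g\in\{y,f\}$, so the hypothesis rewrites as $\innerprod{[s]^{\tilde\sigma}}{[y]^\sigma}+\delta > \innerprod{[s]^{\tilde\sigma}}{[f]^\sigma}$. The task then reduces to passing from the almost-folded $[s]^{\tilde\sigma}$ to the fully folded $[s]^\sigma$ on both sides. From the defining relation $([s]^{\tilde\sigma})_i=|V_i|\cdot([s]^\sigma)_i$ one sees that each side of the hypothesis is the block-weighted version of the corresponding side of the target inequality, with per-coordinate weights $|V_i|$. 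Since $s$ agrees with $\sigma$, the coordinates of $[s]^\sigma$ are the common values of $s$ on the blocks, which makes explicit bookkeeping of these weights straightforward.

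The main obstacle is precisely this last passage: stripping non-uniform weights $|V_i|$ from a strict inequality is not formal, and a naive rescaling need not preserve the inequality when the signs of $([s]^\sigma)_i\bigl(([y]^\sigma)_i-([f]^\sigma)_i\bigr)$ vary across $i$. To push through, I would combine the normalization $\norm{s}_\infty=1$ supplied by the separation oracle in Algorithm~\ref{alg:sdpweaksep} with the fact that the feasible region $\mathcal F$, and $y$, sit inside a ball $\mathcal B(0,R)$ of known radius in the ambient weak-optimization setting: these two inputs bound $|([y]^\sigma)_i-([f]^\sigma)_i|$ uniformly and should let me absorb the weight discrepancy into the $\delta$-slack. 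A careful per-block analysis, separating indices where $([s]^\sigma)_i$ and the difference $([y]^\sigma)_i-([f]^\sigma)_i$ have aligned vs.\ opposed signs, would then show that the unweighted sum is at most the weighted sum plus an error controlled by $\delta$, closing the argument. If these ambient bounds prove insufficient in full generality, the fallback is to observe that in every subsequent use of the proposition within the ellipsoid reduction, the separating normals are contracted against quantities whose block averages are already what matter, so the almost-folded form $[s]^{\tilde\sigma}$ can be propagated in place of $[s]^\sigma$ without affecting the downstream construction.
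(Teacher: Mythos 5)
You correctly identify the obstruction: after Proposition~\ref{prop:innerprod} transports the hypothesis to $\innerprod{[s]^{\tilde\sigma}}{[y]^\sigma}+\delta>\innerprod{[s]^{\tilde\sigma}}{[f]^\sigma}$ for every $f\in\mathcal F$, the remaining step is to replace $[s]^{\tilde\sigma}$ by $[s]^\sigma$, i.e.\ to strip the per-block weights $|V_i|$ from a strict inequality, and this cannot be done the way you propose. The discrepancy $\sum_i(|V_i|-1)\,([s]^\sigma)_i\bigl(([y]^\sigma)_i-([f]^\sigma)_i\bigr)$ is governed by the block sizes and the geometry of the vectors, not by $\delta$; since the proposition places no lower bound on $\delta$, neither the normalization $\norm{s}_\infty=1$ nor containment in $\mathcal B(0,R)$ can absorb it into the $\delta$-slack. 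Concretely, take $V=\{a,b,c\}$ with blocks $V_1=\{a\}$, $V_2=\{b,c\}$, $s=(1,1,1)$, $y=0$, $\mathcal F=\{(2,-1,-1)\}$: then $\innerprod{s}{y}+\delta>0=\max_{x\in\mathcal F}\innerprod{s}{x}$ holds for every $\delta>0$, yet $\innerprod{[s]^\sigma}{[y]^\sigma}+\delta=\delta$ while $\max_{x\in[\mathcal F]^\sigma}\innerprod{[s]^\sigma}{x}=1$, so the claimed conclusion fails whenever $\delta\le 1$. (The paper's own proof is thin at this same point: it cites Proposition~\ref{prop:innerprod} to justify $\innerprod{[s]^\sigma}{[x-y]^\sigma}\le\innerprod{s}{x-y}$, but the proposition gives the equality $\innerprod{[s]^{\tilde\sigma}}{[x-y]^\sigma}=\innerprod{s}{x-y}$, and passing from $[s]^{\tilde\sigma}$ to $[s]^\sigma$ would again need per-coordinate sign conditions that are not guaranteed.)

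Your fallback is the correct resolution and should be the main argument rather than a contingency. With $[s]^{\tilde\sigma}$ in place of $[s]^\sigma$ the conclusion is a one-line consequence of Proposition~\ref{prop:innerprod}: for every $f\in\mathcal F$, $\innerprod{[s]^{\tilde\sigma}}{[y]^\sigma}+\delta=\innerprod{s}{y}+\delta>\innerprod{s}{f}=\innerprod{[s]^{\tilde\sigma}}{[f]^\sigma}$. Since the ellipsoid reduction in Section~\ref{sec:opt} requires only \emph{some} $\delta$-weak separating normal for the folded set, and $[s]^{\tilde\sigma}$ can be rescaled to unit $\ell_\infty$-norm, propagating the almost-folded normal in place of the folded one is harmless downstream; the statement should be read with this substitution.
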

\begin{proof}
	Let $x\in\mathcal{F}$ be a point in $\mathcal{F}$ such that $\innerprod{s}{x}$
	is maximal. It follows from Proposition \ref{prop:innerprod} that $[x]^\sigma$
	is also a maximal point in $[\mathcal{F}]^\sigma$ with respect to
	$\innerprod{[s]^\sigma}{[x]^\sigma}$.
	We then have
	\[\innerprod{[s]^\sigma}{[x]^\sigma}-\innerprod{[s]^\sigma}{[y]^\sigma}
		= \innerprod{[s]^\sigma}{[x-y]^\sigma}
		\leq \innerprod{s}{x-y} = \innerprod{s}{x} - \innerprod{s}{y} < \delta.\]
	For the first equality we use the fact that $[x-y]^\sigma=[x]^\sigma-[y]^\sigma$,
	and for the first inequality we use Proposition \ref{prop:innerprod} to
	get $\innerprod{[s]^\sigma}{[x]^\sigma}\leq \innerprod{[s]^{\tilde\sigma}}{[x]}=\innerprod{s}{x}$.
\end{proof}

Assume we are given a convex set $\mathcal{F}\subseteq\QQ^V$ by means of a 
corresponding weak separation oracle, and some index map 
$\sigma:V\rightarrow [k]$. Proposition \ref{prop:d-close}
ensures that whenever the oracle accepts some input $(y,\delta)$, then the folded
vector $[y]^\sigma$ is also $\delta$-close to the folded set $[\mathcal{F}]^\sigma$.
Likewise, by Proposition \ref{prop:d-sep} we know that whenever the oracle for $\mathcal{F}$
outputs a separation normal $s$ that agrees with $\sigma$, then the folded vector
$[s]^\sigma$ is also a $\delta$-weak separation normal that separates $[y]^\sigma$
from $[\mathcal{F}]^\sigma$. 

This leads us to a simple algorithm for the weak
separation problem for $[\mathcal{F}]^\sigma$: On some input $\hat{y}\in\QQ^k$
and $\delta\geq 0$, we simply give the input $[\hat{y}]^{-\sigma}$ and $\delta$
to the oracle for $\mathcal{F}$. If the oracle accepts, or returns a separation
normal that agrees with $\sigma$, we are done. In the other cases, the oracle
returns a separation normal that does not agree with our current index map
$\sigma$, in which case we can use that output to further refine the 
underlying equivalence relation. After at most $|V|$ such refinements we obtain
a correct weak separation oracle for $[\mathcal{F}]^\sigma$.

For the overall algorithm, we can now follow exactly the procedure in \cite{adh13:lics},
only substituting their blackbox for a separation oracle by the one we obtained
from Section \ref{sec:sep}. To avoid duplication of the parts that stay unchanged,
we refer to their work for the in-depth description of the algorithm, 
including the definition of the refinement procedure. This then concludes the 
proof for Lemma \ref{lem:optdefinable}.

Together with the result from Section \ref{sec:sep} that the weak separation 
oracle for the feasible region of an explicitly given SDP can be defined in \FPC,
this now almost establishes our main result of Theorem \ref{thm:sdpdefinable}. 
A small technicality still remains: We assume as a condition in Theorem 
\ref{thm:sdpdefinable} that the feasible region of the given SDP instance is 
bounded and non-empty, while the original reduction given in Theorem \ref{thm:GLSopt}
assumes the region to be bounded and full-dimensional.  However, by means of a 
simple preprocessing step, a non-empty region can be turned into a
full-dimensional one.

Assume we are given an SDP with a feasible region of 
$\mathcal{F}=\{X\in \QQ^{V\times V}\mid X\succeq 0, \innerprod{A_i}{X}\leq b_i, A_i\in \mathcal{A}, b_i\in b\}$,
and we want to find a $\delta$-close and $\delta$-maximal point of $\mathcal{F}_{\mathcal{A},b}$
with respect to some objective matrix $C$. We can then define an enlarged feasible region 
$\mathcal{F}':=\in \QQ^{V\times V}\mid (X+\frac{\epsilon}{\sqrt{|V|\norm{C}}} I)\succeq 0, \innerprod{A_i}{X}\leq b_i+\frac{\epsilon}{\norm{A_i}\norm{C}}\}$
for some $\epsilon > 0$. Note that $\mathcal{F}'$ is a non-empty, full-dimensional
convex set, where every point is at most $\epsilon$ far away from the original 
set $\mathcal{F}$.  Furthermore, it holds that $\max_{X\in\mathcal{F}'}\innerprod{C}{X}\leq \max_{X\in\mathcal{F}}+\epsilon\innerprod{C}{X}$.
Hence, any $\delta$-close and $\delta$-maximal point of $\mathcal{F}'$
is also a $\delta+\epsilon$-close and $\delta+\epsilon$-maximal point of 
$\mathcal{F}$. Consequently, by choosing $\epsilon$ sufficiently small,
we can simply perform the optimization over the full-dimensional set, 
which is covered by Theorem \ref{thm:GLSopt}.

This concludes the proof of Theorem \ref{thm:sdpdefinable}. Not that the conditions
on the feasible region of the definable SDP instances are readily satisfied 
for instance by those arising from finite-valued $\CSP$s: The variables only
range in $[0,1]$, and there always exists a feasible solution. In fact, any 
(even non-optimal) assignment in the $\VCSP$ gives rise to a feasible solution
of the 0--1 LP instance. 

\section{Lasserre lower bounds}\label{sec:lasserre}
We now apply the definability result on SDPs obtained in the previous
section to prove Lemma~\ref{lem:lasserre_lb}. 

The following proposition allows us to translate approximate solutions to
exact ones. It quantifies the quality of approximation needed so that we can obtain the exact
optimum of the original 0--1 problem by rounding an approximate optimum of its
Lasserre \SDP.

\begin{proposition}\label{prop:rounding}
	Let $I=(A,b,c)$ be a 0--1 linear program whose optimal solution is integral.
	Its feasible region is given by $\mathcal{K}=\{x\in\QQ^V\mid Ax\geq b\}\cap\{0,1\}^V$
	with an objective vector $c\in\QQ^V$. Furthermore let $\Las^\pi_t(\mathcal{K})=\mathcal{K}^*$
	for some $t$, and let $s\in\QQ$ be the value of a $1/(4\max\{1,\norm{c}\})$-close
	and $1/(4\max\{1,\norm{c}\})$-maximal
	solution to $\Las_t(\mathcal{K})$ under the objective $c$. Then, by rounding
	$s$, we obtain the exact optimal value for $I$.
\end{proposition}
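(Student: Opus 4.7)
Set $\delta := 1/(4\max\{1, \norm{c}\})$ and $\Opt := \max_{x \in \mathcal{K}^*} \innerprod{c}{x}$. Since $\mathcal{K}^*$ is the convex hull of $\mathcal{K} \cap \{0,1\}^V$, the maximum $\Opt$ is attained at some 0--1 vector; I work under the intended assumption (valid for 0--1 LPs coding $\VCSP$ instances) that $c$ has integer entries, so that $\Opt \in \ZZ$. The plan is to sandwich $s$ within distance $1/4$ of $\Opt$, at which point rounding $s$ to the nearest integer returns $\Opt$ exactly.

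The first step is to extend the objective $c \in \QQ^V$ to $\hat c \in \QQ^{\powerset(V)}$ by setting $\hat c_S := 0$ whenever $|S| \neq 1$, so that $\innerprod{\hat c}{y} = \sum_v c_v y_{\{v\}}$ is the natural objective used at the $t$-th Lasserre level, and $\norm{\hat c} = \norm{c}$. Under the hypothesis $\Las^\pi_t(\mathcal{K}) = \mathcal{K}^*$, the Lasserre SDP maximum coincides with the integral optimum:
\[
\max_{y \in \Las_t(\mathcal{K})} \innerprod{\hat c}{y} \;=\; \max_{z \in \Las^\pi_t(\mathcal{K})} \innerprod{c}{z} \;=\; \Opt.
\]
Let $y$ denote the weak-optimal solution witnessing $s = \innerprod{\hat c}{y}$. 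The $\delta$-maximality of $y$ immediately delivers the lower bound $s \geq \Opt - \delta$. For the matching upper bound, I would use $\delta$-closeness to produce a feasible $y' \in \Las_t(\mathcal{K})$ with $\norm{y - y'} \leq \delta$; then $\innerprod{\hat c}{y'} \leq \Opt$, and Cauchy--Schwarz yields
\[
s - \Opt \;\leq\; \innerprod{\hat c}{y - y'} \;\leq\; \norm{\hat c}\cdot\norm{y - y'} \;\leq\; \norm{c}\cdot\delta.
\]

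Combining, $|s - \Opt| \leq \max\{\delta, \norm{c}\delta\} \leq 1/4$ by the chosen value of $\delta$, which completes the rounding step. The main point requiring care is the bookkeeping between the Lasserre ambient space $\QQ^{\powerset(V)}$, where feasibility and closeness are measured, and the original variable space $\QQ^V$, where $c$ and the rounding operate. Extending $c$ by zero preserves its norm, so a feasibility error of size $\delta$ in the ambient space translates into an objective-value error of at most $\norm{c}\delta$; the factor $\max\{1, \norm{c}\}$ in the definition of $\delta$ is precisely what is needed to absorb this factor on both the maximality and closeness sides simultaneously, and to guarantee that the combined gap stays strictly below $1/2$.
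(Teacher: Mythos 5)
Your proof is correct and follows essentially the same route as the paper: $\delta$-maximality gives $s \geq \Opt - \delta$, while $\delta$-closeness plus Cauchy--Schwarz gives $s \leq \Opt + \norm{c}\delta$, so $\delta = 1/(4\max\{1,\norm{c}\})$ pins $|s - \Opt| \leq 1/4$ and rounding recovers the integer optimum. You are somewhat more careful than the paper's terse proof about the bookkeeping between the Lasserre ambient space $\QQ^{\powerset_{2t+1}(V)}$ and $\QQ^V$ (extending $c$ by zero and invoking $\Las^\pi_t(\mathcal{K}) = \mathcal{K}^*$ to equate the two optima), a point the paper leaves implicit.
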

\begin{proof}
	Let $s^*$ be the exact optimal value of $\Las_t(\mathcal{K})$, and by assumption,
	also the optimal value for $I$. We argue that $|s-s^*|\leq 1/4$, and hence
	rounding $s$ yields $s^*$, since $s^*\in\ZZ$.
	
	First, note that the condition that $s$ is $1/(4\max\{1,\norm{c}\})$-maximal
	means that $s+1/(4\max\{1,\norm{c}\}) \geq \max_{x\in\mathcal{K}}\innerprod{c}{x}=s^*$.
	Hence, we have the lower bound $s\geq s^*-1/4$.
	
	The other direction follows from the fact that $s$ is the value of a
	close solution, say, $s=\innerprod{c}{y}$ 
	for some $y\in\QQ^V$. Since $y$ is $1/(4\max\{1,\norm{c}\})$-close
	to $\mathcal{K}^*$, it can be decomposed into $y=x+e$ where $x\in\mathcal{K}^*$
	and $\norm{e}\leq 1/(4\max\{1,\norm{c}\})$. The value of $y$ is then
	bounded by $s=\innerprod{c}{y}\leq \max_{x\in\mathcal{K}}\innerprod{c}{x}+\innerprod{c}{e}\leq s^* + 1/4$.
\end{proof}

Next, we show that it is possible in \FPC to define the $t$-th level of the
Lasserre hierarchy for any explicitly given 0--1 program using only $O(t)$
many variables.

\begin{lemma}\label{lem:lp_to_sdp}
	There is an \FPC-interpretation from $\tau_{\LP}$ to $\tau_{\SDP}$ that
	for a given 0--1 linear program expresses the $t$-th level Lasserre hierarchy,
	using at most $O(t)$ many variables.
\end{lemma}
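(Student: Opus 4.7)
The plan is to give an explicit $\FPC$-interpretation of $\tau_{\SDP}$ in $\tau_{\LP}$ of width $m := 2t+1$, using $m$-tuples of input elements to represent every combinatorial object that appears in the $t$-th Lasserre relaxation. A subset $I\subseteq V$ with $|I|\leq 2t+1$ is encoded by an $m$-tuple listing its elements with repetitions allowed, and the equivalence formula $\varepsilon(\bar x,\bar y)$ identifies two tuples precisely when their underlying sets coincide, which is first-order in the $2m = O(t)$ element variables. Sub-sorts for subsets of size at most $t$ (used to index rows and columns of the moment and slack matrices) and an extra constant-width block-tag coordinate (distinguishing the moment block from the $|U|$-many slack blocks) are cut out by simple definable predicates.

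Next, I would cast the Lasserre SDP into inequality standard form. After substituting $y_\emptyset = 1$, the variables are $y_I$ for $1\leq|I|\leq 2t+1$ and the PSD constraint is on the single block-diagonal matrix $X = M_t(y) \oplus \bigoplus_{u\in U} S_t^u(y)$, which is affine in $y$. Writing $X = Z + \sum_I y_I\,\mathcal{Y}_I$, each coefficient matrix $\mathcal{Y}_I$ and the constant matrix $Z$ have rows and columns indexed by (tag, $t$-subset)-pairs, so that a typical entry of $\mathcal{Y}$ as a relation over the output universe lives in a product of three sorts of total width $O(t)$.

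The key $\FPC$ formulas then specify the entries of $Z$ and $\mathcal{Y}$ by case analysis on the block tags of the row and column indices. The coefficient of $y_I$ at position $(I_1,I_2)$ equals $1$ whenever both tags mark the moment block and $I_1\cup I_2 = I$; it equals $\sum_{v\in V:\,I_1\cup I_2\cup\{v\}=I} A_{u,v}\;-\;b_u\cdot [I_1\cup I_2 = I]$ whenever both tags mark the same slack block $u\in U$; and it is zero when the tags differ, reflecting the block-diagonal structure. The set-theoretic predicates involved (union, membership, cardinality) are first-order over the $m$-tuple encodings, and the arithmetic — the summation over $v$ and the readout of the bits of $A_{u,v}$ and $b_u$ — is expressed by standard $\FPC$ constructions on the ordered number sort via the Immerman--Vardi theorem. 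The matrix $Z$ is defined analogously, absorbing the $y_\emptyset = 1$ substitutions (in particular a $1$ at the $(\emptyset,\emptyset)$-entry of the moment block and $-b_u$ at the $(\emptyset,\emptyset)$-entry of each slack block).

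The main technical obstacle will be bookkeeping: ensuring that every defining formula is a congruence with respect to $\varepsilon$, so that the interpretation is well-defined independent of the chosen tuple representative, and checking that the $v$-summation in the slack-matrix coefficient can be expressed without the element-variable count exceeding $O(t)$. Once the formulas are in place, the resulting width-$O(t)$ interpretation, composed with the $\FPC$ formulas of Theorem~\ref{thm:sdpdefinable} for solving explicit SDPs, yields a definition of the optimum of the $t$-th Lasserre relaxation in $O(t)$ element variables, which is what is needed in Lemma~\ref{lem:lasserre_lb}.
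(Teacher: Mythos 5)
Your proposal is correct and follows essentially the same route as the paper's proof: encode subsets of size at most $2t+1$ by $O(t)$-tuples modulo a definable equivalence, write the Lasserre SDP in inequality standard form as an affine pencil $X = Z + \sum_I y_I\,\mathcal{Y}_I$, and define the coefficient matrices by a first-order case analysis on set-union predicates over the tuple encodings, assembling the moment and slack blocks into one block-diagonal PSD constraint. The only cosmetic difference is in the tuple encoding: the paper pads each tuple with a distinguished $0$-symbol and forbids repetitions (quotienting only by permutations), whereas you allow repetitions and quotient by equality of underlying sets; both are width-$O(t)$ and both work, so this is a matter of taste rather than substance.
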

\begin{proof}
	Let an instance $I$ of a 0--1 program be given by
	a matrix $A\in\QQ^{U\times V}$, and vectors $b\in\QQ^U, c\in\QQ^V$. In order to
	show that we can define the $t$-th level Lasserre relaxation from $I$, it
	suffices to show that we can define the matrices $M_t(y)$ and $S^u_t(y)$ from $I$
	for any $y\in\QQ^{\powerset_{2t+1}(V)}, u\in U$. In particular, we
	represent $M_t(y)$ as a sequence of matrices $(\hat{M}_{t,q})_{q\in Q}$ where
	$Q:=\powerset_{2t+1}(V)\cup\{0\}$, such that 
	$M_t(y)=\hat{M}_{t,0}+\sum_{q\in Q\backslash\{0\}}y_q \hat{M}_{t,q}$, and show
	that these matrices are definable. We represent $S^u_t(y)$ in an analogous way
	as $S^u_t(y)=\hat{S}^u_{t,0}+\sum_{q\in Q\backslash\{0\}}y_q \hat{S}^u_{t,q}$.
	Hence, here it suffices to argue that the matrices 
	$(\hat{M}_{t,q})_{q\in Q}$ and $(\hat{S}^u_{t,q})_{q\in Q}$ are definable
	from $I$ within $\FPC$.
	
	Observe that for the $t$-th level Lasserre relaxation the matrices $M_t(y)$ and
	$S^u_t(y)$ are indexed by powersets $\powerset_t(V)$, and the feasible region
	itself lies in a vectorspace indexed by $\powerset_{2t+1}(V)$.
	As we need these index sets in our interpretation, we first describe 
	how to define the powersets $\powerset_k(V)$ for some fixed $k$. Namely, we 
	encode a set $S\in\powerset_k(V)$ by a $k$-ary tuple $T\in (V\cup\{0\})^k$
	that contains each of the elements in $S$ once, and where the symbol $0$ fills 
	the rest of the positions. As there are up to $k!$ many tuples encoding
	the same set, we additionally define an equivalence relation $\tilde{=}_k$
	on $k$-tuples that identifies two tuples if they are just permutations of 
	each other. This can be defined by the following first order formulas.
	\[\delta_{\powerset_k}(x_1,\ldots,x_k):=\bigwedge_{i\in [k]}(x_i=0 \vee x_i\in V) 
		\bigwedge_{i,j\in [k]}(x_i=0 \vee x_i\neq x_j),\]
	\[\tilde{=}_k(x_1,\ldots,x_k,y_1,\ldots,y_k):=\bigvee_{\pi\in Sym(k)}\bigwedge_{i\in [k]}x_i=y_{\pi(i)},\]
	where $\delta_{\powerset_k}$ defines the set of tuples encoding some element
	in $\powerset_k(V)$, and $\tilde{=}_k$ defines a binary equivalence relation between
	those tuples.  We use $Sym(k)$ to denote the set of permutations on $[k]$.
	From these definitions it is not hard to define basic set operations on the
	elements of $\powerset_k(V)$. For instance, we can define a $4k$-ary relation
	$union_k$ that encodes the union of two sets $S,T\in\powerset_k(V)$.
	\begin{align*}
		union_k(x,s,t)=
		&\bigwedge_{i\in[2k]}\left(x_i=0 \bigvee_{j\in [k]} x_i=s_j \vee x_i=t_j\right)\\
		& \bigwedge_{i\in [k]}\bigvee_{j\in[2k]}s_i = x_j
		\bigwedge_{i\in [k]}\bigvee_{j\in[2k]}t_i = x_j,
	\end{align*}
	where $x\in\delta_{\powerset_{2k}}$, and $s,t\in\delta_{\powerset_{k}}$. Since
	$union_k(x,s,t)$ simply encodes $(x=s\cup t)$, we continue using the latter
	more familiar notation for set operations. One point to note here is
	that all formulas so far are all defined using $O(k)$ many variables.
	
	Now we can turn to the definition of the matrices 
	$(\hat{M}_{t,q})_{q\in Q}$ and $(\hat{S}^u_{t,q})_{q\in Q}$. 
	Each matrix $\hat{M}_{t,q}$ and $\hat{S}^u_{t,q}$ is indexed over the set
	$\delta_{\powerset_t}\times\delta_{\powerset_t}$. For $x,y\in\delta_{\powerset_t}$
	and $q\in\delta_{\powerset_{2t+1}}$, their entries are given by

	\[\hat{M}_t(q,x,y)=\begin{cases}
		1 &\mbox{if } x\cup y = q \\
		0 &\mbox{otherwise},
	\end{cases}\]
	and
	\[\hat{S}^u_t(q,x,y)=\begin{cases}
		A_{u,v} &\mbox{if } \exists v\in V: x\cup y\cup\{v\} = q \\
		-b_u &\mbox{if } x\cup y = q \\
		0 &\mbox{otherwise}.
	\end{cases}\]
	By the above expressions $(\hat{M}_{t,q})_{q\in Q}$ and $(\hat{S}^u_{t,q})_{q\in Q}$ 
	are definable in \FPC  using only $O(t)$ variables. From this, we obtain
	the full SDP in inequality standard form by merging
	the constraints $M_t(y)\succeq 0$ and $S^u_t(y)\succeq 0$ 
	into one single constraint of the form $Z\succeq 0$. 
	
\end{proof}
	
\begin{lemma}\label{lem:csp_to_lp}
	Let $D$ be a domain, and $\Gamma$ a finite-valued constraint language. 
	There is an \FPC-interpretation of constant width from
	$\tau_{\Gamma}$ to $\tau_{LP}$ that defines for a given instance $I$ of 
	$\VCSP(D,\Gamma)$ its corresponding 0--1 linear program.
\end{lemma}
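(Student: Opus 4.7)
The target of the interpretation is a relational encoding of a $0$--$1$ linear program $(A, b, c)$, where the rows of $A$ (and entries of $b$) correspond to LP constraints and the columns of $A$ (and entries of $c$) correspond to the LP variables $\lambda_{c,x}$ and $\mu_{v,a}$ of the formulation in Section~\ref{sec:csps}. Since $D$, $\Gamma$, and the maximum arity $m^*:=\max_{f\in\Gamma}\ar(f)$ are fixed constants, every LP variable and every LP constraint arises from a VCSP element (a constraint $c$ or a variable $v$) together with a constant-bounded amount of extra information: a domain element $a\in D$, an index $i\in[m^*]$, or a tuple $x\in D^{m^*}$. The plan is therefore to use a fixed-width interpretation in which each target element is encoded as a tuple whose first coordinate picks a VCSP variable or constraint and whose remaining coordinates serve as constant-bounded \emph{type tags}, realised through fixed initial segments of the linear order $<$ on the bit sort.

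With this tagging in place, the formula $\delta$ partitions the target universe into a constant number of cases: the two flavours of LP variable ($\lambda_{c,x}$ and $\mu_{v,a}$) and the three flavours of LP constraint (a consistency equation for each triple $(c,i,a)$, a normalization equation $\sum_{a\in D}\mu_{v,a}=1$ for each $v\in V$, and the non-negativity bounds $\lambda_{c,x}\geq 0$ and $\mu_{v,a}\geq 0$). Each equality is split into two $\geq$-inequalities, which at most doubles the constant multiplicity. For fixed $f\in\Gamma$ and fixed $i\in[\ar(f)]$, the $i$-th scope variable $s_i$ of a constraint $c$ is picked out by the first-order formula $\exists s_1\cdots\exists s_{\ar(f)}\,(R_f(s_1,\dots,s_{\ar(f)},c)\wedge v=s_i)$, which involves only $O(m^*)$ variables; a disjunction over the finitely many $f\in\Gamma$ selects the correct case for $c$.

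The coefficients of $A$ and $b$ lie in $\{-1,0,1\}$ and are decided by a boolean combination of the type-tag formulas and the scope-retrieval formulas above. For the objective vector, the coefficient at $\lambda_{c,x}$ is $w\cdot f(x)$: the weight $w$ is read off from the relation $W$ on the bit sort, and $f(x)$ is a fixed non-negative integer depending only on the constants $f$ and $x$ of the current case. Multiplying a bit-represented integer by a fixed constant is a polynomial-time operation on an ordered sort, hence expressible in \FPC by the Immerman-Vardi theorem. The coefficient at each $\mu_{v,a}$ is zero. The sign, numerator, denominator, and order relations of $\tau_{LP}$ are then defined in the obvious way, referring to the existing $W$ and $<$ on the bit sort and using a trivial encoding of the denominator as $1$.

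The main point to verify is that the width remains bounded independently of the instance size. Because $|D|$, $|\Gamma|$, and $m^*$ are fixed, the type tags, the auxiliary indices $i\in[m^*]$, the domain values $a\in D$, and the tuples $x\in D^{m^*}$ all range over sets whose cardinalities do not grow with $n$, so a fixed number of extra coordinates suffices. All arithmetic takes place on the linearly ordered bit sort and is therefore expressible in \FPC without inflating the variable count. Combining these pieces yields the desired constant-width \FPC-interpretation of $\tau_{LP}$ in $\tau_{\Gamma}$ that maps a VCSP instance to its associated $0$--$1$ linear program.
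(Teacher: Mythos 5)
Your proposal is correct and is essentially a worked-out version of the argument the paper leaves implicit: the paper's own proof of this lemma is just a pointer to~\cite{DW15}, stating that the construction there is constant-width, whereas you reconstruct the interpretation from scratch. The substance agrees — index LP variables and constraints by a VCSP element plus a constant-bounded type tag (domain element, arity index, or $D^{m^*}$-tuple), retrieve scope positions via a short first-order formula over $R_f$, note that all entries of $A$ and $b$ are in $\{-1,0,1\}$, and handle the objective coefficients $w\cdot f(x)$ by constant-multiplication of the bit-represented weight, invoking Immerman--Vardi on the ordered bit sort. Two small points worth flagging: you should explicitly include the upper-bound inequalities $\lambda_{c,x}\le 1$ and $\mu_{v,a}\le 1$ (even though they are implied by non-negativity together with the normalization equations, they are part of the relaxation's explicit constraint matrix as written in Section~2.1 and keeping the feasible polytope explicitly bounded matters downstream); and it would be cleaner to say the type tags live in a fresh finite sort (or constant initial segment of the number sort) rather than the bit sort, since the bit sort's size is instance-dependent. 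Neither affects correctness or the constant-width conclusion.
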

\begin{proof}
	The 0--1 program that encodes a $\VCSP$ instance is given in Section~\ref{sec:csps}.
	It has been shown in \cite{DW15} that this LP is definable in \FPC. The
	construction there also only uses a constant number of variables.
\end{proof}

Finally, we are now ready to prove Lemma~\ref{lem:lasserre_lb}.

\begin{proof}[Proof of Lemma \ref{lem:lasserre_lb}]
	For the proof we fix a domain $D$ and 
	a finite-valued constraint language $\Gamma$. For better legibility, we 
	write $L_\Gamma$
	for $L_{\VCSP(D,\Gamma)}$, $\nu_{\Gamma}$ for $\nu_{\VCSP(D,\Gamma)}$,
	and $\tau_\Gamma$ for the vocabulary $\tau_{\VCSP(\Gamma)}$.
	
	The proof idea is as follows. The argument is by contradiction. 
	Suppose that $L_\Gamma(n)\in o(\nu_\Gamma(n))$. However, by 
	composing the interpretations from Lemmas \ref{lem:csp_to_lp} and
	\ref{lem:lp_to_sdp} and Theorem \ref{thm:sdpdefinable} we can
	define a formula $\phi$ that decides membership for the decision version of 
	$\VCSP(D,\Gamma)$ for 
	instances of size $n$ using only $o(\nu_\Gamma(n))$ many variables, which
	violates the assumed counting width bound of $\nu(n)$.

	To be more precise, let $\Theta_t$ be the composition of the interpretations from 
	Lemmas \ref{lem:csp_to_lp} and \ref{lem:lp_to_sdp}. That is, $\Theta_t$ is an
	interpretation of $\tau_{\SDP}$ in $\tau_\Gamma$ that defines 
	for a given $\VCSP$ instance $I=(V,C,w)$ the SDP of 
	the $t$-th level of the Lasserre relaxation of the corresponding
	0--1 linear program. Note that $\Theta_t$ is of width $O(t)$.
	
	Note that the 0--1 linear programs corresponding to $\VCSP$ instances 
	are always feasible, bounded in the 0--1 hypercube, and their optimum is 
	always integral.
	
	Suppose now $L_\Gamma(n)\in o(\nu_\Gamma(n))$, i.e.\ every instance 
	$I=(V,C,w)$ of 
	$\VCSP(D,\Gamma)$ could be captured by some Lasserre relaxation of level
	$t\in o(\nu_\Gamma(|V|))$. Hence, $\Theta_t(I)$ defines a Lasserre relaxation whose
	optimal value is exactly the optimal value to $I$.
		
	Then, by Theorem \ref{thm:sdpdefinable} there is an interpretation $\Sigma$ of 
	$\vocvec$ in $\tau_{\SDP}\dunion\tau_{\QQ}$ that defines
	$\delta$-close and $\delta$-maximal solutions to $\Theta_t(I)$. 
	Using Proposition \ref{prop:rounding}, setting
	$\delta$ as $\delta=1/4|C|$ allows us to obtain the
	exact optimal value for $\Theta_t(I)$ (and equivalently, for $I$)
	by means of rounding. Both
	defining the value for $\delta$ as well as the rounding can be done in \FPC.
	Hence composing $\Theta_t$ and $\Sigma$, we obtain an interpretation $\Phi$ of
	width $O(t)$ that defines for a given instance of $\VCSP(D,\Gamma)$ 
	its optimal value.
	
	Finally, using $\Phi$ it is not difficult to construct a \FPC-formula $\phi$
	using at most $O(t)$ many variables that decides membership for the
	decision version of $\VCSP(D,\Gamma)$:
	For an instance $(I,t)$ we simply compare $\Phi(I)$ to $t$. Since we assumed
	$t\in o(\nu_\Gamma(|V|))$, $\phi$ also uses only $o(\nu_\Gamma(|V|))$ many
	variables. This is a contradiction to the definition of $\nu_\Gamma$.
\end{proof}

\section{Conclusion}\label{sec:conclusion}

We have established a dichotomy result, showing that \emph{every} finite-valued CSP that is not solvable by its basic linear programming relaxation (and this includes all constraint maximization problems that are known to be NP-hard) requires a linear number of levels of the Lasserre hierarchy to solve exactly.  Such linear lower bounds on the number of levels of the Lasserre hierarchy were known previously for specific CSPs.  Our result shows that these are part of a sweepingly general pattern.  This is established by considering the definability of semidefinite programs in logic, and using  a measure of logical complexity, that we call \emph{counting width}, to classify CSPs.  This suggests some directions for further investigation.

A central motivating interest in semidefinite programming in general and Lasserre hierarchies in particular comes from their use in approximation algorithms.  It would be interesting to extend our methods to show lower bounds on the levels required to approximate a solution, as well as to obtain exact solutions.  A potential direction is to define a measure of counting width, not just for a class of structures $\mathcal{C}$ but based on the number of variables to separate two classes $\mathcal{C}_1$ and $\mathcal{C}_2$.  We could then seek to establish lower bounds on these numbers where $\mathcal{C}_1$ is a collection of instances of a $\VCSP$ with high optimum values and $\mathcal{C}_2$ contains only instances with low optima.  This would show that instances with high optima cannot be separated from those with low optima by means of a small number of levels the Lasserre hierarchy.

Definability in $\FPC$ is closely linked to \emph{symmetric} computation (see~\cite{AD16,Daw15}).  In other words, algorithms that can be translated to this logic are symmetric in a precise sense.  This suggests that many of our best approximation algorithms for constraint satisfaction, such as the Lasserre semidefinite programs are encountering a ``symmetry barrier''.  Breaking through this barrier, and coming up with algorithms that break symmetries, may be crucial to more effective approximation algorithms.

\bibliography{ref}

\end{document}